\DeclareMathOperator{\ub}{ub}
\newcommand{\ubr}{r_{\ub}}
\title{The Randomized Communication Complexity of Randomized Auctions}
\author{Aviad Rubinstein\thanks{Supported by NSF CCF-1954927, and a David and Lucile Packard Fellowship.}\\Stanford University\\\texttt{aviad@cs.stanford.edu} \and Junyao Zhao\thanks{Supported by NSF CCF-1954927.}\\ Stanford University\\\texttt{junyaoz@stanford.edu}}
\def \polylog {\textnormal{polylog}}
\def \C {\mathcal{C}}
\def \D {\mathcal{D}}
\def \F {\mathcal{F}}
\def \cA {\mathcal{A}}
\def \cB {\mathcal{B}}
\def \cM {\mathcal{M}}
\def \cR {\mathcal{R}}
\def \cT {\mathcal{T}}
\def \P {\textnormal{Pr}}
\def \R {\mathbb{R}}
\def \X {\mathcal{X}}
\def \eg {{e.g.}}
\DeclareMathOperator*{\Beta}{Beta}
\newtheorem{theorem}{Theorem}[section]
\newtheorem{thm}[theorem]{Theorem}
\newtheorem{lemma}[theorem]{Lemma}
\newtheorem{corollary}[theorem]{Corollary}
\newtheorem{prop}[theorem]{Proposition}
\newtheorem{claim}[theorem]{Claim}
\newtheorem{example}[theorem]{Example}
\theoremstyle{definition}
\newtheorem{definition}[theorem]{Definition}
\newtheorem{defn}[theorem]{Definition}
\newtheorem{remark}[theorem]{Remark}
\newtheorem{rem}[theorem]{Remark}
\newtheorem{OQ}[theorem]{Open Question}
\newtheorem*{thm*}{Theorem}
\newcommand{\aviad}[1]{{\color{purple} [[AVIAD: #1]]}}
\newcommand{\junyao}[1]{{\color{green} [[Junyao: #1]]}}
\newcommand{\aviad}[1]{}
\newcommand{\junyao}[1]{}
\begin{document}

\maketitle

\begin{abstract}
We study the communication complexity of incentive compatible auction-protocols between a monopolist seller and a single buyer with a combinatorial valuation function over $n$ items. 
Motivated by the fact that revenue-optimal auctions are randomized~\cite{Thanassoulis04,ManelliV10,BriestCKW10,Pavlov11,HartR15} (as well as by an open problem of Babaioff, Gonczarowski, and Nisan~\cite{BabaioffGN17}), we focus on the {\em randomized} communication complexity of this problem (in contrast to most prior work on deterministic communication).

We design simple, incentive compatible, and revenue-optimal auction-protocols whose expected communication complexity is much (in fact infinitely) more efficient than their deterministic counterparts. 

We also give nearly matching lower bounds on the expected communication complexity of approximately-revenue-optimal auctions. These results follow from a simple characterization of incentive compatible auction-protocols that allows us to prove lower bounds against randomized auction-protocols.  
In particular, our lower bounds give the first approximation-resistant, exponential separation between communication complexity of {\em incentivizing} vs {\em implementing} a Bayesian incentive compatible social choice rule, settling an open question of Fadel and Segal~\cite{FadelS09}.
\end{abstract}

\setcounter{page}{0}
\thispagestyle{empty}
\newpage
\section{Introduction}

The central goal of Algorithmic Mechanism Design is to design mechanisms
that guarantee good outcomes while taking into account both (i) the
selfish agents' incentives and (ii) the ever-increasing complexity
of modern applications. A fundamental question to this field is whether
simultaneously satisfying both the incentive and simplicity constraints
is harder than satisfying each of them separately. 

In this paper we focus on one of the simplest
and most studied settings in the field: a monopolist, Bayesian, revenue-maximizing
seller auctioning $n$ items to a single risk-neutral buyer.
An active line of work over the past two decades argues that even in this strategically-simple setting, 
and even for buyers with additive or unit-demand valuations%
\footnote{To circumvent some worst-case pathological examples, 
it is common in Algorithmic Mechanism Design to restrict the buyer's value distribution to {\em independent} (vs {\em correlated}) items, and/or restrict the combinatorial nature of buyer's value for bundles to one of the following classes:
$$ \text{\em additive, unit-demand} \subset \text{\em gross-substitutes} \subset \text{\em submodular} \subset \text{\em XOS} \subset \text{\em subadditive}.
$$
(See Section~\ref{sub:valuations} for definitions.)},
optimal mechanisms are inherently complex, e.g.~they involve randomized
lotteries~\cite{Thanassoulis04,ManelliV10,BriestCKW10,Pavlov11,HartR15}
and are often computationally intractable~\cite{ChenDOPSY15,DaskalakisDT14,ChenDPSY14}.

One particularly influential measure of complexity of mechanisms is
the {\em menu-size complexity} of \cite{HartN13}: by the taxation
principle, a general incentive compatible mechanism can be canonically
represented as a {\em menu}, where each {\em line} or option in
the menu corresponds to a (possibly randomized) allocation and a payment.
The menu-size complexity of a mechanism is then the number of lines
in the corresponding menu. Perhaps the single most convincing evidence
for the complexity of optimal mechanisms is an example due to \cite{DaskalakisDT17},
where the optimal mechanism for an additive buyer with two i.i.d.~item
valuations from a seemingly benign distribution ($\Beta(1,2)$) requires
an infinite and even {\em uncountable} menu-size complexity. We
henceforth refer to this powerful example as the DDT example. 

\cite{DaskalakisDT17} and related complexity results for optimal
revenue-maximizing auctions have inspired fruitful lines of work that
circumvent these barriers, e.g.~by designing sub-optimal but simple
mechanisms that approximate the optimal revenue (see discussion in
Related work).

It is not a-priori clear, however, that the menu-size complexity by
itself is an obstacle to using optimal mechanisms. For instance, the
seller in the DDT example could in principle succinctly describe her%
\footnote{Throughout the paper, we use feminine pronouns for the seller and masculine for the buyer.}
mechanism as ``the-optimal-auction-for-$\Beta(1,2)\times\Beta(1,2)$''
and even point the buyer to an explicit description in \cite{DaskalakisDT17}.
However, \cite{BabaioffGN17} recently observed that, once the mechanism
is announced, the deterministic communication complexity to implement
it is equal (up to rounding) to the logarithm of the menu-size complexity.
In the DDT example, for the buyer to deterministically specify his
favorite line in the uncountable menu, he would need to send an infinite
stream of bits. \cite{BabaioffGN17} left open the question of randomized
communication complexity of optimal mechanisms. Indeed randomized communication
is a natural complexity measure in this case since we already consider
randomized allocations\footnote{Different applications have different simplicity desiderata. (E.g.
highly regulated FCC auctions {\em vs} very fast ad auctions with
automated bidders {\em vs} smart contracts that require costly documentation
of transaction details on a blockchain.) Ultimately, there is no universal
``right'' measures of complexity, and studying a variety gives us
a more complete understanding.}.

In this paper, inspired by \cite{BabaioffGN17}'s open question, we
formulate a notion\footnote{Technically, our definition of IC auction protocol is a special case
of {\em Bayesian incentive compatibility (BIC)-incentivizable  binary dynamic mechanism (BDM)}
\cite{FadelS09}. We discuss this connection further in Related work.} of an {\em incentive compatible (IC) auction protocol}, which is
a two-party (possibly randomized) interactive communication protocol
between a seller and a buyer with an allocation and payment associated
with every transcript of the protocol. Before presenting our results
in this model, below we briefly discuss our modeling assumptions;
a full definition appears in Section \ref{sec:Model}.

\subsection*{Brief discussion of modeling assumptions}

Per the discussion above, we assume that the protocol and auction
format are public information. The buyer privately knows his true
type (or valuations of items/bundles). 

We mostly focus on the total expected communication complexity of
the protocol. For our protocols, we bound the interim expectation, i.e.~for every
buyer's type, the communication complexity of the protocol is bounded,
in expectation\footnote{In expectation vs high probability: We remark that by Markov's inequality
in expectation {\em upper} bounds on the communication complexity
imply similar upper bounds w.h.p.; e.g.~if the expected complexity
is at most $C$, then it is at most $C/\alpha$ w.p. $\geq1-\alpha$.} over the protocol's randomness\footnote{In fact, all our protocols happen to satisfy a slightly stronger desideratum
where all the communication complexity bounds that we prove also (approximately)
hold for the communication complexity of the future of any prefix
of the protocol. I.e.~for any setting where we bound the expected
communication complexity by $C$, it is also true that, conditioning
on any history of the protocol (possibly much longer than $C$), the
remaining expected communication complexity is $O(C)$. This means
that the buyer and seller {\em always} expect -for every run of
the protocol, and at any point during the execution- that the protocol
will end soon.}. Our lower bounds hold even for ex ante
expectation, i.e.~even if we allowed that some buyers may know in advance that
they are expected to participate in a prohibitively long protocol.

The seller in our model has {\em no private information} and is {\em not strategic}.
At the end of the communication protocol she must know the allocation
and payment. 

We model the buyer's strategic aspect as a complete information single-player
extensive-form game with buyer's nodes and nodes of Chance; each leaf
is associated with an allocation and a payment. In practice, nodes
of Chance could be implemented by a trusted seller (e.g.~when the
seller is an auditable firm), a trusted intermediary, a cryptographic
protocol for coin tossing\footnote{We're mostly interested in mechanisms that are {\em exactly} revenue-optimal,
while the security of cryptographic protocols always has a negligible
but non-zero chance of being broken even by naive brute-force algorithms.
In theory, this small chance of cheating on the coin tosses would
violate the buyer's exact incentive constraints.}, or a publicly observable, renewable\footnote{By ``renewable'' we mean that at each step of the protocol the parties
have access to fresh random bits not predictable in previous iterations;
for example, they could look at the weather each day.} external source of randomness. 

As is common in the aforementioned literature on randomized mechanisms,
we assume that the buyer is risk-neutral. In particular, we require
that the protocol is interim individually rational. In direct revelation
mechanisms, it is possible to transform interim to ex-post individual
rationality by correlating the payment with the randomized allocation.
Similarly, at the cost of a bounded increase in the communication
complexity, it is possible to transform our protocols to become ex-post
approximately individually rational (see Appendix \ref{sec:Approximately-ex-post-IR}
for details).

While we make little restrictions on buyer valuations, we do generally
assume that the buyer's valuation is capped at some arbitrarily large
value $U$. The complexity of our protocols\footnote{Except the ex-post approximately individually rational protocols in
Appendix \ref{sec:Approximately-ex-post-IR}. } does not depend on $U$, e.g.~$U$ can be all the money in the universe
(typically much smaller).

\subsection*{Our IC auction protocols}

We design IC auction protocols that are simple, surprisingly efficient,
and are {\em exactly} revenue-optimal. For instance, in Theorem \ref{thm:DDT}
we give a revenue-optimal IC auction protocol for the DDT example
where the {\em buyer sends less than two bits in expectation}. (In
contrast, for a deterministic auction selling two items separately,
merely specifying the allocation requires the buyer to send two bits!)

\paragraph{Main positive result}

Our main positive result is a generic transformation of an arbitrary (revenue-optimal
or otherwise) IC and IR mechanism for additive, unit-demand, or general
combinatorial valuations to an IC auction protocol that uses $O(n\log(n)),O(n\log(n)),O(2^{n}n)$
bits in expectation respectively. We note that our protocols work
for correlated prior distributions, and even for non-monotone and
negative valuations\footnote{We assume for simplicity that all payments are non-negative.}. 
\begin{thm*}
[See Theorems~\ref{thm:additive} and~\ref{thm:truthful-B}] \hfill

For any
prior $\mathcal{\ensuremath{D}}$ of Buyer's (additive/unit-demand/combinatorial)
valuations over $n$ items bounded by maximum valuation $U$, and
any IC mechanism $\mathcal{\ensuremath{M}}$, there is an IC auction
protocol with the same expected payment and allocation, using ($O(n\log n)$/$O(n\log n)$/$O(2^{n}n)$)
bits of communication in expectation.
\end{thm*}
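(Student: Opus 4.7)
My plan is to start from the taxation-principle representation of $\mathcal{M}$ as a (possibly uncountable) menu $\mathcal{L}$ of (allocation, payment) pairs: for each valuation $v$, the mechanism effectively returns the item $(x(v),p(v))\in\mathcal{L}$ that maximizes the buyer's utility. I would then build an interactive protocol---viewed as a tree with buyer decision nodes and public-randomness chance nodes, leaves labelled by (allocation, payment) pairs---whose set of effective (randomized) strategies available to the buyer implements $\mathcal{L}$ exactly in expectation. Incentive compatibility then follows automatically, since every type best-responds to its induced effective menu, which by construction coincides with $\mathcal{L}$.

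To make the communication small, I would exploit the fact that the buyer is risk-neutral in two ways. First, I would \emph{randomize the payment}: replace the deterministic $p(v)\in[0,nU]$ by a random variable with mean $p(v)$ taking only a handful of values, so that transmitting it costs $O(1)$ bits in expectation, independent of $U$. Second, I would \emph{randomize the allocation}: rather than describing $x(v)$ directly, jointly sample a single realized allocation (a subset for combinatorial, or a $0/1$ inclusion per item for additive/unit-demand) whose expectation equals $x(v)$. The naive transmission of such a sample costs $O(n)$ bits per item for additive/unit-demand and $O(2^n)$ for combinatorial; combined with the payment bit this gives the right order of magnitude for transmitting the outcome of a \emph{known} menu item.

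Turning this into an \emph{IC} protocol is the heart of the argument, because if the buyer is simply asked to self-report the sampled outcome he has every incentive to lie. I would instead structure the tree so that at each buyer node his feasible choices correspond to selecting an item from a small random sub-menu generated by the preceding chance nodes. His sequence of local choices then picks out a leaf whose expected outcome (averaged over chance nodes) equals $(x(v),p(v))$. The sub-menu distribution is designed so that each sub-menu is small (and hence short to navigate), yet the convex hull of effective strategies of a buyer who best-responds to the random sub-menu covers every item of $\mathcal{L}$. I expect the natural implementation to have $n$ (resp.\ $2^n$) phases, each costing $O(\log n)$ (resp.\ $O(n)$) expected bits, yielding the claimed $O(n\log n)$ and $O(2^n n)$ bounds.

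The main obstacle, in my view, is matching the mechanism's expected allocation and payment \emph{exactly}, not just approximately---the DDT example of \cite{DaskalakisDT17} shows that $\mathcal{L}$ can be uncountable and admits no sensible discretization, so any finite protocol tree must use public randomness to sample effectively from a continuum of discretizations whose expectation is the target. Constructing the distribution over random sub-menus so that its induced effective menu is \emph{exactly} $\mathcal{L}$, while simultaneously keeping the protocol shallow, is the key structural step I would need to get right; separately, the $U$-independence of the final bound forces the randomized-payment trick to be used in a way that does not leak any information about the magnitude of $p(v)$ into the length of the transcript.
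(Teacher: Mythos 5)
You have correctly identified the scaffolding the paper uses---randomize the payment so it is $\{0,U\}$-valued (making its transmission $O(1)$ expected bits, $U$-independent), identify each buyer type with a vector of allocation/payment probabilities, and arrange the protocol so the buyer never learns enough about the public randomness to deviate profitably---but you explicitly leave open the key structural step, and the ``random sub-menus'' sketch you offer in its place does not fill it. The paper does not sample sub-menus at all. Instead a single threshold $\tau\in[0,1]$ is drawn uniformly at random (implicitly, bit by bit, by the Chance nodes), and at each round the buyer sends the next binary digit of \emph{every} coordinate of his probability vector in parallel ($n{+}1$ probabilities for additive; $B{-}1$ interval endpoints plus the payment probability for the $B$-bundle combinatorial case). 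The protocol terminates as soon as $\tau$'s prefix has differed from each of the buyer's streamed prefixes in at least one bit; the allocation and payment are then decided by comparing each $\widehat{p}_i$ with $\tau$. This gives the geometric termination (each coordinate disagrees with $\tau$'s next bit with probability $1/2$ per round), hence $O(\log n)$ expected rounds of $n{+}1$ bits for additive, and $O(\log B)$ rounds of $B$ bits for combinatorial, yielding $O(n\log n)$ and $O(B\log B)=O(2^n n)$.

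You are also missing the IC argument, which is where the construction is really won. Because every node of Chance has exactly one non-leaf child, a buyer's strategy is completely determined by the infinite stream of bits he would emit in the zero-probability event the protocol never terminates; call its limit $\widehat{p}\in[0,1]^{n+1}$. Since $\tau$ is uniform on $[0,1]$, item $i$ is allocated (resp.\ payment $U$ is charged) with probability \emph{exactly} $\widehat{p}_i$, independently of when the protocol stops, so the buyer's set of reachable expected outcomes is precisely the feasible menu of $\mathcal{M}'$ and IC transfers directly. Your worry about the uncountable DDT menu is thereby handled automatically---the buyer can realize any $\widehat{p}$ by streaming its binary expansion, so there is no discretization to match. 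By contrast, your plan would require you to exhibit a distribution over small sub-menus whose induced effective menu is exactly a generally uncountable, non-polyhedral set, \emph{and} argue that the buyer never learns which sub-menu he faces in time to deviate; you acknowledge you have not solved this, and it is the whole content of the theorem.
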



\paragraph{Trading off revenue for even better communication efficiency}

We obtain an exponentially more efficient protocol for
the special case of unit-demand with {\em independent items}.
Specifically, at the cost of an $\varepsilon$-fraction loss in revenue,
we obtain an IC auction protocol that uses only $\polylog(n)$ communication. 
\begin{thm*}
[See Theorem~\ref{thm:symmetric}] Let $\mathcal{D}$ be a distribution
of independent unit-demand valuations over $n$ items bounded by maximum
valuation $U$. Then, for any constant $\varepsilon>0$, there is
a $(1-\varepsilon)$-approximately revenue-optimal IC auction protocol
using $\polylog(n)$ bits of communication in expectation. 
\end{thm*}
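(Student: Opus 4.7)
The plan is to exploit the simple structure of optimal (or near-optimal) mechanisms for independent unit-demand valuations and obtain a protocol where the buyer essentially just names his favorite item, incurring only $O(\log n)$ bits of communication. I proceed in three steps.

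First, I would reduce to an item-pricing mechanism: the seller announces prices $(p_1,\dots,p_n)$, and the buyer either selects a single item $i^*$ and pays $p_{i^*}$, or walks away. This reduction combines standard truncation/quantization of each item's prior distribution (losing at most $\varepsilon/2$ of the optimal revenue, and crucially independent of $U$ once one parameterizes via quantiles) with a $(1-\varepsilon/2)$-approximation-via-item-pricing result for a single unit-demand buyer with independent items. Such a result can be obtained by refining Chawla--Hartline--Malec--Sivan-style arguments, or by verifying directly that for an independent unit-demand buyer any optimal lottery menu line can be replaced, up to $\varepsilon$ loss, by a single-item offer---the intuition being that under independence a unit-demand buyer prefers to concentrate probability mass on his favorite item rather than randomize across items.

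Second, the communication protocol is essentially trivial given Step~1: the prices $(p_1,\dots,p_n)$ are part of the public mechanism description, so the buyer simply transmits his preferred option (one of $n+1$ alternatives) in $\lceil\log_2(n+1)\rceil=O(\log n)=\polylog(n)$ bits, after which the seller allocates $i^*$ at price $p_{i^*}$ (or nothing at $0$).

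Third, for incentive compatibility and revenue, the protocol transcripts correspond exactly to the lines of the item-pricing menu, so IC follows from the taxation principle (just as in the proof of Theorem~\ref{thm:additive}), and revenue is $(1-\varepsilon)$-optimal by the first step. The expected communication bound is automatic since every buyer type sends a single message of length at most $\lceil\log_2(n+1)\rceil$.

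The main obstacle is Step~1: establishing a $(1-\varepsilon)$-approximation via item-pricing for independent unit-demand, since the classical CHMS bound only yields a constant-factor approximation. If a pure item-pricing reduction proves unattainable at the $(1-\varepsilon)$ level, the fallback is to allow a slightly richer menu of size $k=\polylog(n)$ (e.g.\ a small collection of quantized lotteries, or a ``core'' set of effective menu lines obtained by pruning options whose selection probability is $o(\varepsilon/n)$) and have the buyer encode his selection in $O(\log k)=\polylog(n)$ bits, optionally tightened to the Shannon entropy of the selection distribution via arithmetic coding, in the spirit of the encoding argument that underlies Theorem~\ref{thm:truthful-B}.
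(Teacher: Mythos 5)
Your Step~1 is the crux, and it does not hold: deterministic item-pricing is \emph{not} a $(1-\varepsilon)$-approximation for a single unit-demand buyer with independent items. Lotteries strictly help in this setting (see e.g.\ the examples in Chawla--Malec--Sivan, ``The power of randomness in Bayesian optimal mechanism design''), and the known item-pricing guarantees (CHMS-style) are bounded away from $1$ by a constant. Indeed, if item-pricing were a PTAS here, the entire machinery of \cite{KothariMSSW19} --- which this paper invokes precisely because near-optimal mechanisms for independent unit-demand require randomized menus --- would be unnecessary, and the menu would have size $n$ rather than $n^{\polylog(n)}$.

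Your fallback also does not work as stated. You propose restricting to a menu of size $k=\polylog(n)$, but there is no reason a $\polylog(n)$-size menu suffices: the relevant upper bound from \cite{KothariMSSW19} is that the \emph{partition-symmetric} menu-size complexity is $n^{\polylog(n)}$, which is a very different object. A partition-symmetric menu line is an equivalence class of ordinary menu lines under permutations respecting a partition of the items, so the total number of ordinary menu lines can be (and is) super-polynomial. Choosing a symmetric line costs only $\log\bigl(n^{\polylog(n)}\bigr)=\polylog(n)$ bits, which is the part your Step~2 intuition captures correctly. What is missing --- and what the paper's proof of Theorem~\ref{thm:symmetric} actually spends its effort on --- is letting the buyer identify \emph{which representative within the symmetric class} he wants (i.e., which item in the chosen block $S_i$ gets which allocation probability) without communicating an entire permutation. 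The paper handles this with a second stage: Chance draws a block $S_i$ with probability equal to the total allocation mass on $S_i$, then runs a recursive binary search driven by a latent uniform $\tau\in[0,1]$ in which the buyer reports only the histogram of the $O(\log n)$ discretized probabilities on the current half of $S_i$, and the protocol descends into the half whose cumulative probability straddles $\tau$. IC of that stage is argued by the same ``infinite-path'' device as in Theorem~\ref{thm:additive}. None of that structure appears in your sketch, and the arithmetic-coding / pruning ideas you suggest instead do not by themselves compress a menu of super-polynomial size to $\polylog(n)$ bits in a way that preserves incentive compatibility.
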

Exhibiting the richness of our IC auction protocol model, this protocol
is substantially different from the generic transformation in our
main result, and builds on the recent {\em symmetric menu-size complexity}
of \cite{KothariMSSW19}.
\begin{rem}
For simplicity of presentation we focus on the expected communication
complexity. Here we briefly remark that our protocols also have desirable
properties in terms of round- and random-coin-complexities. For round
complexity, our protocols use $O(\log(n))$ rounds in expectation
($O(n)$ for general combinatorial valuations). For the protocols
in Theorems \ref{thm:additive} and \ref{thm:truthful-B}, it will
be easy to see how (using trivial batching) one can further compress
the number of rounds: at the cost of a constant factor increase in
the communication complexity, these protocols can be compressed to
$1+\varepsilon$ rounds in expectation. In terms of random coins,
our protocols can be implemented with $O(\log(n))$ coins in expectation
($O(n)$ for general combinatorial valuations). 
\end{rem}

\subsection*{Communication complexity lower bounds}

We show that beyond the (important) special case covered by Theorem~\ref{thm:symmetric}, the communication complexity of our protocols is almost the best possible, in the following strong sense:

\begin{thm*}[See Theorems~\ref{thm:lower_bound_unit_demand},~\ref{thm:lower_bound_gross_substitutes}, and~\ref{thm:lower_bound_XOS_ind}]
For revenue maximization with $n$ items, any incentive compatible auction protocol that achieves any constant factor approximation of the optimal revenue must use at least:
\begin{itemize}
    \item $\Omega(n)$ communication for unit-demand valuations;
    \item $2^{\Omega(n^{1/3})}$ communication for gross substitutes valuations;
    \item $2^{\Omega(n)}$ for XOS valuations.
\end{itemize}  
Furthermore, any incentive compatible auction  protocol obtaining more than $80\%$ of the optimal revenue must use at least:
\begin{itemize}
\item $2^{\Omega(n)}$ communication for XOS valuations over {\em independent items}.
\end{itemize}
\end{thm*}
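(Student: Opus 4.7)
My plan proceeds in three steps. First, I would establish a characterization lemma showing that any IC auction protocol with expected communication $C$ induces, for each buyer type $v$, a distribution $\mu_v$ over (allocation, payment) outcomes, and that these distributions are jointly consistent with a single ``virtual menu'' in the sense that $\mu_v$ is supported on utility-maximizing options against that menu. Moreover, for a random type $v \sim \D$, a Shannon source-coding argument applied to the buyer-node labels of the protocol transcripts implies that the entropy of the buyer's chosen outcome (together with any residual entropy in the Chance-node bits visible at the leaves) is $O(C)$. The upshot is that the ``effective menu complexity'' of a randomized IC protocol is controlled by $C$, replacing the naive ``$\leq 2^C$ leaves'' bound that applies only in the deterministic setting. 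This is the structural ingredient the abstract highlights as enabling randomized lower bounds, and it is what makes the argument approximation-resistant rather than exact.

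Second, for each of the four valuation classes I would supply a hard prior $\D$ for which every mechanism achieving the stated revenue approximation must induce a large effective menu, piggybacking on existing revenue-approximation and menu-size lower bounds. For unit-demand I would use a product-of-scaled-items construction where any constant approximation must pinpoint the buyer's favorite among $\Omega(n)$ near-equally-priced options with constant probability; this gives $\Omega(n)$ distinguishable outcomes and thus $\Omega(n)$ bits through the characterization. For gross substitutes I would plug into the characterization the hard instance behind the known $2^{\Omega(n^{1/3})}$ revenue-approximation barrier. For XOS with correlated items I would invoke the $2^{\Omega(n)}$ distinguishability lower bounds already known in the literature. Finally, for XOS with independent items and the $80\%$ threshold, I would use a product-distribution hardness construction in the spirit of the recent Rubinstein--Weinberg-style strong approximation barriers, producing exponentially many types each of which demands an essentially distinct favorite menu option to exceed $80\%$ of the optimal revenue.

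Third, via Yao's minimax principle the hard prior certifies a lower bound against every randomized IC protocol achieving the claimed approximation in the worst case. The main obstacle is the characterization lemma itself: translating a randomized, interactive IC protocol with Chance nodes and possibly deep trees into a clean ``menu plus selection entropy'' picture where Shannon's inequality can be applied to outcomes rather than to transcripts, without losing a factor that would degrade the final bound. The subtlety is that randomization might appear to let the protocol ``collapse'' near-indistinguishable outcomes to save communication; the lemma must formalize that any such collapse is either already absorbed by the virtual menu view or else costs the mechanism measurably in revenue approximation, which is precisely the mechanism by which deterministic menu-size lower bounds lift losslessly to the randomized communication setting.
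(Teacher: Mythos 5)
Your Step 1 is where the argument breaks, and the gap is fatal rather than technical. You propose a characterization lemma asserting that the Shannon entropy of the buyer's leaf outcome is $O(C)$ and that this controls an ``effective menu complexity,'' after which existing menu-size lower bounds can be imported in Step 2. But the paper's own positive results refute exactly this implication: Theorem~\ref{thm:additive} gives an IC auction protocol with $O(n\log n)$ expected communication that exactly implements the DDT mechanism, whose menu-size complexity is \emph{uncountable}. The entropy of the ex-post leaf outcome is indeed at most $O(C)$ by a prefix-coding argument, but the buyer's ``virtual menu'' is the set of achievable \emph{interim expected} outcomes (averaging over Chance nodes), and that set can be a continuum even when leaf-outcome entropy is tiny. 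So the step from ``entropy $\le O(C)$'' to ``effective menu complexity $\lesssim 2^{O(C)}$'' is false, and with it the plan to piggyback on known menu-size/revenue-approximation lower bounds. Relatedly, your plan is an instance of the modular approach (isolate a game-theoretic communication problem, then hit it with information-theoretic tools) that the paper explicitly rules out: Appendix~\ref{sec:non-IC} exhibits a low-communication \emph{non-IC} protocol solving the same allocation/payment problem, so no argument that ignores the incentive constraints can yield these lower bounds.

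The paper's actual route is quite different and avoids entropy entirely. It fixes a discretized protocol tree and shows (Claim~\ref{claim:one_protocol_per_prior}) that a single tree can produce good revenue for only $m=\omega(1)$ priors, by chaining IC and IR inequalities (Eqs.~\eqref{eq:dominant_strategy}--\eqref{eq:bound_q_t}) across an equal-revenue-style ladder of types built from the combinatorial design of Lemmas~\ref{lem:weak_design} and~\ref{lem:low_discrepancy}. It then counts: after rounding payments and Chance probabilities to precision $4^{-n}$ (justified by Claim~\ref{claim:bounded_protocol_trees}), there are only $2^{2^{k+o(n)}}$ trees of depth $k$, yet the design yields $2^{2^{\Omega(n)}}$ priors, forcing $k=\Omega(n)$. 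There is no Yao minimax step, no appeal to pre-existing menu-size barriers, and no entropy bound on outcomes; the halfspace structure of buyer nodes is what lets the counting argument bite, not an information-theoretic compression of the outcome distribution. Your Step 2 constructions are also not the paper's: for gross substitutes the paper builds new hard instances from the Balcan--Harvey matroid family (Lemma~\ref{lem:matroid_rank}), and for XOS it uses binary XOS clauses on the design sets, rather than invoking pre-existing approximation barriers.
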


To place the result for independent items in the greater context of Algorithmic Mechanism Design, 
contrast it with simple-but-approximately-optimal mechanism independent subadditive valuations:\
\cite{RubinsteinW18} showed that
a constant fraction of revenue can be guaranteed by simple mechanisms;
this constant has been improved in followup works~\cite{CaiDW16,ChawlaM16,CaiZ17}, 
but no non-trivial upper bound on the best approximation factor were
known\footnote{Note that this is a maximization problem, so {\em upper bound} on
the approximation factor refers to an impossibility result.}. Assuming that efficient randomized communication is a {\em necessary}
desideratum for ``simple mechanism'', our result for independent
items implies that the optimal approximation factor is bounded away
from 1 -- even for the special case of XOS valuations.

Note also that our upper and lower bounds for correlated valuations are nearly tight in the following ways:
\begin{itemize}
\item For unit-demand and combinatorial valuations, our upper and lower bounds nearly match (up to logarithmic factors), even though the lower bounds hold for {\em arbitrary (constant) approximation factor} vs {\em exactly} revenue-optimal in upper bounds. Furthermore the combinatorial upper bound holds for {\em arbitrary} combinatorial valuations, which are much more general than XOS valuations used in the lower bound.
\item The correlation in our unit-demand lower bound is necessary by Theorem~\ref{thm:symmetric}. 
\end{itemize}

We remark that for one interesting case an exponential gap remains:

\begin{OQ}
What is the randomized communication complexity of exactly revenue optimal IC auction protocols for unit demand valuations over independent items?
\end{OQ}

Our lower bound for unit-demand requires correlated items (and this is an inherent limitation of our technique). On the other hand, our protocol for unit-demand with independent items (Theorem~\ref{thm:symmetric}) does not guarantee exact revenue optimality.

\subsection*{Separating the complexity of implementing and incentivizing}


Our results also have implications for a question of Fadel and Segal~\cite{FadelS09}. They study, for any fixed social choice rule, the {\em communication cost of selfishness},
i.e. the difference in communication complexity between (i) implementing
it, and (ii) implementing it in a Bayesian incentive compatible protocol.
They give examples where the communication cost of selfishness is
exponential, but those examples are very brittle in the sense that
they rely on agents' utilities to have unbounded (or at least exponential)
precision. They ask whether the communication
cost of selfishness on any (possibly contrived) social choice rule
can be reduced substantially if agents' utilities have a bounded precision~\cite[Open Question 3]{FadelS09}.
Our source of hardness is inherently different from the instances
in~\cite{FadelS09}: we harness the combinatorial structure of the valuations
rather than exploiting the long representation of high-precision numbers.

In more detail, in our constructions the buyer's utility only requires constant
precision\footnote{We require constant precision marginal contribution per item. For
unit-demand, this translates to constant precision for any outcome.
For gross substitutes, etc.~this translates to $O(\log(n))$ bits
to represent outcome utilities, which is still negligible.} for any outcome (and the seller is not strategic, i.e. she has constant
utility zero). Furthermore, for our hard instances of unit-demand
 valuations, we show (Remark~\ref{remark:fadel_segal}) that the exactly
revenue-optimal IC mechanism can be implemented by a randomized (non-IC) protocol
using $O(\log(n))$ communication even in the worst
case, hence resolving~\cite{FadelS09}'s open question on the negative%
\footnote{Note that it was an open question to obtain such a separation for {\em any} social choice rule, let alone a natural and important one like revenue-maximizing auctions.}. 
We remark that by~\cite[Corollary 3]{FadelS09}, this exponential separation is tight.

\begin{corollary}[See Remark~\ref{remark:fadel_segal}]\label{cor:fadel_segal}
There exists a randomized protocol for a revenue maximization instance, in which the buyer's valuation has constant precision, such that there is an exponential separation between the communication complexity of its approximately Bayesian IC implementation and that of its non-IC implementation.
\end{corollary}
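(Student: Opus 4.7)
The plan is to combine the IC lower bound from Theorem~\ref{thm:lower_bound_unit_demand} with a trivial non-IC implementation tailored to the same hard instance. Concretely, fix the correlated unit-demand prior $\mathcal{D}$ used in Theorem~\ref{thm:lower_bound_unit_demand}. By that theorem, every interactive protocol that is approximately Bayesian incentive compatible and achieves any constant-factor approximation to the optimal revenue on $\mathcal{D}$ must exchange $\Omega(n)$ bits. Moreover, as emphasized in the excerpt, the buyer's valuation in this construction has constant precision per item, so the numerical precision loophole of \cite{FadelS09} does not apply.

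Next, I would design a non-IC randomized protocol implementing the \emph{same} social choice rule $f$ (i.e. the distribution over (allocation, payment) pairs induced by the revenue-optimal IC mechanism $\mathcal{M}^*$ on $\mathcal{D}$) using only $O(\log n)$ communication. To do this, I would inspect the construction of Theorem~\ref{thm:lower_bound_unit_demand} and record that, as is standard in communication-complexity reductions, the support of the prior $\mathcal{D}$ has size $\mathrm{poly}(n)$: the combinatorial obstruction comes from correlations inside a polynomially-sized family of types, not from a super-polynomial type space. Under this structural observation, each type $v$ in the support can be indexed by a string of $O(\log n)$ bits, which both parties know in advance. The non-IC protocol is then as follows: the buyer sends the $O(\log n)$-bit index of his type $v$, and the seller samples $(x,p) \sim \mathcal{M}^*(v)$ using her private randomness and announces the outcome.

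Correctness and complexity of the non-IC protocol are immediate: when executed honestly, the protocol's output distribution conditioned on any $v$ coincides with $\mathcal{M}^*(v)$, so it implements $f$ exactly, and the total communication is $O(\log n)$ in the worst case. Combining with the $\Omega(n)$ IC lower bound of Theorem~\ref{thm:lower_bound_unit_demand} yields the claimed exponential separation. The main obstacle, and the only non-routine step, is verifying the structural claim that the hard instance in Theorem~\ref{thm:lower_bound_unit_demand} is supported on $\mathrm{poly}(n)$ types; if the underlying reduction instead places the buyer's valuation in a large but low-entropy family, one would have to replace the naive "send the index" step by sending a short statistic sufficient to sample from $\mathcal{M}^*(v)$, but the conclusion would be unchanged.
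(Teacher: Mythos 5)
There is a genuine gap, and it sits exactly where you flagged your own uncertainty. The support of the hard prior $\D_c$ from Theorem~\ref{thm:lower_bound_unit_demand} is \emph{not} $\mathrm{poly}(n)$: $\D_c$ is the uniform distribution over $\{v_c^x : x\in\X_{n,\varepsilon_1,\delta_1}\}$, and by Lemma~\ref{lem:weak_design} we have $|\X_{n,\varepsilon_1,\delta_1}| = 2^{\Omega(n)}$. So the buyer's type has $\Omega(n)$ bits of entropy (the prior is uniform over an exponentially large set, hence also not ``low entropy''), and the naive ``send the $O(\log n)$-bit index of $v$'' step costs $\Omega(n)$ bits, which is no better than what the IC lower bound already rules out. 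Your fallback --- ``send a short statistic sufficient to sample from $\cM^*(v)$'' --- is unsubstantiated as written, and the phrasing (a statistic from which the \emph{seller} could sample) still runs into the problem that the seller cannot reconstruct $x$, or even $c(x)$, from a short message without knowing $x$.

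What actually makes the paper's non-IC protocol (Remark~\ref{remark:fadel_segal}) work is a different and essential idea: push the sampling of the allocation onto the buyer. The optimal IC mechanism for the hard instance allocates a uniformly random item of $x$ and charges $c(x)$; while $x$ itself needs $\Omega(n)$ bits, a \emph{sample} $i \sim \mathrm{Unif}(x)$ needs only $\lceil\log n\rceil$ bits, and $c(x)$ lives in a constant-size set $\{\varepsilon_2^{\ell-1},\dots,1\}$ (this is exactly why constant precision matters). So the buyer privately samples $i$, sends $(i, c(x))$ in $O(\log n)$ bits, and the seller allocates item $i$ and charges $c(x)$. This reproduces $\cM^*(v)$'s output distribution exactly, but it is obviously not IC: the seller cannot verify $c(x)$ from $i$, so a strategic buyer would understate it. The separation is between (a) the $\Omega(n)$ bits needed to make the rule approximately BIC (Theorem~\ref{thm:lower_bound_unit_demand}) and (b) the $O(\log n)$ bits needed merely to implement the correct outcome distribution when the buyer is assumed honest. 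Your overall framing (lower bound + cheap non-IC implementation on the same instance) matches the paper, but the crucial construction of the cheap implementation is missing, and the structural assumption you lean on to get it for free is false.
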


\begin{remark}[Separations for deterministic vs randomized protocols] \hfill 

Formally,~\cite{FadelS09} phrase their open question for deterministic protocols. 
To view Corollary~\ref{cor:fadel_segal} in this context, note that in our model  the seller is not strategic; hence one can consider an equivalent deterministic social choice rule in a slightly different setting where the random seed (only $O(\log(n)$ bits are necessary) to the revenue-optimal auction is replaced by a seller's type.
The requirements from the protocol in this setting is only stricter, so the communication lower bound on IC auction protocols trivially extends. On the other hand, for the non-IC auction protocol the seller can just send the buyer her type (aka the random seed).
\end{remark}

Interestingly, this separation between the communication complexity
of implementing and incentivizing optimal auctions holds in a more
general sense (albeit for expected communication in randomized protocols):
In Appendix~\ref{sec:non-IC} we show a {\em non-IC} auction protocol\footnote{The non-IC auction protocol is closer to \cite{FadelS09}'s notion
of {\em implementing} (as opposed to incentivizing) a mechanism,
or to \cite{BabaioffGN17}'s definition of randomized communication
complexity of auctions.} that for
{\em any} buyer with unit-demand (resp. combinatorial) valuations,
the exactly optimal IC mechanism can be implemented by a randomized
(non-IC) protocol using $O(\log(n))$ (resp. $O(n)$) communication. 

\subsection*{Technical highlights: infinitely more efficient auction-protocols}

Abstracting away the game theory and other detail, we explain the
simple idea which is at the core of our main positive result (Theorems \ref{thm:additive}
and \ref{thm:truthful-B}). Simplifying further, consider a randomized
auction of just a single item: our goal is to compress the infinite
deterministic communication complexity of a protocol where the buyer
tells the seller exactly with what probability he expects to receive
the item. Denote this probability of allocation by $p$. Given $p$,
one way to allocate with probability $p$ using unbiased coin tosses
is to generate a uniformly random number $\tau\in[0,1]$ (whose binary
representation is a uniformly random stream of bits after the decimal
point), and to allocate the item iff $p>\tau$\footnote{For historical context, we remark that the setup up to this point
is similar to the 1-bit public-coin protocol for single-item auctions
in \cite{BabaioffGN17}.}. 

The key insight: for any fixed $p$, we don't actually need to know
$\tau$ to infinite precision - we only need to know the prefix of
$\tau$'s binary representation until the first bit on which it differs
from $p$. Similarly, for a fixed $\tau$, we only need to know $p$
to the same precision. So here is our core protocol: draw\footnote{Here and in all our protocols, $\tau$ can be drawn on the fly so
the expected number of random bits is also bounded.} $\tau\in\left[0,1\right]$ uniformly at random, and ask the buyer
to stream the binary representation of $p$ - only with enough precision
to determine whether $p>\tau$. Each time the buyer sends a bit from
the binary representation of $p$ it differs from the corresponding
bit of $\tau$ with probability $1/2$; i.e.~the protocol terminates
with probability $1/2$ after each round. Hence we reduced the infinite
deterministic protocol to one where the buyer only sends $2$ bits
in expectation.

What happens when we bring back incentives? It's not too hard to show
that the protocol remains incentive-compatible as long as the buyer
doesn't learn anything about $\tau$ until the end of the protocol.
This is actually too good to be true, since the protocol length must
depend on $\tau$ (otherwise it would be deterministic - and hence
infinite), and the buyer must know whether the protocol is continuing
in order to participate. Fortunately we can argue that if the only
thing the buyer learns about $\tau$ is that the protocol is continuing,
this information cannot help him cheat. Intuitively, he has already
committed to the prefix of the protocol, and the extension of his
strategy for the rest of the protocol is optimal conditioned on actually
being asked to use it.

\subsection*{Technical highlights: a characterization of randomized IC auction protocols}

It is natural to try to prove communication lower bounds of IC auction protocols via a modular approach
of: (i) use Game Theory to define a restricted communication problem
that we have to solve in order to obtain near-optimal revenue; and then
(ii) use standard techniques from Communication Complexity (e.g.~a
reduction from Set Disjointness). This approach has worked successfully
in other applications of communication complexity to game theory (e.g.~\cite{PapadimitriouSS08,Dobzinski16b,ImmorlicaLMST18,GoosR18}). 
However, our non-IC auction protocol in Appendix~~\ref{sec:non-IC} formally
precludes such a modular approach because there {\em is} an efficient
communication protocol that exactly solves the game theoretic problem
we are after. (In other words, the modular approach cannot separate the communication complexity of incentivizing and implementing a social choice rule.) Instead we need to simultaneously consider the complexity
and incentives constraints, in particular we need to consider the
joint evolution of the buyer's prior and incentives in an arbitrary
randomized protocol. 

Our main novel insight is the following simple characterization of incentive compatible communication protocols:
In a general communication protocol, each buyer's node can partition the buyer's
types in an arbitrary way. But for IC protocols, the buyer's next bit is fully determined by his respective value for the expected allocations conditioned on sending ``0'' or ``1''; this means that it can only partition the buyer's types into halfspaces in valuation space (see Figure~\ref{fig:protocol_tree}). Thus IC mechanisms are much less expressive. 

The second part of the proof combines tools from Auction Theory and
Error Correction Codes to construct, for each class of valuations,
a family of priors whose (approximately) optimal mechanisms are all
different. Finally, a simple counting argument shows that the total
number of short IC protocols that satisfy our characterization is
too small to cover all the different mechanisms. 

\subsection*{Related work}

For general social choice settings, \cite{FadelS09} define {\em binary
dynamic mechanism (BDM)}, which formalizes the notion of communication
protocol between multiple agents with outcomes and payments associated
with the protocol-tree leaves. They contrast the communication complexity
of {\em incentivizable} vs {\em implementable} BDMs. Our notion
of IC auction protocols is equivalent to requiring that the protocol
is incentivizable. 

One subtle difference between our model and \cite{FadelS09} is that
the latter define BDMs as deterministic, while we focus on randomized
protocols. In our context we can encode the seller's random number
source as her type\footnote{Formally they only define finite BDMs, but they also discuss the natural
infinite variant \cite[Appendix B.1]{FadelS09}.}. In this sense, our IC auction protocol is a special case of their
{\em Bayesian incentive compatibility (BIC)-incentivizable  BDM}.
But this view misses the distinction between trusting a Bayesian prior
about other agents valuations and behavior and merely trusting the
source of randomness.

Our paper resolves an open question from~\cite{FadelS09} of separating the communication complexity of incentivizing and implementing {\em Bayesian} incentive compatible social choice rules. 
Very recently,~\cite{DR21, RSTWZ21} resolved a different open question from the same paper about separating the communication complexity of incentivizing and implementing {\em ex-post} incentive compatible social choice rules. \cite{RSTWZ21} also separate the communication complexity of ex-post vs dominant strategy incentive compatibility.

Our paper exhibits a strong separation between the communication complexity
associated with direct revelation and general mechanisms. Related
separations have been shown before by \cite{ConitzerS04b} and \cite{Dobzinski16b}
for social-welfare maximization with two or more strategic buyers.
Specifically, \cite{ConitzerS04b} show an exponential gap between
the communication complexity of direct revelation versus interactive
mechanisms. \cite{Dobzinski16b} shows that in several important settings,
the ``taxation complexity'' of deterministic mechanisms is approximately
equivalent to the communication complexity, but exhibits an exponential
gap between the two for truthful-in-expectation mechanisms. In contrast,
we consider revenue maximization with a single strategic buyer and
as few as two items. Arguably, the separation for a single strategic
buyer in our settings is more surprising since he communicates with
a seller who doesn't receive exogenous private information. More generally,
communication complexity of (approximate) social welfare maximization
in auctions with multiple strategic buyers has been extensively studied
for combinatorial auctions \cite{NisanS06,Segal07,DobzinskiNO14,Dobzinski16a,Dobzinski16b,BravermanMW16,Assadi17,BravermanMW18,EzraFNTW19,AssadiKSW20}
and related settings \cite{BlumrosenNS07,PapadimitriouSS08,BabaioffBS13,BlumrosenF13,DobzinskiD13}. 

Our paper is inspired by a discussion in \cite{BabaioffGN17} about
the communication complexity of revenue-maximizing auctions. They
prove that in general the deterministic communication complexity is
equivalent (up to rounding) to the logarithm of the menu-size complexity.
They also define a measure of randomized communication complexity
of an auction, which is most closely related to \cite{FadelS09}'s
weaker notion of implementable protocols. They give a randomized protocol
for implementing {\em any}\footnote{Note that the (revenue-)optimal auction for a single item is already
deterministic and uses only 1 bit of communication.} incentive-compatible auction for selling a single item using 1 bit
of communication and (possibly infinitely many) public random coins.

Our protocols circumvent the intractability of exactly communicating payments
(to infinite precision) by replacing them with random payments while
preserving expectation. Related ideas have been used before in algorithmic
mechanism design, e.g.~by \cite{ArcherPTT03,BabaioffKS15}.

The study of communication complexity in economics has its roots in
classic works of \cite{Barnard38} and \cite{Hayek45}. Early mathematical
formulations of the question were given by \cite{Hurwicz60,MountR74,Reichelstein84}.
Outside of auctions, communication complexity has also been considered
in AGT in the context of voting rules \cite{ConitzerS05,ProcacciaR06a,CaragiannisP11,ServiceA12a}
equilibrium computation \cite{ConitzerS04,HartM10,RoghgardenW16,BabichenkoR17,GoosR18,GanorS18,GanorKP19}
fair division \cite{Segal10,BranzeiN19,PlautR19}, interdomain routing
\cite{LSZ11}, and stable matching \cite{GonczarowskiNOR19}.

Since the seminal \cite{HartN13}, menu-size complexity has been further
studied by \cite{DaskalakisDT17,BabaioffGN17,SaxenaSW18,Gonczarowski18,KothariMSSW19}.
For a buyer with additive valuations over independent items, \cite{BabaioffGN17}
prove\footnote{Theorem 1.2 of \cite{BabaioffGN17} states a slightly weaker bound
of $n^{O(n)}$; the stronger bound is suggested in Footnote 3 of their
paper.} an $\log^{O(n)}(n)$ upper bound on the menu-size complexity of approximately-optimal
mechanisms. In this special case, this translates to an upper bound
of $O(n\log\log(n))$ on the deterministic communication complexity
- slightly more efficient than our $O(n\log(n))$ upper bound on randomized
communication complexity\footnote{The results are incomparable: \cite{BabaioffGN17} uses deterministic
communication, whereas our protocol gives {\em exact}-revenue-optimality
and allows for {\em correlated} valuations. In particular, note
that in our setting $O(n\log(n))$ is tight up to $O(\log n)$ factor,
whereas for approximate revenue with independent valuations, the true
answer (even for deterministic communication) is conjectured to be
$O(\log(n))$ \cite[Footnote 4]{BabaioffGN17}.}. Our proof is arguably much simpler\footnote{The main technical hurdle for \cite{BabaioffGN17} is a reduction
to the case where the valuations are (almost) bounded by some large
number $H=\poly(n,\varepsilon)$ with only a negligible loss in revenue;
we simply assume that the valuations are bounded by $U$, but it can
be {\em arbitrarily large}. We remark that if we assume that the
optimal mechanism obtains finite revenue (as is assumed in \cite{BabaioffGN17};
see Footnote 6 of their arXiv version), then it is easy to argue that
for any $\varepsilon>0$, capping the valuations by a sufficiently
large $U(\varepsilon)$ preserves a $(1-\varepsilon$)-fraction of
the revenue (see Appendix \ref{sec:Approximately-optimal-revenue}
for details).}. \cite{Gonczarowski18} explores the dependence on $\varepsilon$
in the menu complexity of mechanisms with additive-$\varepsilon$-suboptimality
in revenue; his main result, combined with \cite{HartN13}, implies
a $\Theta(\log(1/\varepsilon))$ bound on the deterministic communication
complexity with two items. 

For a buyer with unit-demand valuations over independent items \cite{KothariMSSW19}
define a related notion of symmetric menu-size complexity which counts
the number of lines up to symmetries, and prove an $n^{\polylog(n)}$
upper bound on the symmetric menu-size complexity. We use a slightly
stronger notion of partition-symmetric menu-size complexity (see Definition
\ref{def:PSMSC}); the bound of \cite{KothariMSSW19} also holds for
this stronger definition. We use this result for our nearly-revenue-optimal
IC auction protocol. This provides further evidence that the relatively
new notion of (partition)-symmetric menu complexity is a natural complexity
measure for auctions.

Over the past decade, computational and menu-size complexity results
of optimal auctions have motivated the design of sub-optimal but simple
mechanisms that approximate the optimal revenue~\cite{ChawlaHK07, HartlineR09, ChawlaHMS10, KleinbergW12, LiY13, BabaioffILW14,ChawlaMS15, RubinsteinW18, Rubinstein16a,ChawlaM16,HartN17,CaiZ17,HartR17,ChengGMW18,RubinsteinW18,ChawlaTT19}
or require resource augmentation~\cite{RoughgardenTY12,EdenFFTW17b,LiuP18,FeldmanFR18,BeyhaghiW19}.
Our results suggest that, in some cases, even strong menu-size complexity
lower bounds do not preclude efficient optimal mechanisms. 

\section{Model and definitions}\label{sec:Model}

Our main notion in this paper is that of IC auction protocols:
\begin{defn}
[(IC) auction protocols]\label{def:auction-protocol}

An {\em auction protocol} consists of: 
\begin{itemize}
\item A (possibly infinite) binary tree whose internal nodes are labeled
either B (for Buyer) or C (for Chance).
\item Each node of Chance has an associated probability distribution over
its children.
\item Each leaf node has an associated (non-negative) payment and (feasible)
allocation.
\item A suggested mapping from Buyer's types to Buyer's strategies, where
a {\em Buyer's strategy} corresponds to a choice of child for each
Buyer's node.
\end{itemize}
We say that an auction protocol is {\em finite} if it is guaranteed
to terminate after a finite number of rounds with probability 1 for
every Buyer's strategy. We say that an auction protocol is {\em individually
rational (IR)} if the Buyer has a strategy that guarantees expected
payment 0 and empty allocation. We say that an auction protocol is
{\em IC} (in-expectation) if it is finite and IR, and if the Buyer
weakly prefers the suggested Buyer's strategy corresponding to his
type over any other strategy in the protocol. 

The {\em expected communication complexity}, of an auction protocol
is the expected depth of the leaf reached by a worst-case Buyer's
strategy (and in expectation over nodes of Chance). Theorem \ref{thm:DDT}
refers to the {\em expected Buyer's communication}, which only counts
the number of Buyer's nodes on the path to the leaf. 
\end{defn}

Note that the buyer's strategy can be assumed wlog to be deterministic.


\subsection{Valuation classes}\label{sub:valuations}

As is standard in the Algorithmic Mechanism Design literature, we consider buyers whose value for a bundle can be restricted to one of the following classes:

\begin{definition}[Valuation classes] \hfill

A valuation function $v:2^{[n]}\to\R_{\ge0}$ may be restricted to one of the following classes:
\begin{description}
\item[Additive] If it can be written as $v(S) = \sum_{i \in S} v_i$ for some item values $v_i$'s.
\item[Unit-demand] If it can be written as $v(S) = \max_{i \in S} v_i$ for some item values $v_i$'s.
\item[Matroid-rank] If, for some matroid $M$ and item values $v_i$, it can be written as $$v(S) = \max_{\text{$T$ is independent in $M$}} \sum_{i \in S \cap T} v_i.$$
\item[XOS]\footnote{XOS valuations are sometimes also called {\em fractionally subadditive}.}
If item value-vectors $\mathbf{v}_i$ of dimension $d$, it can be written as $$v(S) = \max_{j \in \{1,\dots,d\}} \sum_{i \in S} v_{i,j}.$$
\end{description} 
\end{definition}

The aforementioned classes are related to other well-studied classes like gross-substitutes, submodular, and subadditive in the following hierarchy:

$$\text{additive, unit-demand} \subset \text{matroid rank}\subset\text{gross substitutes}\subset\text{submodular}\subset\text{XOS} \subset \text{subadditive}.$$

The formal definition of gross substitutes, submodular, and subadditive is not important for our purposes; they are economically significant because they capture different natural notions of substitutability between items (see e.g.~\cite{LLN06}).

In general, we are interested in any prior distribution over valuations of any of above-mentioned types.
In particular, we also consider the notion of combinatorial valuations over independent items, which has been recently used by e.g.~\cite{RubinsteinW18,CaiDW16,ChawlaM16,CaiZ17}.
\begin{definition}[independent items~\cite{Sch03}]
A prior distribution $\D$ of valuations has a latent structure of independent items if there is a latent product distribution $\D_1\times\D_2\dots\times\D_n$ with arbitrary support such that, a sample valuation $v$ from $\D$ can be generated by first sampling $a_i$ from $\D_i$ for all $i\in[n]$, and then for every $S\in[n]$, the value of $v(S)$ is uniquely determined by $\{a_i\mid i\in S\}$.
\end{definition}

\subsection{menu-size complexity}
\begin{defn}
[Menu-size complexity]

By the taxation principle, any mechanism can be canonically described
by the expected allocation and payment for each type. This description
induces a {\em menu}, or collection of {\em menu lines}, where each
menu line is the expected allocation and payment for some type. The
{\em menu-size complexity} of a mechanism is the number of distinct
menu lines.
\end{defn}

\section{IC auction protocols for an additive buyer}
\begin{thm}
\label{thm:additive}For any prior $\mathcal{\ensuremath{D}}$ of
Buyer's additive valuations over $n$ items bounded by maximum valuation
$U$, and any truthful mechanism $\mathcal{\ensuremath{M}}$, there
is an IC auction protocol with the same expected payment and allocation,
using $O(n\log n)$ bits of communication.
\end{thm}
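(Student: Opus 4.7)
The plan is to lift the single-item streaming trick from the Introduction to the multi-item additive setting coordinate by coordinate, while binding the payment to the allocation so that incentive compatibility transfers from $\mathcal{M}$ to the protocol. By the taxation principle, represent $\mathcal{M}$ canonically as a menu of pairs $(p,q)$ with $p\in[0,1]^n$ the marginal allocation vector (the only thing that matters for an additive utility) and $q\in[0,nU]$ the payment; equivalently, as a function $f:[0,1]^n\to[0,nU]$ (its lower envelope), with $p^*(v)\in\argmax_p \langle v,p\rangle - f(p)$ the line $\mathcal{M}$ assigns to type $v$.

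The protocol then runs as follows. Up front, Chance draws $\tau_1,\ldots,\tau_n$ independently and uniformly from $[0,1]$. For each item $i$ in a fixed order, the buyer streams bits of the binary expansion of $p^*_i$; after each bit, a Chance node reveals whether it agrees with the corresponding bit of $\tau_i$, and the sub-protocol for item $i$ terminates at the first disagreement (allocating item $i$ iff $p^*_i>\tau_i$). The payment is not streamed independently: as each item resolves, the protocol also generates a random charge (a function of $\tau_i$ and the seller-visible state at the time of resolution) designed so that the conditional expected payment, over the randomness of $\tau$, equals $f(p^*)$ for any $p^*\in[0,1]^n$ the buyer's strategy effectively chooses. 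Concretely, this can be implemented via a multi-dimensional analogue of Myerson's payment formula, with per-item charges that telescope so the total expected payment matches $f$ on the full cube $[0,1]^n$.

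Incentive compatibility then follows: any deterministic buyer strategy is equivalent to committing in advance to some $p\in[0,1]^n$ (the bit sequence sent for item $i$ is the binary expansion of $p_i$), with expected allocation $p$ and expected payment $f(p)$. The buyer's expected utility for any strategy is therefore $\langle v,p\rangle - f(p)$, maximized (by IC of $\mathcal{M}$) at $p=p^*(v)$, so the protocol reproduces $\mathcal{M}$'s expected allocation and payment. The subtle point is that the buyer learns the event ``the protocol is still running'' between bits, which is genuine partial information about $\tau$; however, conditional on this event each residual $\tau_i$ remains uniform on its halved interval, so the buyer cannot profit from deviating on his remaining bits.

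For communication: at each round the next bit of $\tau_i$ is uniform and matches the buyer's bit with probability $1/2$, so item $i$ resolves in $O(1)$ expected rounds. Summing over $n$ items, with an $O(\log n)$ per-item overhead for bookkeeping (e.g.\ indexing items and marking transitions between sub-protocols), gives $O(n\log n)$ expected bits. The main obstacle is the payment coupling: streaming $p$ and $q$ independently would break IC (the buyer would take $p=\mathbf{1}$ and $q=0$), so the payment must emerge as a function of $(p^*,\tau)$ whose conditional expectation over $\tau$ pointwise matches $f$ on $[0,1]^n$. Constructing this coupling, and verifying that the induced ``extended menu'' coincides with the IC extension of $\mathcal{M}$'s menu rather than a strictly larger set that would admit profitable deviations, is the technically delicate part of the argument.
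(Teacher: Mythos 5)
Your high-level intuition — stream the binary expansion of the allocation probabilities and terminate at the first disagreement with a random threshold — is the same kernel as the paper's. But two structural choices you make diverge from the paper's proof, and the second one leaves a genuine gap.

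First, the paper draws a \emph{single} $\tau\in[0,1]$ and has the buyer send the $r$-th bit of all $n+1$ coordinates in round $r$, terminating only when \emph{every} coordinate's streamed prefix has disagreed with $\tau$'s. You instead draw $n$ independent $\tau_i$'s and resolve items sequentially. This is not merely a cosmetic difference: with sequential resolution, the buyer learns $\tau_1$ to whatever precision it took to resolve item 1 (the prefix plus the disagreeing bit plus the allocation outcome) before he is asked to commit to his remaining bits. That leakage is exactly what the paper's design is careful to avoid; its IC argument hinges on the observation that each node of Chance has a unique non-leaf child, so the only thing the buyer ever learns mid-protocol is ``still running,'' and his entire strategy is equivalent to committing up front to an infinite stream. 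You dismiss the leakage with ``conditional on this event each residual $\tau_i$ remains uniform on its halved interval,'' but this understates what the buyer learns and, more importantly, does not address whether the \emph{payment} — which in your scheme depends on the resolved $\tau_i$'s — creates a profitable deviation on later coordinates.

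Second, and more seriously, your treatment of the payment is where the argument actually breaks. The paper's move is to convert $\mathcal{M}$ into $\mathcal{M}'$ where the payment is a lottery over $\{0,U\}$, and then stream the probability $P/U$ as an $(n{+}1)$-th coordinate treated on equal footing with the allocation probabilities; the buyer pays $U$ iff $\widehat{p_{n+1}}>\tau$. Combined with the ``correcting infeasible bits'' step (which forces the streamed prefix to always be consistent with a feasible menu line), this makes the payment just another threshold comparison and the IC argument is immediate. Your proposal instead tries to \emph{derive} the payment as a function of $(p^*,\tau)$ that telescopes via ``a multi-dimensional analogue of Myerson's payment formula.'' There are two problems. (a) The seller never learns $p^*$; she only learns a prefix of each $p_i^*$, up to the round where it first disagrees with $\tau_i$. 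An arbitrary IC payment $f(p^*)$ is not determined by such prefixes, so the seller cannot compute the charge you posit. (b) Unlike the single-item case, multi-dimensional IC payments generically do not decompose into per-item path-independent charges (there is no Myerson formula for $n>1$; the payment is pinned down only up to cyclic monotonicity, not by an explicit integral). You flag this step as ``technically delicate,'' but it is in fact an unresolved obstruction, not a detail. The paper's $(n{+}1)$-th-coordinate trick is precisely the device that dissolves it, and it has no counterpart in your proposal.

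Finally, you invoke the convex-conjugate extension of $f$ to $[0,1]^n$ to argue the buyer cannot gain by synthesizing a $p$ outside the menu. The paper avoids this issue entirely via the ``correcting infeasible bits'' rule, which restricts the buyer to feasible prefixes at every round; you would instead need to both construct the extended payment \emph{and} implement it with prefix-only information, compounding the issue in (a). To repair the proposal along the paper's lines: use one shared $\tau$, add the payment probability as an extra streamed coordinate after converting to a $\{0,U\}$-lottery, and enforce prefix feasibility. That removes both the information-leakage concern and the unimplementable payment coupling.
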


\begin{proof}
First, we convert $\mathcal{\ensuremath{M}}$ to a strategically-equivalent
mechanism $\mathcal{\ensuremath{M}}^{'}$ where the payment is always
either zero or $U$. Note that by IR, the expected payment $P$ in
$\mathcal{\ensuremath{M}}$ for every type is always at most $U$;
therefore for each type we can implement expected payment $P$ by
charging a payment of $U$ with probability $P/U$ (and zero otherwise).
We henceforth identify each type of Buyer with the corresponding vector
in $[0,1]^{n+1}$, which describes the probability that $\mathcal{M}'$
allocates each item to the Buyer, and the probability ($n+1$-th coordinate)
that the Buyer pays $U$. We can further identify the mechanism $\mathcal{M}'$
with the set of allowed types/vectors in $[0,1]^{n+1}$.

\paragraph{Buyer's nodes and suggested strategy}

Each Buyer's node\footnote{Here we slightly abuse notation: we defined the auction protocols
for binary trees, so this technically corresponds to a sub-tree of
depth $n+1$ with all Buyer's nodes.} corresponds to a choice of $n+1$ bits. Given the Buyer's type and
mechanism $\mathcal{M}^{'}$, let $p_{1},\dots,p_{n}$ denote the
probability that Buyer is allocated items $1,\dots,n$, respectively,
and let $p_{n+1}=P/U$ denote the probability that the Buyer pays
$U$. The Buyer's suggested strategy is to send, for each round $r$
and $i\in[n+1]$, the $r$-th bit in the binary representation\footnote{If $p_{i}$ has two binary representations, using either one throughout
the protocol will work.} of $p_{i}$. 

\paragraph{Correcting infeasible bits}

We enforce that at any point in the protocol, the Buyer's messages
are consistent with some type, i.e. with the prefix of probabilities
corresponding to some feasible menu line in $\mathcal{\ensuremath{M}}^{'}$.
If only possible value for the Buyer's next bit would possibly be
consistent with the protocol's history, the protocol continues assuming
that the Buyer indeed sent this bit (formally we remove the Buyer's
node from the protocol since it is redundant).

\paragraph{Nodes of Chance}

The distribution over nodes of Chance is determined by an implicit
parameter $\tau$ drawn uniformly at random from $[0,1]$. Before
each node of Chance, we will already know that $\tau$ belongs to
a particular measurable subset $S\subseteq[0,1]$. For a partition
$S_{L}\cup S_{R}=S$ (to be specified below), each child of this node
of Chance will correspond to $\tau$ falling in each of $S_{L}$ or
$S_{R}$, which induces the probability distribution on the children.
While $\tau$ plays a crucial role in defining and analyzing the protocol,
we stress that it is only implicit: in the actual protocol it is drawn
on the fly, with increasing precision at each node of Chance along
the path of the protocol.

To define the $r$-th node of Chance along a given path, consider,
for each $i\in[n+1]$, the concatenation of the $i$-th bits across
the Buyer's $r$ messages, and compare it to the first $r$ bits in
the binary representation of $\tau$. If for every $i$, at least
one of the bits is different, the protocol is terminated at a leaf
as follows (see Payment and Allocation). Otherwise, the protocol continues
in a Buyer's node. Note that for each node of Chance, only one of
its children is an internal (Buyer's) node.

\paragraph{Payment and Allocation}

At the end of the protocol, for each $i\in\left[n+1\right]$, let
$\widehat{p_{i}^{r}}\in[0,1)$ denote the number whose binary representation
is the concatenation of the $i$-th bit in each of the $r$ rounds
of the protocol (after correcting infeasible bits). For $i\in\left[n\right]$,
the $i$-th item is allocated iff $\widehat{p_{i}^{r}}>\tau$; the
Buyer pays $U$ iff $\widehat{p_{n+1}^{r}}>\tau$, and otherwise he
pays zero. 

\begin{figure}
\caption{Example protocol}

\begin{centering}
\includegraphics[scale=0.55]{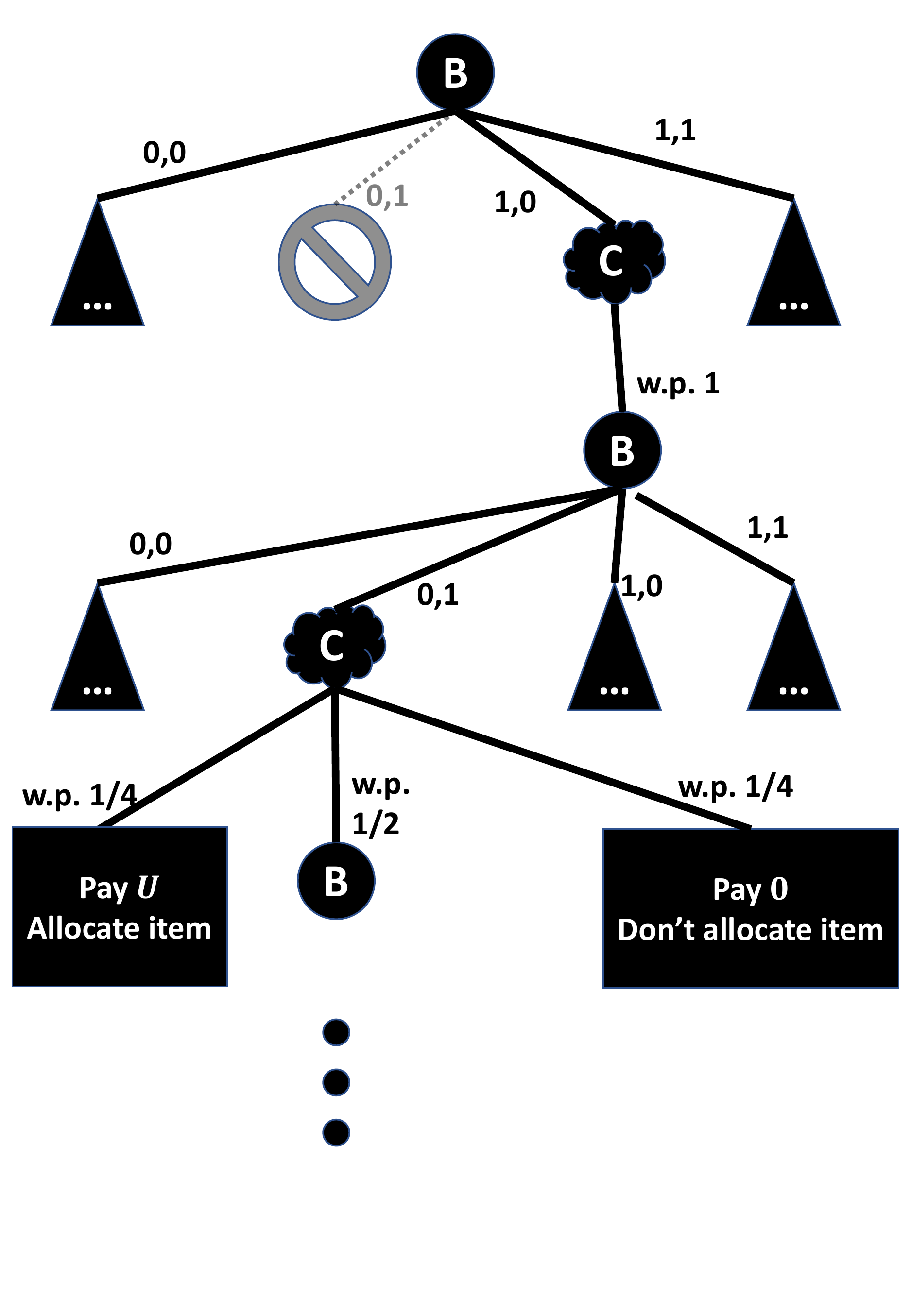}
\par\end{centering}
\vspace{-1.7cm}This figure depicts the first two iterations in an
example protocol with one item, where the Buyer's favorite menu line
has payment probability $2/3$ ($.\overline{10}$ in binary) and item
allocated with probability $1/3$ ($.\overline{01}$ in binary). Nodes
marked with B (resp. C) correspond to Buyer (resp. Chance). Triangles
correspond to sub-trees never visited for this particular Buyer's
valuation. In the first iteration, the Buyer sends $1,0$, corresponding
to the first bit in the probability of payment,allocation. Notice
that $0,1$ is an infeasible prefix for the Buyer since it would violate
IC constraints (lower probability of payment and higher probability
of allocation). At the first node of Chance, $\tau$ cannot disagree
with both bits, hence the protocol proceeds to the next Buyer's node
with probability 1. In the next iteration the Buyer sends the second
bit from each probability. Finally, in the second node of Chance:
\begin{itemize}
\item The Buyer pays $U$ and receives the item w.p. $1/4$ ($\tau<1/4<1/3,2/3$).
\item The Buyer pays nothing and receives nothing w.p. $1/4$ ($\tau>3/4>1/3.2/3$).
\item W.p. $1/2$ the protocol continues.
\end{itemize}
\end{figure}

\paragraph{IC}

The key observation for incentive compatibility is that a Buyer's
strategy is completely determined by the infinite stream of messages
that it would send in the (zero-probability) event that the protocol
never terminates. To see this, recall that each node of Chance has
only one internal node child. Hence for any fixed Buyer's strategy
there is a unique infinite path in the tree, and every finite run
of the protocol corresponds to a prefix of this path, up to some node
of Chance that deviates from the path to a leaf. 

Let $\widehat{p_{i}}$ denote the number whose binary representation
is the infinite sequence of Buyer's $i$-th bits in the (zero-probability)
event that the protocol never terminates. Recall from the previous
paragraph that a Buyer's strategy is completely determined by the
vector of $\widehat{p_{i}}$'s. Note further that the $i$-th item
is allocated at the end of the protocol iff $\widehat{p_{i}}>\tau$;
similarly, the Buyer pays $U$ iff $\widehat{p_{n+1}}>\tau$. Therefore,
since $\tau$ is drawn uniformly from $\left[0,1\right]$, the probability
that the Buyer is allocated item $i$ (resp. pays $U$) is exactly
$\widehat{p_{i}}$. Hence, by IC of $\mathcal{M}^{'}$, the suggested
strategy $\widehat{p}=p$ is optimal for the Buyer.

\paragraph{Communication complexity}

At each round of communication, the Buyer sends $n+1$ bits. Also,
at each round $r$ of communication, there is probability exactly
$1/2$ that the $i$-th bit in the Buyer's message (for each $i\in\left[n+1\right]$)
disagrees with the $r$-th bit of $\tau$. (This probability is independent
across rounds, but correlated for different $i$'s.) After $2\log(n)$
rounds, each $i$ has probability $1/n^{2}$ of agreeing with all
of $\tau$'s bits. We can take a union bound over all $i$'s to obtain
that except with probability $1/n$, the protocol has already terminated.
In the unlikely event that the protocol continues, we can re-apply
the same analysis from scratch. 

Let $\ubr$ denote an upper bound on the expected number of rounds
in the protocol, corresponding to the worst case where the above union
bound is tight. Then we have that 
\begin{equation}
\ubr\leq2\log(n)+\ubr/n.\label{eq:recursive}
\end{equation}
Solving the recurrence relation for $\ubr$, we have that $\ubr=O(\log(n))$.
Since the Buyer sends $n+1$ bits in each round, the total communication
complexity is $O(n\log(n))$.
\end{proof}

\subsection{Special case: a protocol for the \cite{DaskalakisDT17} example}

We can also prove a concrete (non-asymptotic) bound on the expected
number of bits that the buyer sends in the DDT example. Beyond the
historical importance of this specific example, our result demonstrates
that our protocols are communication-efficient not only in the asymptotic
sense, especially if we take advantage of the particular features
of a specific distribution. We in particular highlight the fact that
the Buyer in this protocol sends {\em strictly less} bits\footnote{Here we only count communication from the Buyer and not the random
coin tosses. In many scenarios random bits are cheap but informative
communication is costly.} than he would with a simple deterministic auction selling each item
separately.
\begin{thm}
\label{thm:DDT} Consider the case of $n=2$ items and the Buyer drawing
his valuations i.i.d. from $\Beta(1,2)$ (i.e.~the distribution on
$[0,1]$ with density function $f(x)=2(1-x)$). Then there is an IC
auction protocol obtaining the maximum possible revenue where the
Buyer sends less than two bits in expectation.
\end{thm}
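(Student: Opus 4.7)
The plan is to design a protocol specifically tailored to the DDT example rather than black-boxing Theorem~\ref{thm:additive}, exploiting two structural features of the optimal auction from~\cite{DaskalakisDT17}: (i) symmetry of the optimal menu in the two items, and (ii) the taxation-principle characterization, which pins the payment probability down as a function of the allocation probabilities so that the Buyer need not communicate it as a separate coordinate. After the Buyer uses one bit to break the item-symmetry (which of the two items he values more), the remaining menu is effectively a one-dimensional family of sorted allocation profiles, and the Seller reconstructs the payment from the communicated allocation via the inverse-demand function.

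On this reduced one-dimensional menu, the chance-termination subroutine from the proof of Theorem~\ref{thm:additive} applies cleanly: Chance samples $\tau \in [0,1]$ uniformly, the Buyer streams the bits of his scalar allocation parameter $p$, and the protocol terminates at the first round where his bit disagrees with $\tau$'s bit. Exactly as in the general proof, each round has termination probability $1/2$, so this phase contributes $2$ Buyer bits in expectation. Incentive compatibility follows from the same argument as in Theorem~\ref{thm:additive}: the only thing the Buyer learns mid-protocol is that the interaction is continuing, and his effective commitment is to a $p$-value whose induced outcome distribution matches his favorite menu line.

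To push the Buyer's expected communication strictly below $2$ bits, the plan is to merge the symmetry-breaking bit into the chance-termination pipeline (so a single round can resolve both the symmetry indicator and the first allocation bit) and to exploit the fact that under $\Beta(1,2)^{\otimes 2}$ a substantial mass of types lies at the endpoints of the parameterization---in particular at the null outcome---where the first $\tau$-toss resolves the protocol with higher-than-generic probability. The main obstacle is the explicit bookkeeping: one must verify that the merged protocol preserves \emph{exact} revenue-optimality of the DDT mechanism (so no revenue is sacrificed in scaling or early-termination shortcuts) and then evaluate the Buyer's expected bits directly against the density $f(x) = 2(1-x)$, checking that the endpoint-mass savings outweigh the worst-case two-bit baseline of the chance-termination trick. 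This final verification is a concrete (if somewhat delicate) integration using the closed-form DDT menu.
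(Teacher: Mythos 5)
Your proposal captures the generic chance-termination idea, but it misses the specific structural feature of the DDT mechanism that the paper's proof exploits to get under two bits, and it has a real gap in the payment bookkeeping.

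The paper's savings in the $\mathcal{A}\cup\mathcal{B}$ region do not come from ``endpoint mass of the type distribution'' or from the null outcome. They come from a property of the \emph{menu}: the optimal DDT mechanism always allocates the more-preferred item with probability $1$ and the less-preferred item with a probability $\pi$ that lies in the narrow band $[1/8,\,1/8+0.03)$. Consequently, once $\tau$ is drawn, with probability about $1/8$ we have $\tau<1/8<\pi$ and the Buyer receives both items with \emph{zero} Buyer communication, and with probability about $0.845$ we have $\tau>1/8+0.03>\pi$ so the Buyer sends only the single symmetry-breaking bit; only in the $0.03$-measure middle window (plus a vanishing window near $0$ for the payment lottery) does the Buyer stream refined bits. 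That gives under one bit in expectation for $\mathcal{A}\cup\mathcal{B}$, and the remainder is a preliminary stage (with a cleverly randomized three-way encoding for $\mathcal{Z}$, $\mathcal{W}$, $\mathcal{A}\cup\mathcal{B}$) that keeps the total below $2$ for every region. Your plan of ``merge the symmetry bit into the stream'' plus ``exploit endpoint mass'' starts from a baseline of $1 + 2 = 3$ bits and has no mechanism by which it drops by more than one bit; the narrow-band observation about $\pi$ is exactly the missing ingredient.

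The second gap concerns the payment. You propose that the Buyer not communicate the payment-probability coordinate at all, with the Seller reconstructing it from the allocation via inverse demand. But the Seller only ever learns a \emph{finite prefix} of the allocation probability before the protocol terminates; she cannot in general recover even a single bit of $q = g(p)$ from a short prefix of $p$, so she cannot decide whether $\tau < q$. The paper's proof instead converts to a $\{0,U\}$ lottery payment and has the Buyer stream bits of $qU$ as an additional coordinate, interleaved at a much slower rate (once every $\sqrt{U}$ rounds) since $q=O(1/U)$ is tiny; this keeps the payment's contribution to expected communication $O(\sqrt{1/U})$. Without some version of this, your protocol is either not well defined at the leaves or does not preserve the exact DDT revenue.
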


The proof is deferred to Appendix \ref{app:DDT}.

\section{An extension for general valuations}

The following theorem is an analogue of Theorem \ref{thm:additive}
for general combinatorial valuations (not necessarily subadditive
or monotone). The communication complexity upper bound is parameterized
by $B$, the number of bundles ever assigned by the direct revelation
mechanism. For example, for unit demand valuations, $B\leq n+1$;
for general valuations, $B\leq2^{n}$.
\begin{thm}
\label{thm:truthful-B}Let $\mathcal{\ensuremath{D}}$ be any prior
over Buyer's combinatorial valuations over $n$ items bounded by maximum
valuation $U$, and any truthful mechanism $\mathcal{\ensuremath{M}}$.
Suppose that for any type and realization of randomness, $\mathcal{M}$
only ever allocates one of $B$ bundles. Then there is an IC auction
protocol with the same expected payment and allocation using $O(B\log(B))$
bits of communication.
\end{thm}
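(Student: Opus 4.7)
My plan is to adapt the protocol of Theorem~\ref{thm:additive} by replacing the $n$ item-marginal coordinates with a cumulative encoding of a distribution over the (at most) $B$ bundles the direct-revelation mechanism ever assigns. First, reduce $\mathcal{M}$ to a strategically-equivalent $\mathcal{M}'$ whose payment is always $0$ or $U$, paying $U$ with probability $P/U$. Enumerate the bundles as $S_1,\dots,S_B$, and for a menu line with bundle-probabilities $p_1,\dots,p_B$ and payment-probability $p_{B+1}$, define cumulative probabilities $q_j := \sum_{k \le j} p_k$. I identify each type with the vector $(q_1,\dots,q_{B-1},p_{B+1}) \in [0,1]^B$ (with $q_0=0$ and $q_B=1$ implicit), and identify $\mathcal{M}'$ with the feasible set of such vectors, where feasibility demands $q_1 \le \dots \le q_{B-1}$ and consistency with some menu line of $\mathcal{M}'$. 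This makes the problem look identical to Theorem~\ref{thm:additive} but with $B$ coordinates in place of $n+1$.

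The protocol then has the exact same shape as before. In each ``round'' (a sub-tree of Buyer's nodes) the Buyer sends one bit of each of the $B$ coordinates; next a Chance node reveals one bit of a uniformly random $\tau \in [0,1]$; infeasible prefixes are pruned as in Theorem~\ref{thm:additive}. The protocol terminates once, for every coordinate $i \in [B]$, at least one of the bits sent so far disagrees with the corresponding bit of $\tau$. At the terminating leaf, writing $\widehat{q_j^r}$ and $\widehat{p_{B+1}^r}$ for the truncations defined by the bits sent, I allocate bundle $S_j$ iff $\widehat{q_{j-1}^r} < \tau \le \widehat{q_j^r}$ and charge $U$ iff $\widehat{p_{B+1}^r} > \tau$. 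Because the first disagreement between each coordinate and $\tau$ fixes on which side of $\tau$ that coordinate lies, the event ``bundle $S_j$ allocated'' is equivalent to $q_{j-1} < \tau \le q_j$, which has probability $p_j$ by the marginal uniformity of $\tau$; likewise payment $U$ occurs with marginal probability $p_{B+1}$, matching $\mathcal{M}'$.

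Incentive-compatibility is inherited from $\mathcal{M}'$ via the same argument as in Theorem~\ref{thm:additive}: a Buyer's strategy is determined by the (zero-probability) infinite bit-stream he would send, which encodes a vector $\widehat{q}$; the protocol's induced marginal bundle-allocation and payment probabilities are precisely the coordinates of $\widehat{q}$, so the Buyer's expected utility equals his expected utility for menu-line $\widehat{q}$ under $\mathcal{M}'$, and IC of $\mathcal{M}'$ gives IC of the protocol. The communication bound is the same union-bound/recurrence: each of the $B$ coordinates independently disagrees with $\tau$ in any given round with probability $1/2$, so after $2\log B$ rounds a union bound over $B$ coordinates gives termination except with probability $1/B$, leading to $\ubr \le 2\log B + \ubr/B$ and hence $\ubr = O(\log B)$ rounds of $B$ bits each, for a total of $O(B\log B)$.

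The only place where the combinatorial setting requires extra care --- and what I expect to be the main obstacle --- is that a single shared $\tau$ now couples the sampled bundle (via the cumulative encoding) with the payment event, so the joint distribution of (allocation, payment) is no longer the product distribution of $\mathcal{M}'$. This turns out to be harmless because the Buyer's expected utility is linear in the marginal distribution over bundles and additive in the marginal payment probability, so only the marginals enter the IC inequalities; checking this, together with verifying that the cumulative coordinates remain monotone under the infeasibility-pruning step so that $\tau$ always lies in exactly one interval $(\widehat{q_{j-1}^r},\widehat{q_j^r}]$, completes the argument.
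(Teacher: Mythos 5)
Your proposal matches the paper's proof of Theorem~\ref{thm:truthful-B}: both reduce to a payment of $0$ or $U$, encode each type by the $B-1$ cumulative partition boundaries of $[0,1]$ plus the payment probability, run the bit-streaming-against-$\tau$ protocol from Theorem~\ref{thm:additive} on these $B$ coordinates, and terminate once $\tau$'s interval and the payment comparison are both resolved. Your explicit check that only the marginals over bundles and payment enter the IC argument (so the coupling through a single shared $\tau$ is harmless) is the same observation implicitly used in Theorem~\ref{thm:additive}, and your noted remaining detail (monotonicity of the cumulative coordinates under pruning) is likewise left implicit in the paper's proof sketch.
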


\begin{proof}
[Proof sketch] For any type, consider a partition of $[0,1]$ into
$B$ intervals, where the $b$-th interval is of length identical
to the probability that $\mathcal{\ensuremath{M}}$ allocates Bundle
$b$ to the Buyer. The rest of the proof proceeds analogously to the
proof of Theorem \ref{thm:additive}. First, we transform $\mathcal{\ensuremath{M}}$
into a mechanism $\mathcal{\ensuremath{M}}^{'}$ with payment $0$
or $U$. We henceforth identify between a type and the $B-1$ probabilities
that define the partition, and the probability that the Buyer pays
$U$. The nodes of Chance are parameterized by a threshold $\tau$
drawn uniformly at random from $[0,1]$. At each round of communication
the Buyer (allegedly) sends the next bit in the binary representation
of each of the $B$ probabilities that define his type. The protocol
terminates when it has received enough information to determine in
which of the $B$ intervals $\tau$ lies and whether $\tau$ is smaller
than the probability of payment. The allocation is the bundle corresponding
to this interval, and the payment is $U$ if $\tau$ is smaller than
the probability of payment (and zero otherwise).
\end{proof}

\section{Unit-demand, independent items: trading off revenue and communication}
\begin{thm}
\label{thm:symmetric} Let $\mathcal{D}$ be a distribution of independent
unit-demand valuations over $n$ items bounded by maximum valuation
$U$. Then, for any constant $\varepsilon>0$, there is a $(1-\varepsilon)$-approximately
revenue-optimal IC auction protocol using $\polylog(n)$ bits of communication. 
\end{thm}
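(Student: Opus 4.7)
The plan is to exploit the partition-symmetric menu-size complexity bound of \cite{KothariMSSW19}: for any distribution $\D$ of independent unit-demand valuations, there is a $(1-\varepsilon)$-approximately revenue-optimal mechanism $\mathcal{M}^{\ast}$ whose menu-size, counted modulo the equivalence relation induced by permuting items within each group of a fixed partition $[n]=G_1\sqcup\cdots\sqcup G_k$, is at most $K=n^{\polylog(n)}$. Each equivalence class is specified by rank-based allocation probabilities $\{q_{j,r}\}$ --- the item of rank $r$ in group $G_j$ under the buyer's valuation is allocated with probability $q_{j,r}$ --- together with a fixed payment. After relabeling ranks within each group one can further assume that $q_{j,\cdot}$ is monotonically non-increasing in $r$.

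The protocol I have in mind runs in two stages. In Stage 1 the buyer sends the index of his preferred equivalence class, at cost $\lceil\log_2 K\rceil=\polylog(n)$ bits; IC of this stage is inherited directly from the IC of $\mathcal{M}^{\ast}$, since the buyer's expected utility from any equivalence class depends only on his within-group sorted valuations. In Stage 2 the protocol determines the identity of the allocated item, and the payment is implemented via the $0$-or-$U$ randomization trick from the proof of Theorem~\ref{thm:additive} (applied once the equivalence class has been fixed), which costs only coin flips by the seller and no extra bits of communication.

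The main technical crux is Stage 2. The naive approach --- seller samples a (group, rank) pair $(j^{\ast},r^{\ast})$ and then asks the buyer to name the rank-$r^{\ast}$ item in $G_{j^{\ast}}$ --- is not IC: the unit-demand buyer's best deviation is to always name his highest-valued item in $G_{j^{\ast}}$, regardless of the announced $r^{\ast}$. To fix this I plan to have the buyer commit to a ranking of the relevant items within each group \emph{before} the seller samples from the rank distribution. Under the monotonicity of $q_{j,\cdot}$, the rearrangement inequality implies that the truthful ranking maximizes the buyer's expected utility over the seller's subsequent sampling, so this commitment is IC.

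The remaining obstacle --- and the part I expect to be hardest --- is keeping the committed ranking short enough to fit in $\polylog(n)$ bits. I plan to establish, using the structure of the \cite{KothariMSSW19} construction (and if necessary a negligibly-lossy post-processing step that truncates each $q_{j,\cdot}$ to its top $\polylog(n)$ ranks and drops groups with negligible total allocation mass, at a cost of at most $\varepsilon/2$ in revenue), that only $\polylog(n)$ items per group and only $\polylog(n)$ nontrivial groups need to be named, each at a cost of $O(\log n)$ bits. Summing, Stage 2 costs $\polylog(n)$ bits, and combining with Stage 1 yields the overall bound claimed in the theorem.
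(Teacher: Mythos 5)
Your Stage 1 and your core IC insight (commit the buyer to how he maps ranks to items \emph{before} Chance samples, so that truth-telling is optimal by a rearrangement-type argument) are in the right spirit, and you correctly identified Theorem~\ref{thm:KMMSW} as the starting point. But there is a genuine gap in Stage~2, and the paper closes it with two ideas you are missing.

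First, you have the buyer commit to rankings \emph{across all groups} before any sampling. The partition from~\cite{KothariMSSW19} can have up to $n$ groups, each with up to $n$ items whose allocation probabilities are distinct, so this commitment is $\Omega(n\log n)$ bits. Your proposed fix --- truncate to the top $\polylog(n)$ ranks per group and drop all but $\polylog(n)$ groups --- is not something Theorem~\ref{thm:KMMSW} gives you, and it is not true in general that the dropped tail costs only $\varepsilon/2$ revenue: even if each dropped group (or each dropped rank) has tiny allocation mass, their revenue contributions can add up to a constant fraction. You would need a separate structural lemma about the \cite{KothariMSSW19} mechanism to justify this, and no such lemma is apparent.

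The paper instead does the following. In Stage~1, after the buyer names the partition-symmetric menu line ($O(\log C)$ bits), \emph{Chance already samples a single group} $S_i$ (with probability proportional to its total allocation mass); this is IC because, at this point, the buyer has committed to a menu line and the allocation within any group is still undetermined. Stage~2 then works entirely inside $S_i$, so only one group's ordering is ever relevant. Second --- and this is the step your proposal really needs --- the paper never asks the buyer to communicate a ranking of $S_i$ at all. It runs a recursive binary search: at each round the buyer sends the \emph{histogram} of allocation probabilities he assigns to the lexicographically first half of the current subset (the histogram is $O(\log^2 n)$ bits since, by Theorem~\ref{thm:KMMSW}, probabilities lie in a discrete set $L_\delta$ of $O(\log n)$ values and counts are at most $n$), and a fresh uniform threshold $\tau\in[0,1]$ determines which half to recurse on. After $O(\log n)$ rounds a single item remains and is allocated. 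The IC argument is the same commit-then-sample principle you invoke, but applied at histogram granularity and checked by reverse induction on the recursion; the total cost is $O(\log^3 n + \log C) = \polylog(n)$ bits, with no truncation and no unproven structural assumptions on the KMMSW mechanism.
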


Our proof uses a result of \cite{KothariMSSW19} for {\em Partition-symmetric
menus} which we introduce in Section \ref{subsec:Partition-symmetric-menu-size-co}.
The proof of Theorem \ref{thm:symmetric} is given in Section \ref{subsec:Proof-of-Theorem}.

\subsection{Partition-symmetric menu-size complexity\label{subsec:Partition-symmetric-menu-size-co}}

\subsubsection*{Symmetries}

The following is a slight strengthening of the symmetric menu-size
complexity measure recently introduced by \cite{KothariMSSW19}.
\begin{defn}
[Partition-symmetric menu-size complexity]\label{def:PSMSC}

A {\em partition-symmetric menu line} consists of a payment, (randomized)
allocation, and a partitioning of items into subsets $S_{1},\dots,S_{\sigma}$.
We say that a direct revelation mechanism $\mathcal{M}$ supports
this partition-symmetric menu line if its menu contains a line with
the same payment for any permutation of the allocation that respects
the partition (i.e.~permutation $\pi$ such that $\pi(S_{i})=S_{i}$
for all $i$). The {\em partition-symmetric menu-size complexity}
of $\mathcal{M}$ is the smallest $c$ such that $\mathcal{M}$ can
be written as the union of $c$ partition-symmetric menu lines.
\end{defn}

The following theorem follows from \cite{KothariMSSW19}; the statement
here is slightly stronger than the formulation of Theorem IV.5 in
their paper in the sense that (i) we consider the specific symmetry
group induced by a partition of the items; and (ii) we require that
the allocation probabilities are rounded to a discrete set $L_{\delta}$.
Both desiderata follow from their proof \cite{Weinberg20}.
\begin{thm}
[\cite{KothariMSSW19}]\label{thm:KMMSW}

Let $\mathcal{D}$ be a distribution of independent unit-demand valuations
over $n$ items. Then, for any constant $\varepsilon>0$, there exists
a unit-demand mechanism\footnote{We say that a mechanism is {\em unit-demand} if it never allocates
more than one item to the Buyer. (This is wlog for direct revelation
mechanisms with unit-demand buyers. But in general, for mechanisms
where the Seller does not fully learn the Buyer's valuation, it is
not obvious how to convert a mechanism where she allocates a bundle
of items to a unit-demand mechanism without increasing the partition-symmetric
menu-size complexity.)} with partition-symmetric menu-size complexity at most $n^{\polylog(n)}$
which recovers at least $(1-\varepsilon)$-fraction of the optimal
revenue. Furthermore, for some constant $\delta>0$ that depends on
$\varepsilon$, the probabilities that the mechanism allocates each
item always belong to the discrete set $L_{\delta}:=\left\{ 1,1-\delta,(1-\delta)^{2},\dots,(1-\delta)^{\frac{3}{\delta}\ln n}\right\} \cup\{0\}$;
in particular there are only $O\left(\log n\right)$ possible probabilities. 
\end{thm}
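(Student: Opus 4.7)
The plan is to follow the proof approach of \cite{KothariMSSW19}, adapted to yield the two additional properties the theorem claims—the symmetries correspond to a fixed partition of items, and all allocation probabilities lie in $L_\delta$. The argument transforms an optimal mechanism into an approximately-optimal one with the desired structural properties through three successive simplifications, each losing at most an $O(\varepsilon)$-fraction of the revenue.

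First, truncate and discretize the per-item marginals. Truncating $\D_i$ outside a bounded range loses $O(\varepsilon)$-fraction of revenue by standard tail arguments for unit demand, after which values and probabilities can be rounded onto appropriate grids and similar distributions clustered together. Done carefully—leveraging that only a few revenue-relevant statistics of each $\D_i$ (e.g., its monopoly price and the shape of its virtual value function near that price) significantly affect the optimal revenue—this produces a partition of $[n]$ into $\sigma = \polylog(n)$ blocks of items with identical discretized marginals. Second, since expected revenue is linear in the IC/IR mechanism and the feasible polytope is convex, averaging the optimal mechanism under the block-preserving permutation group yields a partition-symmetric mechanism with the same revenue on the discretized distributions. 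Third, round each allocation probability down to the nearest element of $L_\delta$; the geometric structure of $L_\delta$ ensures that every probability shrinks by at most a factor of $(1-\delta)$, and a uniform rescaling of payments by $(1-\delta)$ restores IC and IR with an additional $O(\delta)$ revenue cost. Choosing the discretization parameters in terms of $\varepsilon$ bounds the cumulative loss.

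After these reductions, the menu decomposes into orbits under the partition-preserving permutation group; each orbit (partition-symmetric menu line) is specified by a payment together with, for each of the $\sigma$ blocks, a multiset of allocation probabilities drawn from $L_\delta$. Since $|L_\delta| = O(\log n)$, the number of multisets per block of size $n_i \leq n$ is at most $\binom{n_i + O(\log n)}{O(\log n)} \leq n^{O(\log n)}$, so the total orbit count is $n^{O(\sigma \log n)} = n^{\polylog(n)}$. The main obstacle is the first step: reducing to $\polylog(n)$ item types while losing only $O(\varepsilon)$ revenue. A blunt uniform grid on values and probabilities would produce superpolynomially many distinct marginals, so one must adaptively cluster items whose distributions yield similar contributions to optimal revenue, which is the technical heart of \cite{KothariMSSW19}. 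A secondary subtlety in the third step is that naive rounding of allocation probabilities can violate IC between menu lines; this is handled via a small payment adjustment alongside the rounding, which preserves IC at a controlled revenue cost.
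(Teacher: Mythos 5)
The paper does not actually prove this statement: it is imported verbatim from \cite{KothariMSSW19}, with only the remark that the two strengthenings stated here (symmetry with respect to a fixed partition, and allocation probabilities restricted to $L_\delta$) ``follow from their proof.'' Measured against that, your outline reconstructs the right architecture --- cluster the $n$ marginals into $\sigma=\polylog(n)$ blocks of identical discretized distributions, symmetrize the optimal mechanism by averaging over the block-preserving permutation group (valid because the discretized prior is invariant under that group and revenue is linear over the convex set of IC/IR mechanisms), round allocation probabilities onto the geometric grid, and count orbits as multiset profiles over $L_\delta$, giving $n^{O(\sigma\log n)}=n^{\polylog(n)}$. You also correctly locate where each of the two strengthenings comes from: the partition structure from the clustering step, and $L_\delta$ from the geometric rounding.

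The gap is that your first step is the entire content of \cite{KothariMSSW19}, and you only gesture at it. Reducing $n$ arbitrary independent marginals to $\polylog(n)$ equivalence classes while losing only an $O(\varepsilon)$ fraction of \emph{optimal revenue} is not a routine truncation-and-gridding argument --- a uniform grid gives superpolynomially many classes, as you note, and the adaptive clustering plus the coupling argument showing that replacing $\D_i$ by a nearby distribution perturbs the optimal revenue controllably is the technical heart of that paper. As a self-contained proof your proposal therefore does not close; as an account of how the theorem follows from the cited work it is essentially right. Two smaller points worth tightening: (i) probabilities below $(1-\delta)^{\frac{3}{\delta}\ln n}\approx n^{-3}$ are rounded to $0$, not shrunk by a $(1-\delta)$ factor, so they need a separate (easy) revenue bound; and (ii) rounding down onto $L_\delta$ shrinks different entries by different factors in $[1-\delta,1]$, so it is not a uniform rescaling and IC must be restored by the standard nudge of first scaling all allocations by $(1-\delta)$ and then rounding up within the grid (or by an explicit extra payment discount) --- you flag this but should make the fix concrete.
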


\subsection{Proof of Theorem \ref{thm:symmetric}\label{subsec:Proof-of-Theorem}}
\begin{proof}
We begin with the partition-symmetric mechanism of \cite{KothariMSSW19}
(see Theorem \ref{thm:KMMSW}). Denote its partition-symmetric menu-size
complexity by $C$. In the first stage of the protocol, the Buyer
chooses a partition-symmetric menu line among $C$ options, and then
a subset $S_{i}$ is drawn by Chance from the $\sigma\leq n$ subsets
in the partition. (Each subset $S_{i}$ is drawn with probability
equal to the sum of probabilities of items in that subset.) This first
stage uses $O(\log n+\log C)$ communication. We henceforth focus
on implementing the mechanism restricted to $S_{i}$. I.e.~a mechanism
whose menu has a fixed payment $P$ and the set of feasible allocations
is symmetric with respect to any permutation of $S_{i}$. 

Since the set of feasible allocations is symmetric, it suffices to
consider the histogram of allocation probabilities. The Buyer may
assign each probability from the histogram to any item in $S_{i}$.
Recall also that by Theorem \ref{thm:KMMSW}, all the probabilities
in the histogram belong wlog to a discrete set $L_{\delta}$ of $O(\log(n))$
feasible probabilities. In particular, the histogram can be described
by $O(\log^{2}(n))$ bits (since the count for each probability is
an integer between $0$ and $|S_{i}|\leq n$).

The second stage of the protocol proceeds by recursively considering
smaller subsets of $S_{i}$. The nodes of Chance are parameterized
by a number $\tau$ draws uniformly at random from $[0,1]$. At the
first iteration, the Buyer's suggested strategy is to send the histogram
of probabilities for the lexicographically first half of items in
$S_{i}$. (This is equivalent to sending the histogram for the second
half of the items since the total histogram is known.) If the sum
of probabilities in the first half is greater than $\tau$, the protocol
recurses on the first half; otherwise it recurses on the second half.
After $O(\log|S_{i}|)$ iterations, only one item is left. The Buyer
is allocated that item and pays $P$.

\paragraph*{IC}

We prove that the second stage of the protocol is IC and also has
the same expected allocation and payment as in the original mechanism;
IC of the first stage then follows from IC of the original mechanism.
To show second-stage IC, let $\widehat{p_{j}}$ denote the probability
that the Buyer assigns item $j$ in the last iteration when it is
not the only remaining item. Observe that any Buyer's strategy for
the second stage is fully determined by the vector of $\widehat{p_{j}}$'s.
By reverse induction over the iterations of the protocol, observe
that the histogram of all $\widehat{p_{j}}$'s is exactly equal to
the histogram of feasible probabilities. Finally note that at the
end of the protocol, the Buyer is allocated item $j$ with probability
$\widehat{p_{j}}$. Therefore, by IC of the original protocol, the
Buyer's suggested strategy is optimal.

\paragraph*{Communication complexity}

The first stage of the protocol requires $O(\log n+\log C)$ communication.
Each iteration of the second stage requires $O(\log^{2}(n))$ bits
to describe the histogram, and there are at most $O(\log(n))$ iterations.
Hence the total communication complexity is $O(\log^{3}n+\log C)=\polylog(n)$.
\end{proof}

\section{Communication lower bound for unit-demand valuations}
We consider revenue maximization with unit-demand valuations as an example to demonstrate our proof technique. Our framework for constructing hard instances will rely on the design properties of a set system and a vector family, which are presented in the two lemmata in the following subsection.
\subsection{Combinatorial designs}
\begin{lemma}\label{lem:weak_design}
For any constant $\varepsilon,\delta>0$, there exists a family of size-$\varepsilon n$ subsets $\X_{n,\varepsilon,\delta}\subset \{0,1\}^n$ such that $|\X_{n,\varepsilon,\delta}|=2^{\Omega(n)}$, and the intersection between any two distinct subsets $x_1,x_2\in \X_{n,\varepsilon,\delta}$ has size at most $(1+\delta)\varepsilon^2 n$.
\end{lemma}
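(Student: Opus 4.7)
The plan is to use a straightforward probabilistic argument together with the alteration method. I would sample $N$ subsets $S_1, \ldots, S_N \subseteq [n]$ of size $\varepsilon n$ independently and uniformly at random (identifying each $S_i$ with its indicator vector in $\{0,1\}^n$), show that pairwise intersections concentrate tightly around their mean $\varepsilon^2 n$, and then delete one endpoint of each ``bad'' pair whose intersection is too large.

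\textbf{Step 1: Concentration of pairwise intersections.} For any fixed pair $i \neq j$, the random variable $|S_i \cap S_j|$ follows a hypergeometric distribution with mean $\mu = (\varepsilon n)^2/n = \varepsilon^2 n$. I would invoke the standard multiplicative Chernoff-type bound for the hypergeometric distribution (which, via Hoeffding's reduction from sampling without replacement to sampling with replacement, satisfies the same concentration as the binomial):
\[
\Pr\!\bigl[\,|S_i \cap S_j| > (1+\delta)\varepsilon^2 n\,\bigr] \;\leq\; \exp\!\bigl(-c_\delta\cdot\varepsilon^2 n\bigr) \;=:\; 2^{-c n},
\]
where $c = c(\varepsilon,\delta) > 0$ is a positive constant independent of $n$.

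\textbf{Step 2: Alteration.} I would then choose $N := \lfloor 2^{cn/2}\rfloor$. By linearity of expectation, the expected number of pairs $(i,j)$ with $|S_i \cap S_j| > (1+\delta)\varepsilon^2 n$ is at most
\[
\binom{N}{2} \cdot 2^{-cn} \;\leq\; \frac{N^2}{2}\cdot 2^{-cn} \;\leq\; \frac{N}{2}.
\]
Fixing a realization achieving at most the expected number of bad pairs and removing one subset from each bad pair leaves at least $N/2 = 2^{\Omega(n)}$ subsets, any two of which intersect in at most $(1+\delta)\varepsilon^2 n$ elements. Taking $\X_{n,\varepsilon,\delta}$ to be this remaining family finishes the proof.

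\textbf{Main obstacle.} There is really no significant obstacle — the argument is routine once the correct tail bound is in hand. The only care needed is to ensure that the exponent $c$ in the concentration bound is a positive constant depending only on $\varepsilon$ and $\delta$ (not on $n$), so that the counting inequality $\binom{N}{2}\cdot 2^{-cn} \leq N/2$ indeed admits $N = 2^{\Omega(n)}$. This is immediate from the standard form of the multiplicative Chernoff bound (e.g.\ $c_\delta = \delta^2/(2+\delta)$). A minor cosmetic issue is that $\varepsilon n$ may not be integral, which is handled by rounding and is irrelevant for the asymptotic statement.
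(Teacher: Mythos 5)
Your proof is correct and follows essentially the same probabilistic approach as the paper: sample random size-$\varepsilon n$ subsets, apply a Chernoff-type concentration bound to pairwise intersection sizes (mean $\varepsilon^2 n$), and conclude by a counting argument. The only difference is in the final step, where you use the alteration method (delete one endpoint from each bad pair) whereas the paper takes a direct union bound over all $\binom{N}{2}$ pairs to show that with high probability \emph{no} bad pair exists; both variants yield $2^{\Omega(n)}$ subsets with essentially the same constant in the exponent, so this is a cosmetic rather than substantive difference.
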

\begin{proof}
By Chernoff bound, the size of intersection between two random size-$\varepsilon n$ subsets $S_1,S_2$ is concentrated:
\[
\P[||S_1\cap S_2|-\varepsilon^2 n|\ge \delta \varepsilon^2 n]\le e^{-\delta^2\varepsilon^2 n/3}.
\]
We draw $2^{\theta n}$ random subsets of size $\varepsilon n$, for $\theta=\delta^2\varepsilon^2/6$. Then, by a union bound over all pairs of subsets, the size of intersection between every two random subsets is concentrated with high probability.
\end{proof}
\begin{lemma}\label{lem:low_discrepancy}
For any constant $\varepsilon>0$ and large integer constant $\ell$, let $\cR_{\ell,\varepsilon}$ be the discrete distribution supported on $\{\varepsilon^{\ell-1},\varepsilon^{\ell-2},\dots,1\}$ such that $p^{(i)} \propto \varepsilon^{\ell-i}$, where we denote $p^{(i)}:=\P[\varepsilon^{i-1}]$ (this is approximately the ``equal-revenue distribution''). Then, for any constant $\eta>0$, there exists a family of vectors $\C_{N,\ell,\varepsilon,\eta}\subset \{\varepsilon^{\ell-1},\varepsilon^{\ell-2},\dots,1\}^N$ such that
\begin{itemize}
    \item $|\C_{N,\ell,\varepsilon,\eta}|=2^{\Omega(N)}$,
    \item and moreover, for any $m=\omega(1)$ distinct vectors in $\C_{N,\ell,\varepsilon,\eta}$, for all but $\eta$ fraction of $j\in [N]$, for any $i\in[\ell]$, there are $(1\pm \eta)p^{(i)}$ fraction of these $m$ vectors whose $j$-th coordinates are $\varepsilon^{i-1}$.
\end{itemize}
\end{lemma}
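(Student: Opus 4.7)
The plan is a probabilistic construction paralleling Lemma~\ref{lem:weak_design}: draw $K = 2^{cN}$ vectors $v_1, \ldots, v_K$ independently, each having its $N$ coordinates i.i.d.\ from $\cR_{\ell,\varepsilon}$, where $c > 0$ is a small constant to be fixed below. I will show that with positive probability the resulting family $\C$ is pairwise distinct and satisfies the required concentration property for every subset of size at least some constant $m_0$ depending only on $\ell, \varepsilon, \eta$. Since $m = \omega(1)$ implies $m \geq m_0$ for $N$ sufficiently large, this yields the lemma with $|\C| = 2^{cN} = 2^{\Omega(N)}$.

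The concentration analysis proceeds in two layers of Chernoff. Fix a subset $S$ of $m$ vectors and a coordinate $j \in [N]$: the count $X_{S,j,i}$ of vectors in $S$ whose $j$-th coordinate equals $\varepsilon^{i-1}$ is distributed as $\mathrm{Bin}(m, p^{(i)})$, so a multiplicative Chernoff bound together with a union bound over $i \in [\ell]$ yields
\[q_m \;:=\; \P[\text{coordinate $j$ is ``bad'' for $S$}] \;\leq\; 2\ell\,\exp(-\alpha m),\]
where $\alpha := \eta^2 p^{(1)}/3$ is a positive constant (since $p^{(1)}$ depends only on $\ell,\varepsilon$ and is strictly positive). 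Because the coordinates are independent across $j$, the indicators $B_{S,j} := \mathbf{1}[\text{coordinate $j$ bad for $S$}]$ are mutually independent Bernoullis each of parameter at most $q_m$, so the sharp Chernoff bound for rare events gives
\[\P[E_S] \;:=\; \P\!\left[\sum_{j=1}^{N} B_{S,j} > \eta N\right] \;\leq\; \left(\frac{e\,q_m}{\eta}\right)^{\eta N} \;\leq\; \exp\!\Bigl(\eta N\log(2e\ell/\eta) - \alpha\eta m N\Bigr).\]

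The crux is the union bound over the $\binom{K}{m} \leq 2^{cNm}$ subsets of size $m$. Choosing $c := \alpha\eta/(2\ln 2)$ yields
\[\binom{K}{m}\,\P[E_S] \;\leq\; \exp\!\Bigl(N\bigl(\eta\log(2e\ell/\eta) - \alpha\eta m/2\bigr)\Bigr) \;\leq\; \exp(-\Omega(mN)) \quad\text{for all } m \geq m_0,\]
where $m_0 = O(\log(\ell/\eta)/\alpha)$ is a constant depending only on $\ell,\varepsilon,\eta$. Summing the geometric series $\sum_{m \geq m_0} \exp(-\Omega(mN))$ yields $o(1)$ total failure probability. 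Pairwise distinctness of the $v_k$'s follows similarly, because $\binom{K}{2}$ times the per-pair coincidence probability $\rho^N$, with $\rho := \sum_i (p^{(i)})^2 < 1$, is also $o(1)$ for sufficiently small $c$. I expect the main technical obstacle to be precisely this balancing act: a coarser bound such as $\P[E_S] \leq \exp(-\Omega(\eta N))$ obtained merely from $q_m \leq \eta/2$ does not decay in $m$ and therefore cannot overcome the $\binom{K}{m}$ growth when $m$ is large, which makes it essential to use the sharper KL-divergence form of Chernoff in order to extract a term linear in $m$ from the exponent.
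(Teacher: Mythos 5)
Your proposal is correct and follows essentially the same route as the paper: a random construction with i.i.d.\ coordinates from $\cR_{\ell,\varepsilon}$, a two-layer Chernoff/union-bound argument (per-$i$ Chernoff, union over $i$, then over the $\eta N$ bad coordinates using independence across $j$), a union bound over size-$m$ subsets of $\C$ using the decay in $m$ coming from $q_m$, a final sum over $m$, and a separate collision bound for distinctness. The only cosmetic difference is that you state the $(eq_m/\eta)^{\eta N}$ tail bound as a Chernoff inequality, while the paper writes it as $\binom{N}{\eta N}q_m^{\eta N}$, i.e.\ a union bound over $\eta$-fractions of $[N]$; the two are numerically the same estimate, and your diagnosis of why the $m$-dependence in the exponent is essential matches the paper's need for $\delta < \theta$.
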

\begin{proof}
We construct $\C_{N,\ell,\varepsilon,\eta}$ simply by independently sampling $2^{\delta N}$ vectors from product distribution $\cR_{\ell,\varepsilon}^N$ for arbitrarily small constant $\delta>0$, and we show that the desired properties hold with high probability. First, the probability that two random vectors have the same value at $j$-th coordinate is $p:=\sum_{i\in[\ell]} p^{(i)}\cdot p^{(i)}$ for any $j$, and therefore, the probability that the two random vectors are exactly the same is $p^N$. For $\delta<\log(1/p)/2$, by a union bound over all the pairs of random vectors, every vector is distinct with high probability. Second, for any $m$ random vectors, for any $i\in[\ell],j\in[N]$, let $m_{i,j}$ be the number of vectors whose $j$-th coordinates are $\varepsilon^{i-1}$ among the $m$ random vectors, then by Chernoff bound,
\[
    \P[|m_{i,j}-p^{(i)} m|\ge \eta \cdot p^{(i)} m]\le e^{-\eta^2 \cdot p^{(i)} m/3}.
\]
By a union bound, the probability that there exists $i\in[\ell]$ such that $m_{i,j}$ is not within $(1\pm \eta)p^{(i)}m$ is at most $\ell\cdot e^{-\eta^2 \cdot p^{(i)} m/3}$. It follows that for any fixed $\eta$ fraction of $j\in[N]$, the probability that there exists $i\in[\ell]$ such that $m_{i,j}$ is not within $(1\pm \eta)p^{(i)}$ for all $j$ among the $\eta$ fraction is at most $(\ell\cdot e^{-\eta^2 \cdot p^{(1)} m/3})^{\eta N}$ (notice that $p^{(1)}$ is the smallest among all $p^{(i)}$'s). By another union bound over all possible $\eta$ fraction of $j\in[N]$, the probability that the second property in the statement is violated for $m$ random vectors is at most
\begin{align*}
    \binom{N}{\eta N} \cdot (\ell\cdot e^{-\eta^2 \cdot p^{(1)} m/3})^{\eta N}&\le (e/\eta)^{\eta N}\cdot (\ell\cdot e^{-\eta^2 \cdot p^{(1)} m/3})^{\eta N} \\
    &=((e/\eta)\cdot \ell\cdot e^{-\eta^2 \cdot p^{(1)} m/3})^{\eta N},
\end{align*}
which is $e^{-\theta m N}$ for some constant $\theta$ that does not depend on $\delta$. Since there are $\binom{2^{\delta N}}{m}\le (e\cdot 2^{\delta N}/m)^m\le e^{\delta m N}$ distinct subsets of $m$ random vectors of $\C_{N,\ell,\varepsilon,\eta}$, by union bound, for $\delta<\theta$, for any fixed $m$, the second property in the statement is violated with probability at most $e^{-(\theta-\delta) m N}$. Finally, the proof finishes by taking a union bound over all $m=\omega(1)$, namely, $\sum_{m=\omega(1)} e^{-(\theta-\delta) m N}=o(1)$.
\end{proof}

\subsection{The main lower bound result}
Now we prove the following lower bound result for communication complexity of approximate revenue maximization with unit-demand valuations. Specifically, we construct a family of priors and show that most priors in the family are hard for all low-communication (almost) truthful-in-expectation randomized protocols to approximately maximize revenue.

\begin{theorem}\label{thm:lower_bound_unit_demand}
For every constant $\tau>0$, any $\tau$-approximate (almost) truthful-in-expectation protocol for revenue maximization, where the seller has $n$ items, and the buyers have unit-demand valuations, requires $\Omega(n)$ bits of communication in expectation.
\end{theorem}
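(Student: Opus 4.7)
The plan is to construct a family of $2^{\Omega(n)}$ unit-demand valuations from the designs in Lemmas~\ref{lem:weak_design} and~\ref{lem:low_discrepancy}, use the halfspace characterization of IC protocols from the introduction to force a short protocol to collapse many valuations into a single leaf, and use the low-discrepancy property of Lemma~\ref{lem:low_discrepancy} to bound the revenue extractable at any such collapsed leaf.

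For parameters $\varepsilon,\delta,\ell,\eta$ chosen small relative to $\tau$, let $\D$ be uniform over the unit-demand valuations $v_{x,c}(i):=c_i\cdot\mathbb{1}[i\in S_x]$, indexed by $(x,c)\in\X_{n,\varepsilon,\delta}\times\C_{n,\ell,\varepsilon,\eta}$ (instantiating Lemma~\ref{lem:low_discrepancy} with $N=n$); here $S_x\subset[n]$ is the size-$\varepsilon n$ subset encoded by $x$. The support of $\D$ has size $|\X_{n,\varepsilon,\delta}|\cdot|\C_{n,\ell,\varepsilon,\eta}|=2^{\Omega(n)}$, and since $\cR_{\ell,\varepsilon}$ assigns weight $\varepsilon^{\ell-1}/Z$ to the value~$1$ while $|S_x|=\varepsilon n$, with probability $1-o(1)$ each $v_{x,c}$ has a coordinate equal to~$1$; a ``directory'' menu that sells each type its top item at price slightly below $1$ then achieves optimal revenue $R^*=\Omega(1)$.

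Suppose toward contradiction that an IC protocol $\Pi$ of expected complexity $d=o(n)$ $\tau$-approximates $R^*$ on $\D$. Truncating $\Pi$ after $10d/\tau$ bits of communication loses at most $(\tau/10)\cdot U=\tau/10$ in revenue (using payment $\le U=1$), yielding a (not-necessarily-IC but halfspace-tree-structured) protocol $\Pi'$ of worst-case depth $O(d/\tau)$ with revenue $\ge\tau R^*-\tau/10$; averaging over $\Pi'$'s randomness, fix a deterministic halfspace decision tree $\mathsf T$ with revenue $\ge\tau R^*/2$. Since $\mathsf T$ has at most $2^{O(d/\tau)}=2^{o(n)}$ leaves while $|\D|=2^{\Omega(n)}$, pigeonhole implies that a $1-o(1)$-fraction of types lie in ``heavy'' leaves, each containing $\omega(1)$ types. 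Fix such a heavy leaf with type-set $L$, fixed allocation $\alpha\in[0,1]^n$ (with $\sum_i\alpha(i)\le1$), and fixed payment $p$; by IR, $p\le\sum_i\alpha(i)\,\E_{(x,c)\in L}[c_i\mathbb{1}[i\in S_x]]$. Writing $m_i=|\{(x,c)\in L:i\in S_x\}|$, the identity $\sum_im_i=\varepsilon n|L|$ shows at most $O(\varepsilon n)$ items satisfy $m_i\ge\varepsilon|L|/2$ (``heavy items''). On heavy items, Lemma~\ref{lem:low_discrepancy} applied to the $\omega(1)$ distinct $c$'s of $L$ gives $\E[c_i\mid i\in S_x,L]\le(1+\eta)\,\E_{\cR_{\ell,\varepsilon}}[c]=O(\ell\varepsilon^{\ell-1})$ for all but $\eta n$ coordinates; on non-heavy items, $\E[c_i\mathbb{1}[i\in S_x]\mid L]\le m_i/|L|\le\varepsilon/2$. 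Combining with $\sum_i\alpha(i)\le1$, the per-type revenue at $L$ is $O(\varepsilon+\eta)$, and hence the total revenue is $O(\varepsilon+\eta)+o(1)$, strictly below $\tau R^*/2$ for $\varepsilon,\eta$ small enough --- a contradiction, so $d=\Omega(n)$.

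The main obstacle is the conditional application of Lemma~\ref{lem:low_discrepancy}: the set of $c$-vectors in $L$ is coupled with the $x$-vectors through the halfspace tests defining $L$ (since $v_{x,c}$ depends on both). Although the lemma holds worst-case over any $\omega(1)$ distinct vectors, an airtight proof likely requires a two-stage pigeonhole --- first over $x$-components, then over $c$'s within a fixed $x$-bucket --- combined with Lemma~\ref{lem:weak_design}'s pairwise-intersection bound to ensure that $x$-buckets remain large enough to retain $\omega(1)$ distinct $c$'s after this refinement, and that the ``good'' coordinate set guaranteed by Lemma~\ref{lem:low_discrepancy} interacts well with the heavy-item set identified in the revenue analysis.
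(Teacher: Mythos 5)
Your proposal takes a genuinely different route from the paper, and unfortunately the route has a gap that I do not think is repairable as sketched.

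\textbf{Structural mismatch.} The paper does not prove the bound by pigeonholing types of a single prior into leaves of one tree. Instead it constructs a doubly-exponential \emph{family} of priors $\F=\{\D_c\}_{c\in\C}$ (one per vector $c$), shows that any fixed protocol tree can be $\tau$-approximate for at most $m=\omega(1)$ priors in the family (Claim~\ref{claim:one_protocol_per_prior}), and then counts: there are only $2^{2^{k+o(n)}}$ protocol trees of communication $k$ but $2^{2^{\Omega(n)}}$ priors, so $k=\Omega(n)$. Moreover the paper's valuations are constant on their support, $v_c^x(\{i\})=c(x)\cdot\mathds{1}[i\in x]$, so for fixed $x$ the valuations $\{v_{c_j}^x\}_j$ are \emph{scalar multiples} of one another. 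This proportionality is what powers the crucial IC step: Eq.~\eqref{eq:dominant_strategy} compares the type at level $t$ to the type at level $t+1$ using the fact that $v_{c_{j_t}}^x = v_{c_{j_{t+1}}}^x/\varepsilon_2$, giving a single-parameter/equal-revenue argument that caps the extractable payment. Your construction $v_{x,c}(i)=c_i\cdot\mathds{1}[i\in S_x]$ has coordinate-wise, not uniformly-scaled, values, so this IC machinery is unavailable.

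\textbf{The argument uses IR but never IC, which cannot suffice.} Your revenue bound at a leaf relies solely on $p\le \E_{(x,c)\in L}\left[\sum_i\alpha(i)c_i\mathds{1}[i\in S_x]\right]$, which is a (weakened) IR inequality. But Remark~\ref{remark:fadel_segal} shows a non-IC protocol for these instances with $O(\log n)$ communication achieving full revenue: the buyer names a favorite item $i$ and its value $c_i$, gets item $i$, and pays $c_i$. Its leaves are indexed by $(i,v)$ and collect exactly the types with $i\in S_x$, $c_i=v$; the allocation is $\alpha=e_i$ and the payment is $p=v$, so IR holds with equality and the leaf revenue is $v$, not $O(\varepsilon+\eta)$. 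Every one of these leaves is a halfspace in valuation space (e.g.\ $v(\{i\})\ge v-\delta$), so the ``halfspace-tree'' framing, by itself, does not exclude this protocol. Any leaf-level argument must, somewhere, use the IC constraint to say that the price cannot track the reported value so tightly; your write-up never does.

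\textbf{The low-discrepancy step fails exactly where you flag it.} You note the ``main obstacle'' that the $c$-vectors in a leaf are coupled to the halfspace tests. This is not a technical nuisance but the heart of the matter: Lemma~\ref{lem:low_discrepancy} guarantees, for any $\omega(1)$ distinct $c$'s, that all but $\eta n$ coordinates behave like $\cR_{\ell,\varepsilon}$ --- but the $\eta n$ exceptional coordinates are arbitrary, and the protocol can precisely target them. In the non-IC protocol above, the $c$'s at leaf $(i^*,v^*)$ all have $c_{i^*}=v^*$, so coordinate $i^*$ is (legitimately) an exceptional coordinate for this $m$-subset, and your bound $\E[c_{i^*}\mid\cdot]\le (1+\eta)\E_{\cR}[c]$ is false for it. The two-stage pigeonhole you suggest does not escape this: after fixing an $x$-bucket, the remaining $c$'s can still all share the same value on the few coordinates $\alpha$ puts weight on. The paper sidesteps this entirely because it applies Lemma~\ref{lem:low_discrepancy} to the $m$ different \emph{priors} $\D_{c_1},\dots,\D_{c_m}$ served by one tree --- a set of $c$'s chosen before the protocol acts on a particular prior, so no such adversarial coupling occurs --- and because its revenue bound flows from the IC inequality, not from IR.

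\textbf{A minor arithmetic slip.} From $\sum_i m_i=\varepsilon n|L|$, the number of items with $m_i\ge\varepsilon|L|/2$ is at most $2n$, not $O(\varepsilon n)$.

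In short: your approach replaces a counting argument over protocol trees (with an IC-powered per-tree bound) by a pigeonhole argument over leaves (with an IR-only per-leaf bound), and the latter is foiled by exactly the non-IC protocol that the paper exhibits to prove the implementation-vs-incentivization separation.
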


\begin{proof}
We first construct a family of prior distributions of the buyers' valuations and then argue that in order to achieve any constant approximation, a protocol tree (which we will elaborate shortly) can not be shared by many prior distributions, which implies the communication complexity lower bound by a counting argument.

\paragraph{Construction} For arbitrarily tiny constants $\varepsilon_1,\varepsilon_2,\delta_1,\eta>0$ and large integer constant $\ell$ such that $\eta,\varepsilon_1(1+\delta_1)\ll \varepsilon_2^{\ell}$, we take the set family $\X_{n,\varepsilon_1,\delta_1}$ from Lemma~\ref{lem:weak_design} and let $N:=|\X_{n,\varepsilon_1,\delta_1}|=2^{\Omega(n)}$, and then, we take the vector family $\C_{N,\ell,\varepsilon_2,\eta}$ from Lemma~\ref{lem:low_discrepancy} with $|\C_{N,\ell,\varepsilon_2,\eta}|=2^{\Omega(N)}=2^{2^{\Omega(n)}}$. We let each $x\in\X_{n,\varepsilon_1,\delta_1}$ represent a subset of items. Notice that we can fix a one-to-one mapping between the coordinates of a vector in $\C_{N,\ell,\varepsilon_2,\eta}$ and all the sets in $\X_{n,\varepsilon_1,\delta_1}$, and therefore, for any vector $c\in\C_{N,\ell,\varepsilon_2,\eta},\,x\in\X_{n,\varepsilon_1,\delta_1}$, we can denote $c(x)$ as $c$'s value at the coordinate that corresponds to $x$.

For each vector $c\in\C_{N,\ell,\varepsilon_2,\eta}$, we construct a prior distribution $\D_c$ of the buyers' valuations as follows --- First, for each $x\in\X_{n,\varepsilon_1,\delta_1}$, we define a unit-demand valuation $v^x_c:2^{[n]}\to \R_{\ge0}$ as follows: 
\[
    v_c^x(S) := 
        \begin{cases}
            0   & x \cap S=\emptyset\\
            c(x) & \text{otherwise}.
        \end{cases}
\]
Then, we let $\D_c$ be the uniform distribution over $v_c^x$'s for all $x\in\X_{n,\varepsilon_1,\delta_1}$. Finally, the family of prior distributions is $\F=\{\D_c\mid c\in\C_{N,\ell,\varepsilon_2,\eta}\}$.

\paragraph{Interpretation}
The following interpretations might be helpful for reading the proof.
Each $x\in\X_{n,\varepsilon_1,\delta_1}$ corresponds to a set of items which are (equally) valuable to the buyer with valuation $v_c^x$. Each vector $c\in\C_{N,\ell,\varepsilon_2,\eta}$ specifies for each $x\in\X_{n,\varepsilon_1,\delta_1}$ how valuable such an item is to the buyer with valuation $v_c^x$. By the design property of $\X_{n,\varepsilon_1,\delta_1}$, every $v_c^{x_1},v_c^{x_2}$ with distinct $x_1,x_2$ are interested in mostly different items. By the design property of $\C_{N,\ell,\varepsilon_2,\eta}$, for a large number of valuations $v_c^{x}$'s with distinct $c$'s but the same $x$, the values of an item in $x$ to these valuations are distributed roughly according to the ``equal revenue distribution'' $\cR_{\ell,\varepsilon_2}$ defined in Lemma~\ref{lem:low_discrepancy}.

\subsubsection*{An optimal truthful-in-expectation protocol for the hard instances}
The first step for proving the lower bound is to show that there is a truthful-in-expectation protocol that extracts the full welfare using $O(n)$ bits of communication for the family of Bayesian instances constructed above. The protocol is as follows: the buyer sends the set $x$ that corresponds to his valuation $v_c^x$ to the seller, which takes $n$ bits, and then, if $x\in\X_{n,\varepsilon_1,\delta_1}$ (otherwise the seller stops), the seller samples an item $i$ from set $x$ uniformly at random and gives the item $i$ to the buyer and charges him $c(x)$, where $c$ corresponds to the prior $\D_c$. This protocol is obviously individual rational and revenue maximizing if the buyer tells the truth. To show truthfulness in expectation, suppose the buyer's true set of interest is $x$; without loss of generality, we can assume that the buyer sends some $x'\in\X_{n,\varepsilon_1,\delta_1}$, because otherwise, the seller stops, and the buyer gets net utility 0, which is not better than telling the true $x$. Moreover, if the buyer sends $x'\neq x$, by the design property of $\X_{n,\varepsilon_1,\delta_1}$, he receives an item in $x$ with probability at most $\varepsilon_1(1+\delta_1)$. Hence in expectation, the net utility is at most $\varepsilon_1(1+\delta_1) c(x)-c(x')\le \varepsilon_1(1+\delta_1)-\varepsilon_2^{\ell-1}< 0$, where the first inequality is due to $c(x)\le 1$ and $c(x')\ge \varepsilon_2^{\ell-1}$, and the second is due to our choice of parameters. Thus, sending $x$ instead of $x'$ is strictly better in expectation.

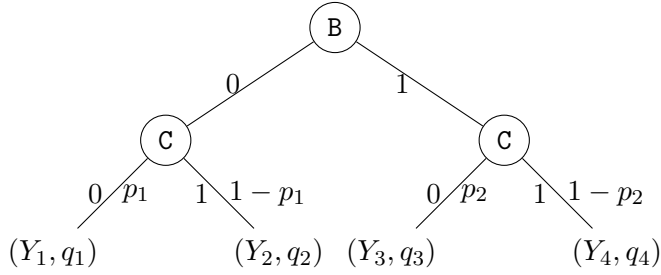
\begin{figure}
    \centering
    \begin{tikzpicture}[tree node/.style={circle,draw},
        level 2/.style={sibling distance=3cm}]]
        \node[tree node]{\texttt{B}}
            child { 
                node[tree node] {\texttt{C}} 
                child {
                    node {$(Y_1,q_1)$}
                    edge from parent node[left] {0} node[right] {$p_1$}
                }
                child {
                    node {$(Y_2,q_2)$}
                    edge from parent node[left] {1} node[right] {$1-p_1$}
                }
                edge from parent node[left] {0}
            }
            child [missing]
            child [missing]
            child { 
                node[tree node] {\texttt{C}} 
                child {
                    node {$(Y_3,q_3)$}
                    edge from parent node[left] {0} node[right] {$p_2$}
                }
                child {
                    node {$(Y_4,q_4)$}
                    edge from parent node[left] {1} node[right] {$1-p_2$}
                }
                edge from parent node[left] {1}
            };
    \end{tikzpicture}
    \caption{A depth-2 protocol tree.}
    \label{fig:protocol_tree}
\end{figure}

\subsubsection*{Representing a protocol as a protocol tree per prior distribution}
Observe that once the prior distribution is fixed, a protocol can be viewed as a protocol tree. See Figure~\ref{fig:protocol_tree} for example. Without loss of generality, the protocol tree starts with the root \texttt{B} representing the buyer's round and then alternates between the buyer \texttt{B} and the seller \texttt{C} (Chance). At each round, represented by a node, the buyer or the seller can choose to send a bit $0$, represented by left edge, or bit $1$, represented by right edge, to the other. At a leaf, both players agree on a set of items $Y$ allocated to the buyer and a payment $q$ to the seller. The protocol is possibly randomized, and hence, at a seller's round, the seller\footnote{We assume that the seller is not strategic in the private-coin model. In the public-coin model, the seller can not be strategic, because his responses can be inferred from the public randomness and the pre-specified protocol tree, and thus, he can keep silent unless he observes that the buyer is cheating.
} can send bit $0$ with probability $p$ and send bit $1$ with probability $1-p$, which are represented by the weights on the edges. At a buyer's round, the buyer's strategy depends on his valuation, but we can assume without loss of generality that the buyer always deterministically chooses a bit to send, because the buyer is strategic and hence sending the bit that has better net utility in expectation (sending the bit that maximizes the seller's revenue if both choices are (almost) equal, and sending bit $0$ if it is still a tie) is a (almost) dominant strategy for the buyer that maximizes the seller's revenue among all (almost) dominant strategies. Therefore, the buyer's prescribed (almost) dominant strategy can be deterministically decided by the protocol tree and his valuation.

To make the proof easier, we show that we can without loss of generality assume some nice properties for the protocol trees, and we will only consider such protocol trees afterwards.
\begin{claim}\label{claim:bounded_protocol_trees}
Any (almost) truthful-in-expectation protocol with $O(k)$ communication in expectation for our hard instance can be changed (with arbitrarily small loss of the approximation factor) such that
\begin{itemize}
    \item the protocol tree has $O(k)$ depth,
    \item and moreover, the payment at any leaf of the protocol tree is $2^{O(k)}$.
\end{itemize}
\end{claim}
Suppose a protocol uses $\alpha k$ bits of communication in expectation where $\alpha$ is a positive constant. For an arbitrarily large constant $\beta$, by Markov's inequality, the protocol takes $\ge\beta k$ communication with probability at most $\gamma:=\alpha/\beta$.
Observe that if we trim all the nodes at level $\ge \beta k$ of the protocol tree $\cT$, the buyer's expected utility (before payment) is at least $1-\gamma$ fraction of that for $\cT$ (for our instance, the loss is at most $\gamma$). If we further trim every node that is reached with probability $\le 4^{-\beta k}$ for any buyer, the buyer's expected utility loses at most another $2^{-\beta k}$, because there are at most $2^{\beta k}$ nodes left after the first trimming step. Since we introduce new leaves after trimming, we need to specify the allocation and the payment for each of them. For each new leaf, we simply let its allocation be the empty set, and we let its payment be the least possible expected payment at this node in $\cT$ (that is, the minimum expected payment achieved by the worst possible buyer's responses in the subtree rooted at this node in $\cT$).

After the above changes, the first property obviously holds for the new protocol tree $\cT'$, and the second also holds, because if any leaf has payment larger than $4^{\beta k}$, then the probability of reaching that leaf (or node) in the original $\cT$ for any buyer is at most $4^{-\beta k}$ (otherwise the expected payment is greater than 1 for a buyer that reaches this node with probability $>4^{-\beta k}$, which exceeds the largest possible buyer's value and hence violates individual rationality), and this leaf should have been trimmed. It remains to show that the approximation factor is decreased arbitrarily little by the above changes.

To prove this, consider the buyer's (almost) dominant strategy $s^*$ in $\cT$, we change the strategy in the way that the buyer makes the same response as $s^*$ at every node that will not be trimmed by the above steps and makes the worst possible responses (which minimize the expected payment) in the subtree rooted at every node that will be trimmed. This results in a new strategy $s$ that gives the buyer almost the same expected utility as $s^*$ (as we have shown, the loss is at most $\gamma+2^{-\beta k}$). It follows that the expected payment for $s$ can only be $\gamma+2^{-\beta k}$ (plus another negligible error if the original protocol is only almost truthful) less than that for $s^*$, since otherwise the expected net utility of $s$ is significantly better than $s^*$. Moreover, the expected payment for $s$ can not be more than that for $s^*$ by definition of $s$, and hence, the expected net utility of $s$ is same as that of $s^*$ up to negligible error. Furthermore, observe that $s$ (ignore $s$'s responses at trimmed nodes) gets the same expected utility and payment for the buyer in $\cT'$. If $s$ is an almost dominant strategy in $\cT'$ (which indeed is as we will show), then we are done because we have shown the expected payment for $s$ in $\cT'$ (or $\cT$) is same (up to negligible error) as that for $s^*$ in $\cT$.

To see $s$ is an almost dominant strategy in $\cT'$, suppose for contradiction there is another strategy $s'$ with an non-negligible improvement of expected net utility over $s$ in $\cT'$. We extend $s'$ to a strategy for $\cT$ by letting it make worst possible response (which minimize the expected payment) for the nodes that will be trimmed in $\cT'$. Note that the extended $s'$ has the same expected net utility in $\cT$ as that in $\cT'$, which is significantly better than $s$'s expected net utility in $\cT'$ (and hence $s$ or $s^*$'s expected net utility in $\cT$). This contradicts that $s^*$ is a (almost) dominant strategy in $\cT$.

\subsubsection*{One protocol tree can not be shared by many priors}
Now we show the main claim that leads to the lower bound result.
\begin{claim}\label{claim:one_protocol_per_prior}
For any constant $\tau>0$ and any $m=\omega(1)$, any single protocol tree can only achieve $\tau$ approximation on $\le m$ priors in $\F$.
\end{claim}

Assume for contradiction that there are $m=w(1)$ priors $\D_{c_1},\dots,\D_{c_m}$ in $\F$ sharing the same protocol tree. By Lemma~\ref{lem:low_discrepancy}, for all but $\eta$ fraction of $x\in \X_{n,\varepsilon_1,\delta_1}$, the empirical distribution of $c_i(x)$'s for $i\in [m]$ is close to $\cR_{\ell,\varepsilon_2}$ defined in Lemma~\ref{lem:low_discrepancy}, namely, the number of $i$'s such that $c_i(x)=\varepsilon_2^{t-1}$ is $(1\pm\eta)p^{(t)}m$, where $p^{(t)}\propto \varepsilon_2^{\ell-t}$. In the rest of the proof of Claim~\ref{claim:one_protocol_per_prior}, we show that for any such $x$, the average revenue over valuations $v_{c_i}^x$ for all $i\in[m]$ achieved by the protocol tree is at most $\theta:=\frac{(\varepsilon_2+1/\ell)(1+\eta)}{1-\eta}$ fraction of the optimum. Notice that $\theta$ is a constant that we can make arbitrarily small. This will finish the proof of the claim, because for at least one of $\D_{c_1},\dots,\D_{c_m}$, the protocol tree achieves no more than the average of the expected revenues for $\D_{c_1},\dots,\D_{c_m}$, which is at most $\tau=\theta+\frac{\eta\varepsilon_2^{1-\ell}}{1-\eta}$ fraction of the optimal revenue (we generously assume that it achieves full revenue on the $\eta$ fraction of $x\in \X_{n,\varepsilon_1,\delta_1}$ that is excluded from the above analysis, and the full revenue for any $x$ from this $\eta$ fraction is at most $1$, which is at most $\varepsilon_2^{1-\ell}$ times the full revenue of any $x'$ from the other $1-\eta$ fraction), and $\frac{\eta\varepsilon_2^{1-\ell}}{1-\eta}$ is arbitrarily small by our choice of parameters.

Now consider any such $x$ that the empirical distribution of $c_i(x)$'s for $i\in [m]$ is close to $\cR_{\ell,\varepsilon_2}$, and let $C_{t}$ be the set of $c_i$'s with $c_i(x)=\varepsilon_2^{t-1}$. Without loss of generality, the buyers with valuation $v_{c_{j_t}}^x$ for all $c_{j_t}\in C_{t}$ use the same dominant strategy. Moreover, consider any $c_{j_t}\in C_t$ and any $c_{j_{t+1}}\in C_{t+1}$, we denote the expected utility and payment achieved by the prescribed dominant strategy for $v_{c_{j_{t}}}^x$ by $u_t$ and $q_t$, respectively, and analogously, we denote $u_{t+1}$ and $q_{t+1}$ for $v_{c_{j_{t+1}}}^x$. If the buyer with valuation $v_{c_{j_t}}^x$ plays the strategy for $v_{c_{j_{t+1}}}^x$ instead, he will get expected utility $u_{t+1}/\varepsilon_2$ and payment $q_{t+1}$, because by definition $v_{c_{j_{t}}}^x=v_{c_{j_{t+1}}}^x/\varepsilon_2$. By definition of (almost) dominant strategy, we have the following inequality (the inequality holds approximately when we consider almost truthful-in-expectation protocols, and the error is negligible to the later derivations)
\begin{equation}\label{eq:dominant_strategy}
    \frac{u_{t+1}}{\varepsilon_2}-q_{t+1}\le u_t-q_t.
\end{equation}
Moreover, by individual rationality,
\begin{equation}\label{eq:indvidual_rationality}
    q_{t+1}\le u_{t+1},
\end{equation}
and it follows that
\begin{align}
    q_t&\le u_t-\frac{u_{t+1}}{\varepsilon_2}+q_{t+1} && \text{(Rearranging Eq.~\eqref{eq:dominant_strategy})} \nonumber \\
    &\le u_t-\frac{u_{t+1}}{\varepsilon_2}+u_{t+1} && \text{(By Eq.~\eqref{eq:indvidual_rationality})} \nonumber \\
    &= u_t-u_{t+1}\left(\frac{1}{\varepsilon_2}-1\right). \label{eq:bound_q_t}
\end{align}
Furthermore, because $c_i(x)$'s for $i\in [m]$ are distributed like $\cR_{\ell,\varepsilon_2}$, the sum of the revenues obtained from the $v_{c_i}^x$'s for all $i\in[m]$ is at most (up to a $(1+\eta)$ multiplicative error)
\begin{align*}
    \sum_{t=1}^{\ell} mp^{(t)} q_t &\le \sum_{t=1}^{\ell-1} mp^{(t)}\left(u_t-u_{t+1}\left(\frac{1}{\varepsilon_2}-1\right)\right) + mp^{(\ell)}u_{\ell}  && \text{(By Eq.~\eqref{eq:bound_q_t} and Eq.~\eqref{eq:indvidual_rationality})} \\
    &= mp^{(1)}u_1+m\sum_{t=2}^{\ell} u_{t}\left(p^{(t)}-\frac{p^{(t-1)}}{\varepsilon_2}+p^{(t-1)}\right) && \text{(Rearranging the sum)}\\
    &= mp^{(1)}u_1+m\sum_{t=2}^{\ell} u_{t} p^{(t-1)} &&\text{(By definition of $p^{(t)}$)}\\
    &= mp^{(1)}u_1+ m\varepsilon_2\sum_{t=2}^{\ell} u_{t} p^{(t)} \\
    &\le mp^{(1)}+m\varepsilon_2\sum_{t=2}^{\ell} p^{(t)} \varepsilon_2^{t-1} &&\text{(By $u_t\le \varepsilon_2^{t-1}$)},
\end{align*}
which is at most $\varepsilon_2$ fraction of $\sum_{t=1}^{\ell} mp^{(t)} \varepsilon_2^{t-1}$ plus $mp^{(1)}$, but $mp^{(1)}$ is only $1/\ell$ fraction of $\sum_{t=1}^{\ell} mp^{(t)} \varepsilon_2^{t-1}$ by its definition. Because $c_i(x)$'s for $i\in [m]$ are distributed like $\cR_{\ell,\varepsilon_2}$, the optimal total revenue we can get from all the $v_{c_i}^x$ for $i\in[m]$ (which is equal to their total value) is at least $(1-\eta)\sum_{t=1}^{\ell} mp^{(t)} \varepsilon_2^{t-1}$, and hence, the average revenue achieved by the protocol tree on valuations $v_{c_i}^x$ for $i\in[m]$ is at most $\frac{(\varepsilon_2+1/\ell)(1+\eta)}{1-\eta}$ fraction of the optimum.

\subsubsection*{Finishing the proof by a counting argument}
For any constant $\tau>0$, suppose that the communication complexity of a $\tau$-approximate truthful-in-expectation protocol is $k=o(n)$, and without loss of generality we assume that the protocol always uses up $k$ bits. We count how many protocol trees we can have. Note that a protocol tree is determined by the $(Y,q)$ pairs on the leaves and the probabilities on the edges. Without loss of generality, we can assume that the payments and the probabilities have finite precision, namely, the probabilities are rounded to $\{i/4^n\mid i=0,1,\dots,4^n\}$, and the payments are rounded to $\{i/4^n\mid i=0,1,\dots,2^{O(n)}\}$. To see this, first observe that rounding can only change the payment at any leaf by at most $1/4^n$, and
similarly, it can only change the probability of reaching any leaf by $O(1/4^n)$, and therefore, it only changes the expected utility and the expected payment for the buyer by at most $O(2^k/4^n)=O(1/2^n)$. As we have noted along the proof, the analysis works for almost truthful-in-expectation protocols, which tolerates this extra $O(1/2^n)$ error.

Therefore, there are at most $2^n$ choices of $Y$ and at most $2^{O(n)}$ choices of $q$, which implies at most $2^{O(n)}$ choices of $(Y,q)$ at each leaf, and there are at most $4^n$ choices of the probability on each edge. Since the depth of the protocol tree is no more than $k$, there are $2^k$ leaves and $2^{k+1}$ edges at most. Altogether, there are at most $(2^{O(n)})^{2^k}\cdot(4^n)^{2^{k+1}}=2^{2^{k+o(n)}}$ possible protocol trees. Furthermore, by Claim~\ref{claim:one_protocol_per_prior}, these protocol trees can only beat $\tau$-approximation on at most $2^{2^{k+o(n)}}\cdot m$ priors in total for any $m=\omega(1)$, but there are $2^{2^{\Omega(n)}}$ priors in $\F$. Hence, most priors in $\F$ are hard for all the $o(n)$-communication protocols.

\end{proof}

\subsection{Separating the complexity of implementing and incentivizing}

\begin{remark}\label{remark:fadel_segal}
There is an $O(\log n)$-communication implementation of the optimal protocol for our hard instances. Combining with the lower bound, this shows an exponential separation between communication complexity of almost truthful-in-expectation implementation and that of non-truthful implementation for this protocol, even when the buyer's valuation has constant precision.
\end{remark}
\begin{proof}
A more communication-efficient non-truthful implementation is that the buyer randomly chooses an item $i$ of interest and sends $i$ and $c(x)$ to the seller, and then the seller gives the item $i$ to the buyer and charges the buyer $c(x)$, which only uses $O(\log n)$ bits of communication.
\end{proof}

\section{Communication lower bound for gross-substitutes valuations}
In this section, we sketch how to apply our techniques to establishing sub-exponential communication complexity lower bound for gross substitutes valuations. The specific gross substitutes valuations we use in our hard instances are the matroid rank functions provided by the following lemma.

\begin{lemma}[{\cite[Theorem 1]{BalcanHarvey18}}]\label{lem:matroid_rank}
For any $b\ge 8$ with $b=2^{o(n^{1/3})}$, there exists a family of sets $\cA\subseteq \{0,1\}^n$ and a family of matroids $\cM=\{M_{\cB}\mid \cB\subseteq\cA\}$ with the following properties:
\begin{itemize}
    \item $|\cA|=b$ and $|A|=n^{1/3}$ for every $A\in \cA$.
    \item For every $\cB\subseteq\cA$ and every $A\in\cA$, we have
    \begin{equation}
    \textnormal{rank}_{M_{\cB}}(A) =
    \begin{cases}
        |A| & A\in\cB \\
        8\log b &    A\in\cA\setminus\cB.
    \end{cases}
    \end{equation}
\end{itemize}
\end{lemma}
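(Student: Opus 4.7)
I plan to prove the lemma in two stages. Stage one produces the family $\cA$ using the probabilistic method, in the spirit of Lemma~\ref{lem:weak_design} of this paper. Stage two constructs, for every $\cB \subseteq \cA$, a matroid $M_\cB$ on $[n]$ whose rank function takes the prescribed values on every element of $\cA$; this is the genuinely matroid-theoretic content of the lemma and closely follows the construction of \cite{BalcanHarvey18}.

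\textbf{Stage 1: designing $\cA$.} Sample $b$ independent uniformly random size-$n^{1/3}$ subsets $A_1,\dots,A_b$ of $[n]$. For any fixed pair $i\ne j$, the intersection $|A_i\cap A_j|$ is stochastically dominated by a sum of $n^{1/3}$ Bernoullis of mean $n^{-2/3}$, so its expectation is at most $n^{-1/3}$ and the standard tail bound for small-mean Bernoulli sums yields $\P[|A_i\cap A_j|\ge 4\log b] \le (e/(4\log b))^{4\log b}$. Union bounding over the $\binom{b}{2}$ pairs and using $b=2^{o(n^{1/3})}$ shows that, with positive probability, every pairwise intersection is at most $4\log b$; fix such a family $\cA$. (This is the analogue of the weak-design step in Lemma~\ref{lem:weak_design}.)

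\textbf{Stage 2: constructing $M_\cB$.} The goal is a matroid on $[n]$ in which each $A\in\cB$ is independent (hence of rank $|A|=n^{1/3}$) while each $A\in\cA\setminus\cB$ has rank exactly $8\log b$. The naive candidate $\cI_\cB=\{S:|S\cap A|\le 8\log b\text{ for all }A\in\cA\setminus\cB\}$ delivers the correct rank values but, being only an intersection of partition-matroid constraints, can violate the exchange axiom whenever the excluded sets overlap (a small three-element example already suffices to defeat it). Following \cite{BalcanHarvey18}, I would instead build a sparse-paving-style matroid by prescribing its circuits: each excluded $A\in\cA\setminus\cB$ contributes circuits of size $8\log b + 1$ lying inside $A$, while no subset of any $A\in\cB$ is a circuit. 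The design property $|A\cap A'|\le 4\log b$ ensures that the circuits associated to distinct excluded sets are nearly disjoint, which is what lets all the local constraints be imposed simultaneously.

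\textbf{Main obstacle.} The bottleneck is verifying that this really defines a matroid, i.e.\ that circuit elimination (equivalently the exchange axiom) survives the simultaneous ``capping'' at every $A\in\cA\setminus\cB$. The essential counting argument: given independent $S,T$ with $|S|<|T|$, the number of $A\in\cA\setminus\cB$ that are tight against $S$ (meaning $|S\cap A|=8\log b$) is controlled by a pigeonhole count against $|S|$, and since each tight $A$ shares at most $4\log b$ elements with any other set in $\cA$, the elements of $T\setminus S$ ``blocked'' from being added to $S$ cannot exhaust $T\setminus S$. The gap between $4\log b$ (pairwise intersection) and $8\log b$ (rank cap) is precisely the slack that makes this counting go through; once matroidality is established, the prescribed values $r_\cB(A)=|A|$ for $A\in\cB$ and $r_\cB(A)=8\log b$ for $A\in\cA\setminus\cB$ fall out directly from the design of $\cA$.
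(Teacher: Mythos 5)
This lemma is \emph{not proved in the paper}: it is imported verbatim as Theorem~1 of \cite{BalcanHarvey18}, and the authors do no more than restate it. So there is no ``paper's own proof'' to compare against, only the external source. With that caveat, here is an assessment of your reconstruction.

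Your Stage~1 is fine and essentially standard: a random family of size-$n^{1/3}$ sets has all pairwise intersections $O(\log b)$ with high probability, exactly as in Lemma~\ref{lem:weak_design}. The problem is Stage~2. Prescribing circuits to be all $(8\log b+1)$-element subsets of excluded sets does \emph{not} yield a matroid, and the slack between $4\log b$ and $8\log b$ does not rescue it. Concretely, circuit elimination (C3) fails: take distinct $A_1,A_2\in\cA\setminus\cB$ with $A_1\cap A_2\ni e$, choose $C_1\subseteq A_1$ and $C_2\subseteq A_2$ of size $8\log b+1$ with $C_1\cap C_2=\{e\}$ (possible since $|A_1\cap A_2|\le 4\log b$ leaves room). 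Then $(C_1\cup C_2)\setminus\{e\}$ meets $A_1$ in exactly $8\log b$ elements, meets $A_2$ in exactly $8\log b$, and meets every other $A\in\cA$ in at most $|A_1\cap A|+|A_2\cap A|\le 8\log b$ elements; so it contains no circuit, and your collection violates the circuit axiom outright. This is not a verification you have deferred — it is a verification that cannot succeed for this construction. The actual Balcan--Harvey argument does not build a sparse paving matroid by listing circuits; it defines the matroid implicitly from a monotone submodular set function built out of the family $\cA\setminus\cB$ (Edmonds' submodular-cover characterization of matroid independent sets), and the factor-of-two gap between the cap $8\log b$ and the intersection bound $4\log b$ is what makes that function submodular and gives $\textnormal{rank}_{M_\cB}(A)$ the stated values. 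Your ``pigeonhole slack'' intuition is pointing at the right quantity, but the object it must be applied to is the submodular function, not a hand-specified circuit family. As written, Stage~2 constructs a non-matroid, so the proposal does not establish the lemma.
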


\begin{theorem}\label{thm:lower_bound_gross_substitutes}
For every constant $\tau>0$, any $\tau$-approximate (almost) truthful-in-expectation protocol for revenue maximization, where the seller has $n$ items, and the buyers have gross substitutes valuations, requires $2^{\Omega(n^{1/3})}$ bits of communication in expectation.
\end{theorem}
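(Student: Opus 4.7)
The plan is to mimic the proof of Theorem~\ref{thm:lower_bound_unit_demand}, replacing single items by the ``target sets'' $A\in\cA$ produced by Lemma~\ref{lem:matroid_rank}. Fix a sufficiently large constant $C$ and set $b:=2^{n^{1/3}/C}$ (so $b=2^{o(n^{1/3})}$); this gives $\cA$ of size $b$ with $|A|=n^{1/3}$ and a matroid family $\{M_\cB\}_{\cB\subseteq\cA}$. Applying Lemma~\ref{lem:weak_design} with the ground set $\cA$ (i.e.\ with the ``$n$'' of the lemma taken to be $b$) produces a family $\X\subseteq 2^\cA$ of size-$\varepsilon_1 b$ subsets with $|\X|=2^{\Omega(b)}$ and pairwise intersection at most $(1+\delta_1)\varepsilon_1^2 b$. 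Taking $N:=|\X|$ and invoking Lemma~\ref{lem:low_discrepancy} yields a vector family $\C=\C_{N,\ell,\varepsilon_2,\eta}$ with $|\C|=2^{\Omega(N)}=2^{2^{\Omega(b)}}$. For each $\cB\in\X$ and $c\in\C$ I define the valuation
$$v_c^\cB(S):=c(\cB)\cdot \frac{\textnormal{rank}_{M_\cB}(S)}{n^{1/3}},$$
which is a positive scaling of a matroid rank function, hence gross substitutes. Let $\D_c$ be uniform over $\{v_c^\cB\}_{\cB\in\X}$ and set $\F:=\{\D_c\}_{c\in\C}$.

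Next I will exhibit an (almost) truthful-in-expectation protocol for each $\D_c$ that uses $2^{O(n^{1/3})}$ communication and extracts essentially the full revenue: the buyer announces $\cB$, the seller draws a uniformly random $A\in\cB$, allocates the items of $A$, and charges $c(\cB)$. A truth-teller realizes value $c(\cB)\cdot\textnormal{rank}_{M_\cB}(A)/n^{1/3}=c(\cB)$ by Lemma~\ref{lem:matroid_rank}, paying the same amount. A buyer who reports $\cB'\neq \cB$ receives $A\in\cB$ only with probability at most $(1+\delta_1)\varepsilon_1$ (by the intersection bound of Lemma~\ref{lem:weak_design}) and otherwise obtains value at most $c(\cB)\cdot 8\log b/n^{1/3}$ (the ``other'' case in Lemma~\ref{lem:matroid_rank}), so for $C$ large enough relative to $\varepsilon_2^{1-\ell}$ both terms are dominated by the smallest possible charge $\varepsilon_2^{\ell-1}$ and lying is strictly worse.

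The heart of the proof is the gross-substitutes analog of Claim~\ref{claim:one_protocol_per_prior}. I will follow the unit-demand argument essentially verbatim, substituting every ``$x\in\X_{n,\varepsilon_1,\delta_1}$'' by ``$\cB\in\X$'' and using the key observation that, for a fixed $\cB$, the valuations $\{v_{c_i}^\cB\}_i$ are positive rescalings of one another by the scalar $c_i(\cB)$, since $\textnormal{rank}_{M_\cB}$ is common to all of them. Assuming $m=\omega(1)$ priors $\D_{c_1},\dots,\D_{c_m}$ share one protocol tree, Lemma~\ref{lem:low_discrepancy} guarantees that, for all but an $\eta$-fraction of $\cB\in\X$, the empirical distribution of $\{c_i(\cB)\}_i$ is $(1\pm\eta)$-close to the equal-revenue distribution $\cR_{\ell,\varepsilon_2}$. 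The same chain of (almost) IC and IR inequalities (the rescaling argument $(u_{t+1}/\varepsilon_2)-q_{t+1}\le u_t-q_t$) then yields that the average revenue extracted from $\{v_{c_i}^\cB\}_i$ is at most $\tfrac{(\varepsilon_2+1/\ell)(1+\eta)}{1-\eta}$ of the optimum for such $\cB$. The concluding counting argument is identical in form: after the trimming of Claim~\ref{claim:bounded_protocol_trees} a tree of expected depth $k$ admits at most $2^{O(n\cdot 2^k)}$ realizations, whereas $\log|\F|=2^{\Omega(b)}$, forcing $k\ge \Omega(b)=2^{\Omega(n^{1/3})}$.

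The main potential obstacle is the parameter juggling for the IC check of the truthful protocol: one must choose $C,\varepsilon_1,\delta_1,\varepsilon_2,\eta,\ell$ simultaneously so that $1/C=\log b/n^{1/3}$ and $(1+\delta_1)\varepsilon_1$ are both much smaller than $\varepsilon_2^{\ell-1}$, while preserving $|\X|=2^{\Omega(b)}$ and ensuring the low-discrepancy guarantee of Lemma~\ref{lem:low_discrepancy} is meaningful for $m=\omega(1)$. This is a somewhat delicate but ultimately routine balancing, since the two design lemmata give us asymptotically free choice of the small constants once $C$ is fixed.
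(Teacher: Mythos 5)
Your proposal follows exactly the paper's construction and argument: the same choice of $\cA$ from Lemma~\ref{lem:matroid_rank}, the same application of Lemma~\ref{lem:weak_design} over the ground set $\cA$ and Lemma~\ref{lem:low_discrepancy} on top of it, the same scaled matroid-rank valuations $v_c^\cB(S)=c(\cB)\,\textnormal{rank}_{M_\cB}(S)/n^{1/3}$, the same $O(b)$-bit truthful protocol with the same intersection-bound IC check, and the same transfer of Claim~\ref{claim:one_protocol_per_prior} and the counting argument via the rescaling observation. The proof is correct and is, up to the routine constant-juggling you flag at the end, the paper's own proof.
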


\begin{proof}
The proof follows the same strategy as the proof of Theorem~\ref{thm:lower_bound_unit_demand}.
We first construct a family of prior distributions over matroid rank valuations.

\paragraph{Construction} For any $b=2^{o(n^{1/3})}$, we let $\cA$ be the set family of size $b$ provided by Lemma~\ref{lem:matroid_rank}. For arbitrarily tiny constants $\varepsilon_1,\varepsilon_2,\delta_1,\eta>0$ and large integer constant $\ell$ such that $\eta,\varepsilon_1(1+\delta_1)\ll \varepsilon_2^{\ell}$, we take the set family $\X_{b,\varepsilon_1,\delta_1}$ by Lemma~\ref{lem:weak_design} and let $N:=|\X_{b,\varepsilon_1,\delta_1}|=2^{\Omega(b)}$, and then, we take the vector family $\C_{N,\ell,\varepsilon_2,\eta}$ from Lemma~\ref{lem:low_discrepancy} with $|\C_{N,\ell,\varepsilon_2,\eta}|=2^{\Omega(N)}=2^{2^{\Omega(b)}}$. We let each $x\in\X_{b,\varepsilon_1,\delta_1}$ represent a sub-family $\cB_x\subseteq\cA$ (each coordinate of $x$ corresponds to a distinct set $A\in\cA$, and this coordinate has value $1$ iff $A\in\cB_x$). We can fix a one-to-one mapping between the coordinates of a vector in $\C_{N,\ell,\varepsilon_2,\eta}$ and all the $x\in\X_{b,\varepsilon_1,\delta_1}$, and for any vector $c\in\C_{N,\ell,\varepsilon_2,\eta},\,x\in\X_{b,\varepsilon_1,\delta_1}$, we denote $c(x)$ as $c$'s value at the coordinate that corresponds to $x$.

For each vector $c\in\C_{N,\ell,\varepsilon_2,\eta}$, we construct a prior distribution $\D_c$ of the buyers' valuations as follows --- First, for each $x\in\X_{b,\varepsilon_1,\delta_1}$, we define a scaled matroid rank valuation $v^x_c:2^{[n]}\to \R_{\ge0}$ as
\[
    v_c^x(S) = \frac{c(x)}{n^{1/3}}\cdot \textnormal{Rank}_{M_{{\cB}_x}}(S),
\]
where $\textnormal{Rank}_{M_{{\cB}_x}}$ is the matroid rank function from Lemma~\ref{lem:matroid_rank}.
Then, we let $\D_c$ be the uniform distribution over $v_c^x$'s for all $x\in\X_{b,\varepsilon_1,\delta_1}$. Finally, the family of prior distributions is $\F=\{\D_c\mid c\in\C_{N,\ell,\varepsilon_2,\eta}\}$.

\paragraph{Interpretation}
In this instance, each $x\in\X_{b,\varepsilon_1,\delta_1}$ corresponds to a family of subsets of items that the buyer with valuation $v_c^x$ likes the most. Each vector $c\in\C_{N,\ell,\varepsilon_2,\eta}$ specifies for each $x\in\X_{b,\varepsilon_1,\delta_1}$ how valuable such a subset of items is to the buyer with valuation $v_c^x$. By the design property of $\X_{b,\varepsilon_1,\delta_1}$, every $v_c^{x_1},v_c^{x_2}$ with distinct $x_1,x_2$ are interested in mostly different subsets of items. By the design property of $\C_{N,\ell,\varepsilon_2,\eta}$, for a large number of valuations $v_c^{x}$'s with distinct $c$'s but the same $x$, the values of a subset of items in $x$ to these valuations are distributed roughly according to the ``equal revenue distribution'' $\cR_{\ell,\varepsilon_2}$ defined in Lemma~\ref{lem:low_discrepancy}.

\subsubsection*{An optimal truthful-in-expectation protocol for the hard instances}
We show that there is a truthful-in-expectation protocol that achieves optimal full revenue using $O(b)$ bits of communication for the family of Bayesian instances $\F$. The protocol is as follows: the buyer sends the $x$ corresponding to his valuation $v_c^x$ to the seller, which takes $b$ bits, and then, if $x\in\X_{b,\varepsilon_1,\delta_1}$ (otherwise the seller stops), the seller samples a set $A\in\cB_x$ uniformly at random and gives the items in $A$ to the buyer and charges him $c(x)$. The protocol is individual rational and achieves the full revenue if the buyer tells the truth. To show it is truthful in expectation, suppose the buyer sends some other $x'\in\X_{b,\varepsilon_1,\delta_1}$ (otherwise, he always gets net utility 0). Then the probability that he receives a set in $\cB_x$ is at most $\varepsilon_1(1+\delta_1)$, and moreover, if he receives a set in $\cA\setminus\cB_x$, the value he gets is $\frac{8\log b}{n^{1/3}}=o(1)$. Hence in expectation, the buyer's net utility is at most $\varepsilon_1(1+\delta_1)c(x)+o(1)-c(x')\le \varepsilon_1(1+\delta_1)+o(1)-\varepsilon_2^{\ell-1}<0$. Therefore, sending $x$ instead of $x'$ is a better strategy for the buyer. \\

The rest of the proof is same as the corresponding part of the proof of Theorem~\ref{thm:lower_bound_unit_demand}, namely, we can analogously show that any constant approximate (almost) truthful-in-expectation protocol for $\F$ requires $2^{2^{\Omega(b)}}$ distinct protocol trees, and it follows by the same counting argument that the protocol needs $\Omega(b)$ bits communication.

\end{proof}

\begin{remark}
For XOS valuations with $n$ items, the communication complexity lower bound for any constant approximation can be improved to $2^{\Omega(n)}$.
\end{remark}
\begin{proof}
The proof is basically the same as that of Theorem~\ref{thm:lower_bound_gross_substitutes}. We point out the difference in the construction of hard instances. Given a set family $\X_{n,\varepsilon_0,\delta_0}$ for arbitrarily small constants $\varepsilon_0, \delta_0>0$ from Lemma~\ref{lem:weak_design}, we let $b:=|\X_{n,\varepsilon_0,\delta_0}|=2^{\Omega(n)}$. Now consider a set family $\X_{b,\varepsilon_1,\delta_1}$ again from Lemma~\ref{lem:weak_design}. For all $x\in\X_{b,\varepsilon_1,\delta_1}$, $x$ can represent a set family $\cB_x$ of sets in $\X_{n,\varepsilon_0,\delta_0}$ (each coordinate of $x$ corresponds to a set $A\in\X_{n,\varepsilon_0,\delta_0}$, and this coordinate has value 1 iff $A\in \cB_x$). Instead of the matroid rank functions $\textnormal{Rank}_{M_{{\cB}_x}}$, here we use the binary XOS functions $\textnormal{BXOS}_{{\cB}_x}$ in our construction, which are given by
$$\textnormal{BXOS}_{{\cB}_x}(S)=\max_{A\in{\cB}_x} |A\cap S|.$$
Let $N:=|\X_{b,\varepsilon_1,\delta_1}|=2^{\Omega(b)}$, and we take a vector family $\C_{N,\ell,\varepsilon_2,\eta}$ from Lemma~\ref{lem:low_discrepancy} with $|\C_{N,\ell,\varepsilon_2,\eta}|=2^{\Omega(N)}=2^{2^{\Omega(b)}}$.
Following the notation in the proof of Theorem~\ref{thm:lower_bound_gross_substitutes}, we define the valuations as follows
$$ v_c^x(S) = c(x)\cdot \textnormal{BXOS}_{{\cB}_x}(S).$$
$\D_c$'s and $\F$ are defined as in the proof of Theorem~\ref{thm:lower_bound_gross_substitutes}.

\end{proof}

\section{Communication lower bound for XOS valuations with independent items}
In this section, we show that beating 4/5 approximation for XOS valuations with independent items requires exponential communication. (Note that constant-factor approximation is known (\eg,~\cite{RubinsteinW18}) for more general subadditive valuations with independent items).
\begin{theorem}\label{thm:lower_bound_XOS_ind}
For every constant $\tau>0$, any $(\frac{4}{5}+\tau)$-approximate (almost) truthful-in-expectation protocol for revenue maximization, where the seller has $n$ items, and the buyers have XOS valuations with independent items, requires $2^{\Omega(n)}$ bits of communication in expectation.
\end{theorem}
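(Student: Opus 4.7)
The plan is to follow the same three-step framework used in the proofs of Theorems~\ref{thm:lower_bound_unit_demand} and~\ref{thm:lower_bound_gross_substitutes}: (i) build a family $\F$ of prior distributions over XOS valuations with independent items of size $|\F|=2^{2^{\Omega(n)}}$; (ii) exhibit a short truthful-in-expectation protocol extracting the full optimal revenue for each $\D\in\F$; and (iii) prove a ``one protocol tree cannot be shared by too many priors'' statement in the style of Claim~\ref{claim:one_protocol_per_prior}, so that the standard counting of $k$-bit protocol trees (at most $2^{2^{k+o(n)}}$ after payment/probability discretization) forces $k=2^{\Omega(n)}$ whenever a protocol achieves $(4/5+\tau)$-approximation on a non-negligible fraction of~$\F$.

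The main obstacle, and essentially the only place where new ideas are needed relative to the previous proofs, is the independent-items constraint. In the unit-demand and gross-substitutes constructions, the vector $x\in\X_{n,\varepsilon_1,\delta_1}$ was used to directly select a \emph{correlated} support of valuable items; that is no longer permitted. My plan is to replace this ``global'' support selection with a \emph{gadget} construction that keeps all randomness internal to each item: partition $[n]$ into $\Theta(n)$ disjoint constant-size blocks, equip each block with a fixed product distribution over its items' types, and let each prior be specified by a choice, per block, among a small constant number of XOS ``flavors'' (deterministic aggregation rules on that block's item types). Since item types are drawn independently both within and across blocks and the flavor only affects the deterministic map from item types to valuation, each resulting prior is an independent-items prior with an XOS valuation. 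Choosing flavors block-by-block already gives $2^{\Omega(n)}$ priors; to boost to $2^{2^{\Omega(n)}}$ one then applies the two-layer $\X$-and-$\C$ parametrization from Theorem~\ref{thm:lower_bound_unit_demand} on top of the blocks, letting each $x\in\X_{n,\varepsilon_1,\delta_1}$ select which blocks are ``active'' and each $c\in\C_{N,\ell,\varepsilon_2,\eta}$ scale the active blocks' revenues by an approximate equal-revenue law.

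The $4/5$ constant must come from the gadget itself, which is where I expect the main technical effort. I would design a constant-size block so that (a) the revenue-optimal mechanism differs sharply across the two flavors, (b) any flavor-agnostic mechanism extracts at most a $4/5$-fraction of the flavor-specific optimum on a uniform mixture of the two flavors, and (c) this $4/5$ gap multiplicatively amplifies when blocks are replicated independently (this last step is clean because the blocks are independent and no menu line can profitably bundle across blocks beyond what per-block optima already capture, by an XOS/additivity-across-independent-blocks argument). A natural candidate gadget is a two-item block with item values drawn from a two-point equal-revenue-style distribution, where one flavor renders the block additive and the other renders it unit-demand; here the gap between the flavor-specific optimum and any fixed ``best'' single-flavor mechanism can be tuned to exactly $4/5$.

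Given the gadget, the efficient truthful protocol for each $\D\in\F$ is immediate: the buyer transmits his item-type vector in $O(n)$ bits, revealing his XOS valuation exactly, and the seller then executes the publicly known revenue-optimal direct mechanism for $\D$. Truthfulness-in-expectation is inherited from the direct mechanism, and any deviation from the honest report is strictly dominated, exactly as in the proofs of Theorems~\ref{thm:lower_bound_unit_demand} and~\ref{thm:lower_bound_gross_substitutes}. The adapted version of Claim~\ref{claim:one_protocol_per_prior} then uses the equal-revenue structure on $\C_{N,\ell,\varepsilon_2,\eta}$ to force $\ge 4/5$-approximation to pin down the block-flavors almost uniquely (via the same dominant-strategy inequality $u_{t+1}/\varepsilon_2-q_{t+1}\le u_t-q_t$ combined with individual rationality), so any protocol tree covers only $o(|\F|)$ priors. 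Combined with the counting of discretized protocol trees, this yields the claimed $2^{\Omega(n)}$ lower bound.
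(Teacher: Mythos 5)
Your proposal diverges substantially from the paper's construction, and the divergence introduces a fatal inconsistency with the theorem you are trying to prove.

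The key structural move in the paper is to concentrate \emph{all} of the buyer's private randomness into a single item: items $1,\dots,n-1$ have \emph{deterministic} (singleton-support) type distributions whose values $a^{(i)}\in\R_{\ge 0}^b$ encode the $b=2^{\Omega(n)}$ fixed XOS clauses, and only item $n$ has a nontrivial distribution, supported on $N=2^{\Omega(b)}$ distinct $b$-dimensional vectors $\frac{c_j+1}{2}\cdot y^{(j)}$. The independent-items constraint is respected, yet the buyer's type still has entropy $\Omega(b)=2^{\Omega(n)}$, because item $n$'s type itself is a $b$-dimensional object. Consequently the direct-revelation truthful protocol for the hard family costs $O(b)=2^{O(n)}$ bits, which is consistent with the $2^{\Omega(n)}$ lower bound. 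Your block-gadget construction replaces this with constant-size blocks and constant-size type alphabets per item, so the buyer's full item-type vector is $O(n)$ bits. You then observe (correctly, given your construction) that the buyer can just send his $O(n)$-bit type and the seller can run the optimal direct mechanism, which is truthful-in-expectation. But that is an $O(n)$-bit IC protocol achieving exactly optimal revenue on your own family — it directly refutes the $2^{\Omega(n)}$ lower bound you are trying to establish for that family. Any construction in which the buyer's type can be described in $o(2^{\Omega(n)})$ bits cannot yield a $2^{\Omega(n)}$ lower bound against IC protocols, since the direct-revelation protocol is always available. So the block approach cannot work here; the high-dimensional single-item trick is essential, not incidental.

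Two secondary issues: (1) Your ``boost to $2^{2^{\Omega(n)}}$ by letting $x\in\X$ select active blocks'' is not consistent with independence — if $x$ is part of the buyer's private type and simultaneously gates multiple blocks, the blocks' item types become correlated; if instead $x$ is baked into the prior, it is common knowledge and does not create a communication obstruction. The paper instead makes $y^{(j)}$ (the analogue of $x$) a \emph{single item's} type, which is the only way to keep $|\F|=2^{2^{\Omega(b)}}$ while respecting the product structure. (2) Your claim that a per-block $4/5$ gap ``multiplicatively amplifies'' under independent replication would prove a vanishing approximation factor rather than a $4/5$ one, which is not the statement being proved and is also not what a correct amplification argument would give (a protocol can treat each block optimally given the known per-block flavor). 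In the paper the $4/5$ emerges from a one-shot calculation on item $n$ with the two-point distribution $\cR_{2,1/2}$: combining the dominant-strategy inequality $p_1+u_1-q_1\le p_0+u_0-q_0$ with interim IR $q_1\le \frac{p_1}{2}+u_1$ gives $q_0\le p_0+u_0-\frac{p_1}{2}$, and plugging in $p_0\le 1$, $u_0,u_1\le \frac{1}{2-\gamma}$ bounds the average revenue by roughly $\frac{4}{5}$ of the average welfare $\frac{1}{2}\cdot\frac{3}{2}+\frac{1}{2}\cdot 1 = \frac{5}{4}$. Nothing in your gadget sketch reproduces this computation, and the specific constant cannot be obtained without it.
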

\begin{proof}
The proof follows the same strategy as the proof of the previous lower bounds. First, we construct a family of prior distributions of XOS valuations with independent items. We focus on the following special case of prior distributions of XOS valuations with independent items --- Given any integer $b$, for each item $i\in[n]$ there is a distribution $\D_i$ over $\R_{\ge0}^b$, an XOS valuation $v$ is generated by first sampling a vector $a^{(i)}$ from each $\D_i$ and then defined as
$$v(S)=\max_{j\in [b]} \sum_{i\in S} a^{(i)}_j.$$
In this case, $\D_1\times\D_2\times\cdots\times\D_n$ specifies a prior distribution of XOS valuations.
\paragraph{Construction}
Let $\varepsilon_0,\varepsilon_1,\delta_1,\eta,\gamma>0$ be arbitrarily tiny constants such that $\varepsilon_1(1+\delta_1)+\varepsilon_0(1+\delta_0)<\frac{1}{2-\gamma}-\frac{1}{2}$.
Given a set family $\X_{n-1,\varepsilon_0,\delta_0}$ from Lemma~\ref{lem:weak_design}, we let $b:=|\X_{n-1,\varepsilon_0,\delta_0}|=2^{\Omega(n)}$. We can think of each set in $\X_{n-1,\varepsilon_0,\delta_0}=\{x^{(1)},x^{(2)},\dots,x^{(b)}\}$ as a binary vector.
For each $i\in[n-1]$, we let $\D_i$ be the trivial distribution with singleton support $\{a^{(i)}\}$, where $a^{(i)}\in\R_{\ge0}^b$ is defined as $a^{(i)}_j=\frac{x_i^{(j)}}{(2-\gamma)\varepsilon_0(n-1)}$ for all $j\in [b]$. Now we take another set family $\X_{b,\varepsilon_1,\delta_1}=\{y^{(1)},y^{(2)},\dots,y^{(N)}\}$ from Lemma~\ref{lem:weak_design} and a vector family $\C_{N,2,\frac{1}{2},\eta}=\{c^{(1)},c^{(2)},\dots,c^{(M)}\}$ from Lemma~\ref{lem:low_discrepancy}, where $N:=|\X_{b,\varepsilon_1,\delta_1}|=2^{\Omega(b)}$ and $M:=|\C_{N,2,\frac{1}{2},\eta}|=2^{\Omega(N)}$. For each $c^{(i)}$, we let $\D_n^{c^{(i)}}$ be the uniform distribution over $\{\frac{c^{(i)}_j+1}{2}\cdot y^{(j)} \mid j\in[N]\}$. The family of prior distributions is $\F=\{\D_1\times\D_2\times\cdots\times\D_n^{c^{(i)}} \mid i\in[M]\}$. For each prior, a valuation is sampled according to the procedure described in the previous paragraph, and specifically, a valuation function $v_{c^{(i)}}^{y^{(j)}}$, determined by $c^{(i)}$ and $y^{(j)}$, is given as follows $$v_{c^{(i)}}^{y^{(j)}}(S)=\max_{t\in [b]} \mathds{1}\{n\in S\}\cdot\frac{c^{(i)}_j+1}{2}\cdot y^{(j)}_t  + \sum_{r\in S\setminus\{n\}} \frac{x_{r}^{(t)}}{(2-\gamma)\varepsilon_0(n-1)}.$$

\paragraph{Interpretation} In this instance, any valuation $v_{c^{(i)}}^{y^{(j)}}$, when restricted to items $[n-1]$, becomes a single scaled binary XOS valuation in which the clauses correspond to the scaled binary vectors $x^{(1)},x^{(2)},\dots,x^{(b)}$ (they represent pairwise nearly disjoint subsets (of items in $[n-1]$) that are equally valuable to every buyer), and each of these clause has total value $\frac{1}{2-\gamma}$. Each binary vector $y^{(j)}$ then decides which of these clauses $x^{(1)},x^{(2)},\dots,x^{(b)}$ interact with the item $n$, i.e., the item $n$ has positive contribution to the clause $x^{(t)}$ in the valuation $v_{c^{(i)}}^{y^{(j)}}$ iff $y_{t}^{(j)}=1$. (Distinct $y^{(j)}$'s define almost completely different interactions.) Each binary vector $c^{(i)}$ then specifies for each $y^{(j)}$ how large the contribution of $n$ is for each clause where, according to $y^{(j)}$, the item $n$ has positive contribution, i.e., the item $n$ contributes value $1$ to every clause $x^{(t)}$ it interacts with (i.e., for which $y_{t}^{(j)}=1$) in the valuation $v_{c^{(i)}}^{y^{(j)}}$ if $c^{(i)}_j=1$ and contributes value $0$ if otherwise. For a large number of valuations $v_{c^{(i)}}^{y^{(j)}}$'s with distinct $c^{(i)}$'s but the same $y^{(j)}$, the contributions of the item $n$ in these valuations (to every clause it interacts with) are distributed roughly according to the ``equal revenue distribution'' $\cR_{2,1/2}$.

\subsubsection*{An optimal truthful-in-expectation protocol for the hard instances}
We show that there is a truthful-in-expectation protocol that achieves optimal full revenue using $O(b)$ bits of communication for the family of Bayesian instances $\F$. The protocol is as follows: the buyer sends the $y^{(j)}$ corresponding to his valuation $v^{c^{(i)}}_{y^{(j)}}$ to the seller, which takes $b$ bits, and then, if $y^{(j)}\in\X_{b,\varepsilon_1,\delta_1}$ (otherwise the seller stops), the seller samples a set $A$ uniformly at random from the set family $\{x^{(t)} \mid t\in[b]\textrm{ s.t. } y^{(j)}_t=1\}$ and gives the items in $A\cup\{n\}$ to the buyer and charges him $\frac{c^{(i)}_j+1}{2}+\frac{1}{2-\gamma}$. It is easy to verify that the protocol is individual rational and achieves the full revenue if the buyer tells the truth. To show it is truthful in expectation, suppose the buyer sends some other $y^{(j')}\in\X_{b,\varepsilon_1,\delta_1}$ (otherwise, he always gets net utility 0). Then the probability that he receives a set in $\{x^{(t)} \mid t\in[b]\textrm{ s.t. } y^{(j)}_t=1\}$ is at most $\varepsilon_1(1+\delta_1)$ (in which case he gets value $\frac{c^{(i)}_j+1}{2}+\frac{1}{2-\gamma}$), and moreover, if he receives a set $B\in\{x^{(t)} \mid t\in[b]\textrm{ s.t. } y^{(j)}_t=0\}$, the value he gets is at most $\frac{c^{(i)}_j+1}{2}+\frac{\varepsilon_0(1+\delta_0)}{2-\gamma}$ (total value of $B\cup\{n\}$ to him). His payment is always $\frac{c^{(i)}_{j'}+1}{2}+\frac{1}{2-\gamma}$. Hence in expectation, the buyer's net utility is at most 
\begin{align*}
&\frac{c^{(i)}_j+1}{2}+\frac{\varepsilon_1(1+\delta_1)}{2-\gamma}+\frac{\varepsilon_0(1+\delta_0)}{2-\gamma}-\left(\frac{c^{(i)}_{j'}+1}{2}+\frac{1}{2-\gamma}\right) \\
&\le\frac{1}{2}-\frac{1}{2-\gamma}+\varepsilon_1(1+\delta_1)+\varepsilon_0(1+\delta_0), \quad \text{(By $c^{(i)}_j,c^{(i)}_{j'}\in\{0,1\}$ and $\gamma$ is tiny)}
\end{align*}
which is negative by our choice of $\gamma$. Therefore, sending $y^{(j)}$ instead of $y^{(j')}$ is a better strategy for the buyer.\\

Finally, we prove the following main claim. The calculation is slightly different from that of Claim~\ref{claim:one_protocol_per_prior}, but the idea is the same. Using this main claim, the proof can be finished by the same counting argument as in the proof of Theorem~\ref{thm:lower_bound_unit_demand}, which we omit here.

\begin{claim}\label{claim:one_protocol_per_prior_independent_xos}
For any constant $\tau>0$ and any $m=\omega(1)$, any single protocol tree can only achieve $\frac{4}{5}+\tau$ approximation on $\le m$ priors in $\F$.
\end{claim}
Assume for contradiction that there are $m=\omega(1)$ priors $\D_n^{c^{(i_1)}},\dots,\D_n^{c^{(i_m)}}$ (ignoring the trivial $\D_1\times\D_2\times\cdots\times\D_{n-1}$ part) in $\F$ sharing the same protocol tree. By Lemma~\ref{lem:low_discrepancy}, for all but $\eta$ fraction of $j\in[b]$, the empirical distribution of $c^{(i)}_j$'s for $i\in \{i_1,\dots,i_m\}$ is close to $\cR_{2,1/2}$ defined in Lemma~\ref{lem:low_discrepancy}, namely, the number of $i$'s such that $c^{(i)}_j=1$ is $\frac{1\pm\eta}{2}\cdot m$. In the rest of the proof of this claim, we show that for any such $j$, the average revenue over valuations $v_{c^{(i)}}^{y^{(j)}}$ for all $i\in\{i_1,\dots,i_m\}$ achieved by the protocol tree is at most $\theta=\frac{4(1+\eta)}{(4-2\gamma)(1-\eta)}\cdot\frac{4}{5}$ fraction of the optimum, and $\theta$ is a constant that we can make arbitrarily close to $\frac{4}{5}$. This will finish the proof of the claim, because for at least one of $\D_n^{c^{(i_1)}},\dots,\D_n^{c^{(i_m)}}$, the protocol tree achieves no more than the average of the expected revenues for $\D_n^{c^{(i_1)}},\dots,\D_n^{c^{(i_m)}}$, which is $\tau=\theta+\frac{\eta}{1-\eta}\cdot\frac{3}{2}$ fraction of the optimal revenue (we generously assume that it achieves full revenue on the $\eta$ fraction of $j\in[b]$ that is excluded in above analysis, and the full revenue for any $y^{(j)}$ from this $\eta$ fraction is at most $\frac{3}{2}$ times the full revenue of any $y^{(j')}$ from the other $1-\eta$ fraction), and $\frac{\eta}{1-\eta}\cdot\frac{3}{2}$ can be made arbitrarily small.

Consider any such $j$ that the empirical distribution of $c^{(i)}_j$'s for $i\in \{i_1,\dots,i_m\}$ is close to $\cR_{2,1/2}$, and let $C_{t}$ be the set of $c^{(i)}$'s such that $c^{(i)}_j=\frac{1}{2^t}$ for each $t\in\{0,1\}$. Without loss of generality, the buyers with valuation $v_{c^{(i)}_j}^{y^{(j)}}$ for all $c^{(i)}\in C_{t}$ use the same dominant strategy. Moreover, consider any $c^{(i')}\in C_0$ and any $c^{(i'')}\in C_{1}$, we denote the expected utility over items $[n-1]$, the probability of getting item $n$, and the payment achieved by the prescribed dominant strategy for $v_{c^{(i')}}^{y^{(j)}}$ by $u_0$, $p_0$ and $q_0$, respectively, and analogously, we denote $u_{1}$, $p_{1}$ and $q_{1}$ for $v_{c^{(i'')}}^{y^{(j)}}$. Notice that if the buyer with valuation $v_{c^{(i')}}^{y^{(j)}}$ plays the strategy for $v_{c^{(i'')}}^{y^{(j)}}$ instead, he will get expected utility $p_1+u_1$ (because the probability he gets item $n$ is now $p_1$, and the expected utility he gets from $[n-1]$ is now $u_1$) and payment $q_1$. By definition of (almost) dominant strategy, we have the following inequality (the inequality holds approximately when we consider almost truthful-in-expectation protocols, and the error is negligible to the later derivations)
\begin{equation}\label{eq:dominant_strategy_independent_xos}
    p_1+u_1-q_1\le p_0+u_0-q_0.
\end{equation}
Moreover, by individual rationality,
\begin{equation}\label{eq:indvidual_rationality_independent_xos}
    q_1\le \frac{p_1}{2}+u_1,
\end{equation}
and it follows that
\begin{align}
    q_0&\le p_0+u_0-u_1-p_1+q_1 && \text{(Rearranging Eq.~\eqref{eq:dominant_strategy_independent_xos})} \nonumber \\
    &\le p_0+u_0-u_1-p_1+\frac{p_1}{2}+u_1 && \text{(By Eq.~\eqref{eq:indvidual_rationality_independent_xos})} \nonumber \\
    &= p_0+u_0-\frac{p_1}{2}. \label{eq:bound_q_0_independent_xos}
\end{align}
Furthermore, because $v_{c^{(i)}}^{y^{(j)}}$'s for $i\in\{i_1,\dots,i_m\}$ are distributed like $\cR_{\ell,\varepsilon_2}$, the sum of the revenues obtained by the protocol tree from the $v_{c^{(i)}}^{y^{(j)}}$'s for all $i\in\{i_1,\dots,i_m\}$ is at most
\begin{align*}
    \frac{(1+\eta)m}{2}\cdot q_0+\frac{(1+\eta)m}{2}\cdot q_1&\le \frac{(1+\eta)m}{2}\cdot(p_0+u_0+u_1) && \text{(By Eq.~\eqref{eq:indvidual_rationality_independent_xos}, \eqref{eq:bound_q_0_independent_xos})}\\
    &\le \frac{4(1+\eta)m}{4-2\gamma}. && \text{(By $p_0\le1$ and $u_0,u_1\le \frac{1}{2-\gamma}$)}
\end{align*}
Because $v_{c^{(i)}}^{y^{(j)}}$'s for $i\in\{i_1,\dots,i_m\}$ are distributed like $\cR_{\ell,\varepsilon_2}$, the optimal total revenue we can get from all the $v_{c^{(i)}}^{y^{(j)}}$ for $i\in\{i_1,\dots,i_m\}$ (which is equal to their total value) is at least $\frac{(1-\eta)m}{2}\cdot\frac{3}{2}+\frac{(1-\eta)m}{2}=\frac{5(1-\eta)m}{4}$, and hence, the average revenue achieved by the protocol tree on valuations $v_{c^{(i)}}^{y^{(j)}}$ for $i\in\{i_1,\dots,i_m\}$ is at most $\frac{4(1+\eta)}{(4-2\gamma)(1-\eta)}\cdot\frac{4}{5}$ fraction of the optimum.
\end{proof}

\bibliographystyle{alpha}
\bibliography{MasterBib}

\appendix

\section{A more efficient non-IC protocol\label{sec:non-IC}}

In this appendix we explore the complexity of non-IC auction protocols
that implement direct revelation (IC) mechanisms, i.e. whenever the
Buyer follows the suggested strategy, the expected allocation and
payment are still identical to those in the direct revelation mechanism.
Yet due to the order in which the information is revealed to the Buyer,
he has incentives to deviate from this suggested strategy. This corresponds
to \cite{FadelS09}'s notion of efficiently {\em implementable} mechanism
(as opposed to the stronger {\em incentivizable} mechanism).
\begin{thm}
\label{thm:non-truthful}Let $\mathcal{\ensuremath{D}}$ be any prior
over Buyer's combinatorial valuations over $n$ items bounded by maximum
valuation $U$, and let $\mathcal{\ensuremath{M}}$ be any direct
revelation mechanism. Suppose that for any type and realization of
randomness, $\mathcal{M}$ only ever allocates one of $B$ bundles.
Then there exists a non-IC protocol implementing $\mathcal{M}$ using
$O(\log(B))$ bits of communication.
\end{thm}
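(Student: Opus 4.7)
The plan is to adapt the construction of Theorem~\ref{thm:truthful-B}, but to exploit the relaxed (non-IC) requirement by reversing the roles of buyer and seller: now the seller streams bits of the random threshold $\tau$ to the buyer, and the buyer only needs to identify which outcome interval $\tau$ lies in (rather than committing in advance to all $B$ allocation probabilities). First, as in Theorem~\ref{thm:truthful-B}, I convert $\mathcal{M}$ to a strategically equivalent mechanism whose payments are always either $0$ or $U$, so the outcome distribution for any type has support on at most $B' \leq 2B$ discrete (bundle, payment) pairs. For each type $t$, the probabilities over these outcomes induce a partition of $[0,1]$ into at most $B'$ intervals whose lengths match the probabilities.

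The protocol alternates Chance and Buyer's nodes. Chance nodes sample an implicit threshold $\tau \in [0,1]$ one bit at a time, revealing each bit to the buyer. At each interleaved Buyer's node, the buyer either sends $0$ (signaling that the revealed prefix of $\tau$ does not yet uniquely determine which partition interval contains $\tau$, so the protocol continues) or $1$ (signaling that the interval is now determined), after which he uses $\lceil \log_2 B' \rceil$ further bits to announce the outcome index. The leaf's allocation and payment are then set according to the announced outcome. Correctness of the implementation is immediate: because $\tau$ is uniform on $[0,1]$, when the buyer follows the suggested strategy the announced interval, and thus the outcome, is sampled exactly according to $\mathcal{M}(t)$.

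The main calculation is bounding the expected depth. After $k$ bits of $\tau$ have been revealed, $\tau$ is confined to a dyadic sub-interval of length $2^{-k}$, and the outcome is determined as soon as this sub-interval contains no internal partition boundary. For uniformly random $\tau$, each of the $B'-1$ internal boundaries falls inside this dyadic sub-interval with probability exactly $2^{-k}$, so by a union bound the probability that more than $k$ rounds are needed is at most $(B'-1) \cdot 2^{-k}$. Summing the tail $\sum_{k \geq 0} \min\bigl(1,(B'-1) \cdot 2^{-k}\bigr) = O(\log B)$ bounds the expected number of Chance bits; the alternating structure contributes at most one continue/stop bit per round, so also $O(\log B)$; plus a final $O(\log B)$ bits to encode the outcome. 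Total expected depth is therefore $O(\log B)$. The only real subtlety is ensuring that the continue/stop bookkeeping fits cleanly into a binary protocol tree, which the alternating structure handles directly (with the ``stop'' branch leading to a depth-$\lceil \log_2 B' \rceil$ subtree encoding the outcome index).
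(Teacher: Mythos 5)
Your proposal is correct and takes essentially the same approach as the paper's proof: both convert $\mathcal{M}$ to a $\{0,U\}$-payment mechanism, partition $[0,1]$ into $O(B)$ intervals, have Chance stream bits of a uniform threshold $\tau$ until the Buyer can identify the outcome, and then have the Buyer announce the outcome using $O(\log B)$ bits. The only cosmetic difference is that you fold allocation and payment into a single outcome and partition into $B' \leq 2B$ intervals, whereas the paper keeps them separate (comparing $\tau$ to $B$ numbers: the payment probability and $B-1$ allocation boundaries); your tail-sum expectation bound and the paper's recursive union-bound argument are likewise equivalent.
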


\begin{proof}
Similarly to the proof Theorem \ref{thm:truthful-B}, we consider
a partition of $[0,1]$ into $B$ intervals, where the $b$-th interval
is of length identical to the probability that $\mathcal{\ensuremath{M}}$
allocates Bundle $b$ to the Buyer. Also, as in the proofs of Theorems
\ref{thm:additive} and \ref{thm:truthful-B}, we transform $\mathcal{\ensuremath{M}}$
into a mechanism $\mathcal{\ensuremath{M}}^{'}$ with payment $0$
or $U$. Notice that $\mathcal{\ensuremath{M}}^{'}$ has the same
expected payment and allocation as $\mathcal{\ensuremath{M}}$, so
implementing $\mathcal{\ensuremath{M}}^{'}$ is equivalent to implementing
$\mathcal{\ensuremath{M}}$.

Like our IC auction protocols (Theorems \ref{thm:additive} and \ref{thm:truthful-B}),
the nodes of Chance are parameterized by a uniformly random number
$\tau\in[0,1]$. We deviate from those IC auction protocols and reveal
(prefixes of) $\tau$ to the Buyer as soon as possible. Specifically,
the nodes of Chance iteratively draw the bits in the binary representation
of $\tau$. At each iteration the Buyer can terminate the protocol
or ask to reveal $\tau$ to greater precision. He terminates the protocol
iff he can determine both: (i) whether $\tau$ is greater than the
probability he has to pay $U$, and (ii) which interval contains $\tau$;
in this case he simply announces his ex-post payment and allocation
($1$ bit for the former, $\lceil\log(B)\rceil$ bits for the latter). 

Note that if the Buyer follows the suggested strategy, the distributions
of allocation and payment are identical to $\mathcal{\ensuremath{M}}^{'}$,
so this protocol indeed implements $\mathcal{\ensuremath{M}}^{'}$. 

The analysis of the communication complexity of the protocol is similar
to that of Theorems \ref{thm:additive} and \ref{thm:truthful-B}.
The Buyer must be able to determine whether $\tau$ is greater or
smaller than $B$ numbers (the probability of payment and $B-1$ interval
boundaries). He can do this once the prefix of $\tau$ he received
is different than the corresponding $B$ prefixes. Since $\tau$ is
a uniformly random number its prefix has probability $1/2$ of deviating
from each of the $B$ numbers at each iteration (independently across
iterations but not across numbers). After $2\log(B)$ iterations,
the Buyer can terminate the protocol with probability $\geq1-1/B^{2}$.
Therefore the number of iterations is $O(\log(B))$ (see also Ineq.
(\ref{eq:recursive})); the last iteration uses $\log(B)+O(1)$ bits,
and any iteration requires before that uses two bits. So the total
communication complexity is $O(\log(B))$.
\end{proof}
To appreciate why the above mechanism is non-truthful, we provide
a simple example:
\begin{example}
[IC vs non-IC auction protocols] \label{ex:non-truthful}

Consider the following mechanism $\mathcal{M}$ for auctioning a single
item: the Buyer pays $1$ with probability $p$ (otherwise zero);
he receives the item with probability $q$; and he can choose any
$p,q\in[0,1]$ such that $p=q^{2}$. Consider a Buyer type that favors
option $q=2/3,p=4/9$. In the context of our non-IC auction protocol,
this induces the partition of $[0,1]$ into $[0,2/3]\cup(2/3,1]$
(where the buyer receives the item if $\tau$ is in the first part).
If the Buyer learns that the first bit of $\tau$ is zero (i.e. $\tau<1/2$),
the Buyer would prefer to report type $q'=1/2,p'=1/4$ over his true
type since he is guaranteed to receive the item anyway.
\end{example}

\section{Special case: a protocol for the \cite{DaskalakisDT17} example\label{app:DDT}}

In this appendix we prove the concrete (non-asymptotic) bound on the
expected number of bits that the buyer sends in the DDT example. 
\begin{thm*}
(Theorem \ref{thm:DDT} restated) Consider the case of $n=2$ items
and the Buyer drawing his valuations i.i.d. from $\Beta(1,2)$ (i.e.~the
distribution on $[0,1]$ with density function $f(x)=2(1-x)$). Then
there is an IC auction protocol obtaining the maximum possible revenue
where the Buyer sends less than two bits in expectation.
\end{thm*}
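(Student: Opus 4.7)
The plan is to construct a custom IC auction protocol tailored to the explicit structure of the DDT optimal mechanism and carefully bound the Buyer's expected communication. I would first recall the description of the optimal mechanism of \cite{DaskalakisDT17} for two i.i.d.~$\Beta(1,2)$ valuations: it partitions the type space $[0,1]^2$ into a zero region $R_0$ (Buyer gets nothing, pays nothing), a grand bundle region $R_G$ (both items with probability $1$ at a fixed price $p^*$), and a randomized region $R_*$ whose menu lines form a continuous one-dimensional family. Let $\beta,\gamma$, and $\rho = 1 - \beta - \gamma$ denote the probabilities of these three regions under the $\Beta(1,2)^2$ density $f(x,y) = 4(1-x)(1-y)$; they are explicit integrals determined by DDT's region boundaries.

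Next, I would design a hierarchical IC protocol using a short prefix code for region selection: the Buyer sends ``$0$'' for $R_0$ (1 bit), ``$10$'' for $R_G$ (2 bits), or ``$11$'' for $R_*$ (2 bits). On branch ``$0$'' the protocol terminates. On branch ``$10$'' the allocation is fixed at $(1,1)$ and the payment is settled entirely by a Chance node that draws $\tau \in [0,1]$ uniformly and charges $U$ iff $\tau < p^*$; the Buyer sends no further bits. On branch ``$11$'' the Buyer streams bits of a single scalar parameter encoding his chosen point on the one-dimensional menu curve, in the spirit of Theorem~\ref{thm:additive} but with just one stream (since the allocation and payment are both determined by the same scalar parameter against $\tau$).

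Incentive-compatibility of the combined protocol follows from IC of the underlying direct revelation mechanism together with the streaming argument from the proof of Theorem~\ref{thm:additive}: the first two bits implement the region assignment of the direct revelation mechanism; the grand bundle sub-protocol charges the correct expected payment using only Chance; and the randomized sub-protocol is IC by the same zero-probability-extension-of-the-infinite-path argument used in Theorem~\ref{thm:additive}. The expected Buyer communication decomposes as
\[
\beta \cdot 1 \;+\; \gamma \cdot 2 \;+\; \rho \cdot (2 + c_{\mathrm{rand}}) \;=\; 2 - \beta + \rho \cdot c_{\mathrm{rand}},
\]
where $c_{\mathrm{rand}}$ is the expected number of Buyer bits in the randomized sub-protocol; a single-stream Theorem~\ref{thm:additive}-style analysis gives $c_{\mathrm{rand}} \le 2$, so it suffices to show $\beta > \rho \cdot c_{\mathrm{rand}}$.

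The main obstacle is the numerical verification: computing $\beta,\gamma,\rho$ by integrating the $\Beta(1,2)^2$ density over the explicit DDT regions and checking the resulting inequality. Since the $\Beta(1,2)$ density $2(1-x)$ is concentrated near $0$, a large fraction of the product mass lies in $R_0$, making $\beta$ large while $\rho$ is bounded, so plugging in DDT's explicit boundaries and integrating should give the required strict inequality, yielding expected Buyer communication strictly below $2$ bits. If the naive bound $c_{\mathrm{rand}} \le 2$ does not already suffice, I would tighten the randomized sub-protocol analysis by exploiting the induced distribution on the scalar parameter (conditioned on $R_*$) to obtain a strictly smaller $c_{\mathrm{rand}}$, and/or use a slightly more balanced prefix code for the regions.
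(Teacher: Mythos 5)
Your high-level plan --- split by DDT region, use a short prefix code for the region indicator, and stream a scalar in the randomized region --- is in the right spirit, but it has two genuine gaps that the paper's proof closes by a different route.

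First, the claimed bound $c_{\mathrm{rand}} \le 2$ is not justified. A Buyer in the randomized region must (i) identify his preferred item and (ii) supply enough information to decide \emph{two} independent threshold comparisons against $\tau$: whether the less-preferred item is allocated (i.e., $\pi$ vs.~$\tau$) and whether the payment of $U$ is charged (i.e., $q$ vs.~$\tau$). Even though $\pi$ and $q$ lie on a one-dimensional curve, knowing $\pi$ precisely enough to decide $\operatorname{sign}(\pi-\tau)$ does \emph{not} in general decide $\operatorname{sign}(q-\tau)$, so a single stream terminating after an expected two comparisons does not obviously suffice; a Theorem~\ref{thm:additive}-style analysis for two streamed quantities gives a constant closer to $4$ bits per comparison, plus the extra bit for the preferred item. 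More importantly, the paper obtains far less than your $c_{\mathrm{rand}} \le 2$: it exploits the DDT fact that $\pi \in [1/8, 1/8+0.03)$ for all types in $\mathcal{A}\cup\mathcal{B}$, so when $\tau<1/8$ the Buyer sends \emph{zero} bits (he gets both items regardless of $\pi$), and when $\tau>1/8+0.03$ he sends one bit (his preferred item); only for $\tau$ in the narrow band of width $0.03$ does he stream anything. This drives the expected cost for $\mathcal{A}\cup\mathcal{B}$ below $0.94$ bits, a structural saving your scalar-streaming argument does not capture.

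Second, your final inequality $\beta > \rho\cdot c_{\mathrm{rand}}$ is asserted but never verified, and with $c_{\mathrm{rand}}$ as large as $2$ (or more) it is not clear that it holds for the actual DDT region masses. The paper sidesteps the need for any such numeric check by using a \emph{randomized} prefix code for the region indicator (with probability $0.98$ the code is ``$00$/$01$/$1$'' for $\mathcal{Z}/\mathcal{W}/\mathcal{A}\cup\mathcal{B}$, and with probability $0.01$ each it permutes which region gets the $1$-bit codeword). This yields an expected preliminary cost of $1.99$ bits for $\mathcal{Z}$ and $\mathcal{W}$ and $1.02$ bits for $\mathcal{A}\cup\mathcal{B}$, so the total expected communication is $1.99(1-a) + (1.02 + 0.94)a = 1.99 - 0.03a < 2$ whenever $\mathcal{A}\cup\mathcal{B}$ has mass $a>0$, \emph{independently of the exact region masses}. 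Your fixed prefix code instead requires a quantitative relation between $\beta$ and $\rho$ which you have not established. To complete your argument you would need both a tight bound on $c_{\mathrm{rand}}$ (ideally using the narrow range of $\pi$) and an explicit computation of the region masses; without those, the proof does not go through.
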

\begin{proof}
The optimal direct revelation mechanism for this distribution is analyzed
in detail in \cite[Example 3]{DaskalakisDT17}. We summarize the properties
useful for our proof: The optimal mechanism partitions the Buyer's
types into four regions $\mathcal{Z},\mathcal{A},\mathcal{B},\mathcal{W}$.
Buyer types in $\mathcal{Z}$ pay zero and never receive either item.
Buyer types in $\mathcal{W}$ pay a fixed price ($P\approx0.5535$)
and receive both items. Buyer types in $\mathcal{A}$ and $\mathcal{B}$
have a strict preference between the items (different preferences
between $\mathcal{A}$ and $\mathcal{B}$); they always receive their
most preferred item, and receive their least preferred item with probability
$\pi$ which is always in the range $\pi\in[1/8,1/8+0.03)$. Their
payment is always less than $P$. (The exact payment and probability of
allocation is slightly different for each type in $\mathcal{A}\cup\mathcal{B}$.)

\paragraph*{A protocol for $\mathcal{A}\cup\mathcal{B}$}

We now describe a simple IC auction protocol for the special case
where the Buyer's type is in $\mathcal{A}\cup\mathcal{B}$; in this
mechanism the Buyer sends less than one bit in expectation. We will
then show how to use this mechanism to prove the theorem for the general
case. We first convert the original mechanism to one where the Buyer
always pays either zero or $U$, for some arbitrarily large constant
$U$; the payment of $U$ is charged with probability $q<1/U$ (the
exact probability depends on the Buyer's type).

The nodes of Chance are parameterized by a threshold $\tau$ drawn
uniformly at random from $[0,1]$. At the end of the protocol, the
Buyer should always receive the more desired item, receive the less
desired item iff $\tau<\pi$, and pay $U$ iff $\tau<q$. We now consider
the following protocols, depending on $\tau$ (see summary in Table
\ref{tab:A-cup-B-1-1}):
\begin{itemize}
\item If $1/U<\tau<1/8$ (which happens with probability arbitrarily close
to $1/8$), the Buyer always receives both items and pay nothing.
In this case the protocol can terminate with zero communication from
the Buyer. 
\item If $\tau>1/8+0.03$, the Buyer that will receive one item and pay
nothing. The buyer sends one bit to choose which item he prefers.
\item Otherwise, i.e.~if $\tau\in[1/8,1/8+0.03)$ or $\tau<1/U$, the Buyer
needs to send more refined information about his valuation. First,
the Buyer sends his preferred item (one bit of information). Then,
the Buyer enters a sub-protocol similar to the proof of Theorem \ref{thm:additive}.
The sub-protocol is identical to both cases as the Buyer must not
learn any more information about $\tau$.
\begin{itemize}
\item If $\tau\in[1/8,1/8+0.03)$, the Buyer's suggested strategy is to
send at each round the next bit in the binary representations of $\frac{\pi-1/8}{0.03}\in[0,1)$.
Every $\sqrt{U}$ rounds, the Buyer should also send the next bit
in the binary representation of $qU\in[0,1)$. The protocol terminates
once the Buyer sent enough information to determine whether $\pi>\tau$.
Conditioning on being in this case, this happens with probability
$1/2$ at each iteration, so this part of the protocol lasts $2$
rounds in expectation. In total the Buyer expects to send just over
$3$ bits in this case: $1$ for choosing the item, $2$ bits in expectation
from the prefix of $\frac{\pi-1/8}{0.03}$, and $O(2^{-\sqrt{U}})$
bits in expectation from the prefix of $qU$.
\item If $\tau<1/U$, the Buyer's suggested strategy is as above, but the
protocol terminates once it learns whether $q>\tau$. In this case
the buyer must waste $\sqrt{U}$ bits for every bit from the prefix
of $qU$ that he actually sends. His total expected communication
$1+2(\sqrt{U}+1)=3+2\sqrt{U}$.
\end{itemize}
\end{itemize}
\begin{table}
\begin{centering}
\begin{tabular}{|c|c|c|}
\hline 
Range of $\tau$ & probability & expected communication\tabularnewline
\hline 
\hline 
$\tau<1/U$ & $1/U$ & $3+2\sqrt{U}$\tabularnewline
\hline 
$1/U<\tau<1/8$ & $1/8-1/U$ & zero\tabularnewline
\hline 
$\tau\in[1/8,1/8+0.03)$ & $0.03$ & $3+O(2^{-\sqrt{U}})$\tabularnewline
\hline 
$\tau>1/8+0.03$ & $7/8-0.03$ & $1$\tabularnewline
\hline 
\end{tabular}
\par\end{centering}
\caption{\label{tab:A-cup-B-1-1}Proof of Theorem \ref{thm:DDT}, protocol
for $\mathcal{A}\cup\mathcal{B}$ in the}
\end{table}

The Buyer's total expected communication is therefore at most
\[
\left(\frac{1}{8}-\frac{1}{U}\right)\cdot0+(\frac{7}{8}-0.03)\cdot1+0.03\cdot\left(3+O(2^{-\sqrt{U}})\right)+\frac{1}{U}\cdot O(\sqrt{U})=0.935+O(\sqrt{1/U})<0.94.
\]

\paragraph{A protocol for the general case}

For the general case, we add a preliminary stage of communication
where the Buyer needs to communicate whether his type is in $\mathcal{Z}$,
$\mathcal{W}$, or $\mathcal{A}\cup\mathcal{B}$. The encoding for
each of those three options is chosen at random as follows:
\begin{itemize}
\item With probability $0.98$, the Buyer should use $00$ when his type
is in $\mathcal{Z}$, and $01$ when his type is in $\mathcal{W}$;
the signal $1$ is reserved for $\mathcal{A}\cup\mathcal{B}$, which
then follows by the above $0.94$-bit protocol for $\mathcal{A}\cup\mathcal{B}$.
\item With probability $0.01$ the Buyer should use $00$ when his type
is in $\mathcal{Z}$, and $01$ when his type is in $\mathcal{A}\cup\mathcal{B}$;
the signal $1$ is reserved for $\mathcal{W}$. (Signal $01$ is followed
by the above $0.94$-bit protocol.)
\item With probability $0.01$ the Buyer should use $00$ when his type
is in $\mathcal{A}\cup\mathcal{B}$, and $01$ when his type is in
$\mathcal{W}$; the signal $1$ is reserved for $\mathcal{Z}$. (Signal
$00$ is followed by the above $0.94$-bit protocol.)
\end{itemize}
For each of $\mathcal{Z},\mathcal{W}$, the Buyer needs to send two
bits with total probability $0.99$, and one bit with probability
$0.01$. Hence his total expected communication is $1.99$ bits. For
$\mathcal{A}\cup\mathcal{B}$, the Buyer sends in expectation $1.02$
bits during the preliminary stage, so his total expected communication
is bounded by $1.96$ bits.
\end{proof}

\section{Approximately ex-post IR protocols\label{sec:Approximately-ex-post-IR}}

In this section we show a generic transformation of IC (and interim
IR) auction protocols to approximately ex-post IR auction protocols.
The main idea is to allow the Buyer, before every node of Chance of
the protocol, to hedge against the risk posed by the randomness of
this node.
\begin{defn}
We say that an auction protocol is {\em ex-post $\varepsilon$-IR}
if it is IC, and whenever the Buyer follows the suggested Buyer's
strategy for his valuation type, at the end of the protocol his payment
is at most $\varepsilon$ greater than his value for the allocation.
\end{defn}

\begin{thm}
Let $\mathcal{\ensuremath{D}}$ be any prior over Buyer's combinatorial
valuations over $n$ items upper bounded by $U$. Given
an IC auction protocol ${\cal P}$, we can transform it into an IC
and ex-post $\varepsilon$-IR auction protocol ${\cal P}^{'}$ with
the same expected payment and allocation; for any random seed $\pi$,
if the communication complexity of ${\cal P}$ is $C$, then the communication
complexity of ${\cal P}^{'}$ with the same random seed is bounded
by $O(C\log(U/\varepsilon)+C^{2})$.
\end{thm}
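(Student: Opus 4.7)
The plan is to insert a \emph{hedging sub-protocol} before every Chance node of $\mathcal{P}$ that lets the Buyer transfer expected payment between the two children of that Chance node in a utility-neutral way. The suggested hedge will equalize the Buyer's post-Chance continuation utility across the two branches, ensuring by induction that the Buyer's realized continuation utility never drops below $-\varepsilon$ along the suggested path. The key observation is that utility-neutral hedges are automatically IC---the Buyer is indifferent among all hedge choices---so the incentive constraints are handled for free, and the entire argument reduces to an accounting question about the hedge magnitude.

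\textbf{Step 1: hedging sub-protocol and IC.} At each Chance node with branching probabilities $(p, 1-p)$, insert first a Buyer's sub-protocol in which the Buyer streams the bits of a parameter $h \in \mathbb{R}$ to some precision $\delta$. Modify payments so that every leaf in the left (resp.\ right) subtree has its payment shifted by $+(1-p)h$ (resp.\ $-p h$); the expected payment change is exactly $p(1-p)h + (1-p)(-p)h = 0$, so the expected allocation and payment of $\mathcal{P}$ are preserved. Because the shift is utility-neutral, the Buyer's expected utility does not depend on $h$, so he is weakly indifferent among all choices of $h$ (and among all prefixes streamed along the way), and $\mathcal{P}'$ inherits IC from $\mathcal{P}$.

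\textbf{Step 2: suggested hedge and ex-post IR.} Let $W_L(t), W_R(t)$ denote type $t$'s continuation utilities in the two subtrees of $\mathcal{P}$ (equivalently of $\mathcal{P}'$, by utility-neutrality). The suggested hedge $h(t) := W_L(t) - W_R(t)$ makes both post-hedge continuation utilities equal to the pre-Chance value $W(t) = p\, W_L(t) + (1-p)\, W_R(t)$. A short induction over the Chance nodes along the suggested path then shows that continuation utility never drops below $-k\delta$ after the $k$-th hedge: Buyer's nodes can only increase utility (the Buyer picks the better branch), and each hedge loses at most $\delta$ from the precision error in $h$. Setting $\delta = \varepsilon / C$ and bounding the number of Chance nodes by $C$ yields final ex-post utility $\geq -\varepsilon$.

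\textbf{Step 3: communication and main obstacle.} Using $p W_L + (1-p) W_R \geq 0$ together with $W_L, W_R \leq U$, one gets $|h(t)| = |W_L - W_R| = O(U/\min(p, 1-p))$, so communicating $h$ to precision $\delta = \varepsilon/C$ costs $O(\log(UC/\varepsilon) + \log(1/\min(p, 1-p)))$ bits per Chance node. Summed over at most $C$ Chance nodes, the first term contributes $O(C \log(U/\varepsilon))$. The main technical obstacle is bounding the second: the sum $\sum_i \log(1/\min(p_i, 1-p_i))$ is not bounded a priori, so the plan is to cap each hedge at $O(C)$ bits, using a coarse worst-case hedge whenever $\min(p_i, 1-p_i) < 2^{-C}$ and absorbing the resulting error into a slightly larger $\varepsilon$-slack (the ``rare'' subtree contributes negligibly either way). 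Together, these give the claimed $O(C\log(U/\varepsilon) + C^2)$ bound.
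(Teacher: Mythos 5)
Your overall approach is essentially the paper's: insert a utility-neutral hedge before each Chance node, have the suggested hedge equalize continuation utilities across branches so that the Buyer's realized utility tracks his ex-ante utility $\overline{U}\geq 0$ up to accumulated precision error, and observe that utility-neutrality preserves IC. The paper has the Buyer report the conditional continuation utilities $\overline{U_x}$ directly (constrained so that their probability-weighted average is $\overline{U}$); you report the single difference $h=W_L-W_R$. These are equivalent reparameterizations, and your Steps 1 and 2 are the paper's argument.

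Step 3, however, contains a genuine gap, and the obstacle it wrestles with is self-inflicted. You bound $|h|=O(U/\min(p,1-p))$ using only $pW_L+(1-p)W_R\geq 0$ and the upper bound $W_L,W_R\leq U$, but you discard the $p$-independent lower bound on continuation utility: $W$ is (expected value, in $[0,U]$) minus (expected payment, in $[0,P_{\max}]$), so $W_L,W_R\geq -P_{\max}$ and hence $|h|\leq P_{\max}+U$. Both you and the paper implicitly need $P_{\max}=O(U)$ for the stated $O(C\log(U/\varepsilon)+C^2)$ bound; the protocols to which the paper applies this theorem (Theorems 3.1 and 4.1) always charge payments in $\{0,U\}$, so that is the intended regime. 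With the correct bound $|h|=O(U)$, the $\log(1/\min(p_i,1-p_i))$ term never appears and your argument closes with no patch. This matters because the patch you sketch does not work: ex-post $\varepsilon$-IR is a per-seed guarantee that must hold on \emph{every} realization of the protocol's randomness. A ``coarse worst-case hedge'' at a Chance node with a rare child of probability $<2^{-C}$ misequalizes precisely that child's continuation utility, possibly by a large amount, and the resulting ex-post IR violation occurs exactly on the seeds that take that branch; you cannot absorb it into an $\varepsilon$-slack by appealing to that branch's small probability. (A minor stylistic difference: your uniform precision $\delta=\varepsilon/C$ gives $O(C\log(UC/\varepsilon))$, slightly better than the paper's geometric $\varepsilon 2^{-i}$ scheme which gives $\sum_{i\leq C}(\log(U/\varepsilon)+i)$; both fit within the claimed bound.)
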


\begin{proof}
We consider the communication protocol tree associated with ${\cal P}$
and iteratively transform it into ${\cal P}^{'}$ starting from the
root. At the beginning of the protocol, we ask the Buyer to (approximately)
specify his utility (value for the allocation he will receive minus
payment he will be charged) at the end of auction protocol, in expectation
over protocol's randomness and assuming he honestly follows the suggested
Buyer's strategy. Before each node of Chance, we ask the Buyer to
(approximately) specify his expected utility conditioned on each outcome
of the node of Chance. For now we describe and analyze the rest of
the transformation as if the Buyer exactly specifies his values, and
analyze the accumulated error due to finite precision later.

Let $\overline{U}$ denote the Buyer's expected utility at the beginning
of the protocol. Since the protocol is interim IR, $\overline{U}\ge0$.
At each node of Chance we constrain the probability-weighted sum of
expected utilities reported by the Buyer to be exactly $\overline{U}$.
For each outcome $x$ of the node of Chance, let $\overline{U_{x}}$
denote the expected utility reported by the Buyer conditioned on this
outcome. We add $\overline{U_{x}}-\overline{U}$ to the payment in
every leaf of the sub-tree corresponding to $x$. We now make the
following observations about this transformation:
\begin{enumerate}
\item The transformation does not change the Buyer's incentives conditioned
on $x$ since we add the same amount to the payment in every leaf;
in particular, if the sub-protocol conditioned on $x$ was IC before
the transformation, it remains IC after the transformation.
\item The transformation changes the Buyer's expected utility conditioned
on $x$ from $\overline{U_{x}}$ to $\overline{U}$ (here we use that
by the previous observation, the Buyer continues to follow the suggested
Buyer's strategy).
\item In expectation over the randomness in this node of Chance, this transformation
has zero net effect over the Buyer's payment (here use the constraint
that the weighted sum of $\overline{U_{x}}$'s is $\overline{U}$).
In particular, the Buyer has no incentive to misreport his utilities
for the respective outcomes. (In fact, the risk-neutral Buyer is completely
indifferent between any valid report of $\overline{U_{x}}$, so even
when we later constrain the communication to finite precision, he
has no incentive to deviating from reporting the best approximation.)
Furthermore, this guarantees that the Buyer's incentives for arriving
at this node of Chance did not change by the transformation.
\end{enumerate}
Using these observations, it follows by induction that the auction
protocol remains IC after the transformation, and the Buyer always
has ex-post utility $\overline{U}\ge0$.

We now revisit this transformation with restriction to finite precision.
At the $i$-th node of Chance, we ask the Buyer to report $\overline{U_{x}}$'s
to within $\pm\varepsilon2^{-i}$. Then the total error accumulated
on any path of the protocol is always bounded by $\pm\varepsilon$.
The communication complexity is $\sum_{i=1}^{C}\log(U/\varepsilon)+i=O(C\log(U/\varepsilon)+C^{2})$.
\end{proof}

\section{Approximately optimal revenue with finite valuations\label{sec:Approximately-optimal-revenue}}

In this appendix, per the request of a reviewer, we outline a very
short proof of the following proposition: For additive valuations
over independent items, assuming a finite bound on the Buyer's valuation
has an arbitrarily small impact on revenue. Note that \cite{BabaioffGN17}
proved a stronger result with an asymptotic analysis of this finite
bound; however we believe that our proof is simpler. 

We prove the proposition for the case where the optimal revenue is
finite; when the optimal revenue is infinite, the supremum revenue
of trivial mechanisms like only auctioning the grand bundle of all
items is also infinite \cite{HartN17}, so the discussion of menu-size
or communication complexity is less interesting.
\begin{prop}
Let $\mathcal{\ensuremath{D}}$ be any prior over (possibly unbounded)
Buyer's additive valuations over $n$ independent items, and let ${\cal M}$
be any mechanism that obtains finite revenue on ${\cal D}$. For every
$\varepsilon$ there exists $U(\varepsilon)$ such that the same mechanism
(${\cal M}$) obtains $(1-\varepsilon$)-fraction of its revenue if
the Buyer's valuations are capped at $U(\varepsilon)$.
\end{prop}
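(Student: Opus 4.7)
My plan is to prove the proposition by a short dominated convergence argument applied to the payment random variable. Let $p(\cdot)$ denote the (possibly randomized) payment function of $\mathcal{M}$, write $R := \E_{v \sim \mathcal{D}}[p(v)]$ for its expected revenue (finite by hypothesis), and let $v^U := (\min(v_1,U),\ldots,\min(v_n,U))$ be the coordinate-wise truncation of $v$ at level $U$. Because $\mathcal{M}$ is IC on the full type space and $v^U$ lies in that type space, a buyer whose true additive type is $v^U$ still finds truthful reporting optimal; hence the revenue of $\mathcal{M}$ under the truncated distribution is exactly $R^U := \E_{v\sim\mathcal{D}}[p(v^U)]$.

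The key observation is that $p(v) = p(v^U)$ whenever $v = v^U$, i.e.\ whenever no coordinate of $v$ exceeds $U$. Combined with non-negativity of payments, this yields
\[
R - R^U \;=\; \E\bigl[(p(v) - p(v^U))\,\mathds{1}[v \neq v^U]\bigr] \;\leq\; \E\bigl[p(v)\,\mathds{1}[v \neq v^U]\bigr].
\]
Under any probability distribution $\mathcal{D}$ on $\mathbb{R}^n$, each $v_i$ is finite almost surely, so $\mathds{1}[v \neq v^U] \to 0$ almost surely as $U \to \infty$; moreover the integrand $p(v)\,\mathds{1}[v \neq v^U]$ is dominated by $p(v)$, whose expectation is the finite number $R$. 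By the dominated convergence theorem, $\E\bigl[p(v)\,\mathds{1}[v \neq v^U]\bigr] \to 0$, so we can choose $U(\varepsilon)$ large enough that this tail expectation is at most $\varepsilon R$, which gives $R^U \geq (1-\varepsilon) R$ as required.

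There is no real obstacle here: the assumption that $\mathcal{M}$ obtains finite revenue is precisely the integrability condition required for dominated convergence. I note in passing that neither additivity nor independence of the items is actually used anywhere in this argument; the identical proof shows that for any IC mechanism $\mathcal{M}$ on any type space equipped with a coordinate-wise cap, truncating types at a sufficiently large $U$ preserves a $(1-\varepsilon)$-fraction of revenue whenever that revenue is finite. What the argument does \emph{not} yield is any quantitative control on $U(\varepsilon)$ in terms of $n$ and $\varepsilon$; obtaining such effective bounds (as in~\cite{BabaioffGN17}) genuinely requires exploiting the additive/independent structure.
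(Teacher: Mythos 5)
Your proof is correct and rests on the same underlying fact as the paper's: a finite total revenue has vanishing tails as the valuation cap grows. The paper makes this concrete by partitioning $\mathbb{R}_{\ge 0}^{n}$ into countably many dyadic hyperrectangles, ordering them by revenue contribution, and taking a finite prefix of the resulting convergent series; you instead write $R - R^{U} \le \E\bigl[p(v)\,\mathds{1}[v\neq v^{U}]\bigr]$ and invoke dominated convergence (indeed monotone convergence would also do, since $\mathds{1}[v \neq v^{U}]$ is nonincreasing in $U$). The two arguments are equivalent in substance; yours is slightly more succinct and, as you correctly note, makes plain that neither additivity nor independence plays any role -- a fact that is also true of, but less visible in, the paper's hyperrectangle version.
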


\begin{proof}
We consider the partitioning of the type-space (aka $\mathbb{R}_{\ge0}^{n}$)
into countable (but infinite) hyperrectangles:
\[
\left(\{0\}\cup\left\{ [2^{i},2^{i+1})\right\} _{i\in\mathbb{Z}}\right)^{n}.
\]
For each hyperrectangle, we consider the contribution to ${\cal M}$'s
revenue of types in this hyperrectangle. We arrange the hyperrectangles
in decreasing order of contributions to revenue (breaking ties arbitrarily).
${\cal M}$'s total revenue is the (countable) sum of contributions
from hyperrectangles in this sequence. Since the sum is finite, it
can be approximated to within $(1-\varepsilon)$-factor by the first
$N(\varepsilon)$ terms in the sequence, for some finite $N(\varepsilon)$.
Hence it suffices to take $U(\varepsilon)$ to be $n\cdot2^{i+1}$
for the maximal $i$ used in any of the hyperrectangles in those first
$N(\varepsilon)$ terms. 
\end{proof}

\end{document}